\newtheorem{theorem}{Theorem}
\newtheorem{lemma}{Lemma}
\newtheorem{proposition}{Proposition}
\newtheorem{definition}{Definition}
\newtheorem{remark}{Remark}
\newtheorem{example}{Example}
\definecolor{CommentBlue}{RGB}{0,80,239}
\definecolor{CommentPurple}{RGB}{145, 32, 186}
\definecolor{CommentRed}{RGB}{168,24,24}
\definecolor{CommentGreen}{RGB}{10,160,10}
\definecolor{CommentOrange}{RGB}{204,102,0}
\newcommand{\comment}[1]{}
\def\cA{\mbox{$\cal{A}$}}
\def\cB{\mbox{$\cal{B}$}}
\def\cX{\mbox{$\cal{X}$}}
\def\cV{\mbox{$\cal{V}$}}
\def\cR{\mbox{$\cal{R}$}}
\def\cU{\mbox{$\cal{U}$}}
\def\cV{\mbox{$\cal{V}$}}
\def\cG{\mbox{$\cal{G}$}}
\def\cL{\mbox{$\cal{L}$}}
\def\cS{\mbox{$\cal{S}$}}
\def\cD{\mbox{$\cal{D}$}}
\def\cJ{\mbox{$\cal{J}$}}
\def\cK{\mbox{$\cal{K}$}}
\def\cP{\mbox{$\cal{P}$}}
\def\cT{\mbox{$\cal{T}$}}
\def\cC{\mbox{$\cal{C}$}}
\def\DRS{\mbox{$\cal{D}_{RS}$}}
\def\bd{\mbox{$\bar{d}$}}
\newcommand{\TRS}{X{}} 
\newcommand{\ND}[1]{{\bar{D}_{\tilde{#1}}}}
\newcommand{\Rto}{R^{*p}_{N,K}(M)} 
\newcommand{\Rm}{R^{*}_{N,K}(M)} 
\newcommand{\Rtc}{R^{A}_{N,K}(M)} 
\newcommand{\RMaNx}{R^{\text{MAN}}_{N, NK}(M)}
\newcommand{\RMaN}{R^{\text{MAN}}_{N, K}(M)}
\newif\ifncc
\newif\ifarxiv
\title{Fundamental Limits of\\ Demand-Private Coded Caching }
	\author{
	\IEEEauthorblockN{Chinmay~Gurjarpadhye, Jithin~Ravi, Sneha~Kamath,  Bikash~Kumar~Dey, and Nikhil~Karamchandani}
	\thanks{J.~Ravi has received funding from the European Research Council (ERC) under the European Union's Horizon 2020 research and innovation programme (Grant No.~714161). The work of B. K. Dey was supported in part by the Bharti Centre for Communication in IIT Bombay. The work of N.~Karamchandani is supported in part by a Science and Engineering Research Board (SERB) grant on ``Content Caching and Delivery over Wireless Networks". The material in this paper was presented in part at the IEEE National Conference on Communications, Kharagpur, India, February 2020, and will be presented in part at the IEEE Information Theory Workshop, Riva del Garda, Italy, April 2021.
		
		C.~Gurjarpadhye, B.~K.~Dey and N.~Karamchandani are with Department of Electrical Engineering, IIT Bombay, Mumbai, India.
		J. Ravi is with the Signal Theory and Communications Department, Universidad Carlos III de Madrid, Spain, and with the Gregorio Mara\~n\'on  Health Research Institute, Madrid, Spain. S.~Kamath is with Qualcomm, India
		(emails: cgurjarpadhye@gmail.com, rjithin@tsc.uc3m.es, snehkama@qti.qualcomm.com, bikash@ee.iitb.ac.in, nikhilk@ee.iitb.ac.in). Part of the work of J.~Ravi and S.~Kamath was done when they were at IIT Bombay.}
}
\begin{document}
  	\ifarxiv
  \IEEEoverridecommandlockouts
 \maketitle
 \fi
 
 \ifncc
 \maketitle
 \fi

 \begin{abstract} 
We consider the coded caching problem with an additional privacy constraint that a user should not get any information about the demands of the other users. We first show that a demand-private scheme for $N$ files and $K$ users
can be obtained from a non-private scheme that serves only a subset of the demands for the $N$ files and $NK$ users problem. We further use this fact to construct a demand-private scheme for $N$ files and $K$ users  from a particular known non-private scheme for $N$ files and $NK-K+1$ users. It is then demonstrated that, the memory-rate pair $(M,\min \{N,K\}(1-M/N))$, which is achievable for non-private schemes with uncoded transmissions, is also achievable
under demand privacy. We further propose a scheme that improves on these ideas by removing some redundant transmissions. The memory-rate trade-off achieved using our schemes is shown to be within a multiplicative factor of 3 from the optimal when $K < N$ and of 8 when $N \leq K$.  Finally, we give the  exact memory-rate trade-off for  demand-private coded caching problems with $N\geq K=2$.
 \end{abstract}

 \section{Introduction}
 
In their seminal work~\cite{Maddah14,maddah2014decentralized}, Maddah-Ali and Niesen analyzed the
fundamental limits of caching in a noiseless broadcast network from an
information-theoretic perspective. A server has $N$ files of equal size. There are $K$ users,
each equipped with a cache that can store $M$ files. In the \emph{placement
phase}, the cache of each user is populated with some functions  of the files.
In the \emph{delivery phase}, each user requests one of the $N$ files, and the
server broadcasts a message to serve the demands of the users. The goal of the
\emph{coded caching} problem is to identify the minimum required \emph{rate} of  transmission from the
server for any given cache  size $M$. For this setup, \cite{Maddah14} proposed an achievability scheme and by comparing the achievable rate with an information-theoretic lower bound on the optimal rate, demonstrated the
scheme to be \emph{order optimal}, i.e.,
the achievable rate is within a constant multiplicative factor from the
optimal for all system parameters $N, K, M$. The works~\cite{Amiri17, Vilardebo18, Yu18}
mainly focused on obtaining improved achievable rates while the
works~\cite{Ghasemi17,Wang18} focused on  improving the lower bounds. Different
aspects of the coded caching problem such as
\emph{subpacketization}~\cite{Yan17, Tang18, Suthan19}, \emph{non-uniform
demands}~\cite{Niesen17, JiTLC17, Zhang18} and \emph{asychnronous demands}~\cite{Ghasemi20, Yang19}  have
been investigated in the past. Fundamental limits of caching has also been studied for some
other network models, see for example \cite{Shanmugam13,Karamchandani16, JiCM16}. We refer the reader to~\cite{maddah2016coding} for a detailed survey.

The schemes proposed in~\cite{Maddah14, maddah2014decentralized} for the coded caching problem exploited the broadcast property of the network to reduce the rate of transmission. The fact that this can lead to a coding gain has also been explored in the related \emph{index coding} framework~\cite{YossefBJK11}, where the users may have a subset of files as side information and request one file from the server that they do not have access to. While the broadcast property helps in achieving a coding gain under such settings, it affects the security and privacy of users. Two types of security/privacy issues have been studied in index coding. The works~\cite{NarayananRMDKP18, DauSC12} addressed the problem of  \emph{file privacy} where the constraint is that each user should not get
any information about any file other than the requested one.
The work~\cite{Karmoose20}  studied index coding with \emph{demand privacy} where each user should not get
any information about  the identity of the file requested by other users.  Demand privacy is also studied in a different context called \emph{private information retrieval} where a user downloads her file of interest from one or more  servers and does not want to reveal the identity of the requested file to any server, see~\cite{SunJ17}  for example.

File privacy for the coded caching problem was investigated in~\cite{Sengupta15, Ravindrakumar18}. In particular, \cite{Sengupta15} considered the privacy of files
against an eavesdropper who has access to the broadcast link,
 while~\cite{Ravindrakumar18} studied the caching problem with the constraint that each user should not get
any information about any file other than the requested one. In \cite{Ravindrakumar18}, a 
private scheme was proposed using  techniques from secret
sharing, and the achievable rate was shown to be order optimal.

In this paper, we consider the coded caching problem with
an extra constraint that each user should not learn any information
about the demands of other users.
 Coded caching under demand privacy was studied from an information-theoretic framework in some recent works~\cite{Wan19, Kamath19,AravindNCC20,Yan20,WanD2D19,Aravind20}. 
The works~\cite{Wan19} and~\cite{Kamath19} (a preliminary version of this work) demonstrated that a demand-private scheme for $N$ files and $K$ users can be obtained from a non-private scheme for $N$ files and $NK$ users. The rate achievable using such schemes was shown to be order optimal for all regimes except for the case when $K<N$ and $M<N/K$. 
A demand-private scheme using MDS codes was also proposed for $M\geq N/2$ in~\cite{Wan19}. 
In~\cite{AravindNCC20}, the authors focused on obtaining demand-private schemes that achieve a weaker privacy condition such that one user should not get any information about the demand of another user, but may gain some information about the demand vector. They mainly addressed the subpacketization requirement for $N=K=2$ in~\cite{AravindNCC20} and extended their study to more general cases in~\cite{Aravind20}. Demand privacy against colluding users was studied for device-to-device network in~\cite{WanD2D19} where a trusted server helps to co-ordinate among the users to achieve a demand-private scheme. The case of colluding users for the coded caching problem was considered in~\cite{Yan20} where the privacy condition was such that one user should not learn any information about the demands of other users even if she is given all the files.

Now we briefly summarize the main contributions of this paper.
We first show that a demand-private scheme for $N$ files and $K$
users  can be obtained from a non-private
scheme that serves only a subset of demands for $N$ files and $NK$ users (Theorem~\ref{Thm_genach}). Our first achievability scheme, Scheme A, is built on this fact. We then propose Scheme B which is based on the idea that permuting broadcast symbols and not fully revealing the permutation function helps to achieve demand privacy. Our third achievability scheme, Scheme C, combines the ideas of Schemes A and B.
Using these 
achievability schemes, we show the order optimality for the case when $K < N$ and $M < N/K$, thus completing the order optimality result for all regimes\footnote{This result was first shown in a preliminary version~\cite{KamathRD20} of this work.}.  Finally, we  characterize the exact memory-rate trade-off under  demand privacy for the case $N\geq K=2$. We detail the contributions and describe the organization of the paper in the next subsection.

\subsection{Contributions and organization of the paper}

The main contributions of this paper are the following.

\begin{enumerate}
	
\item Using the fact that a demand-private scheme for $N$ files and $K$ users  can be obtained from a non-private
scheme that serves only a structured subset of demands for $N$ files and $NK$ users, we propose Scheme A for demand-private caching that uses the non-private scheme for $N$ files and
$NK-K+1$ users from~\cite{Yu18} (which we refer to as
the YMA scheme). This then implies that the memory-rate pairs achievable by the YMA scheme for $N$ files and $NK -K+1$ users are also achievable under demand privacy for $N$ files and $K$ users (Theorem~\ref{Corl_reduced_usrs} in Section~\ref{Sec_caching_random}).
	
\item  In~\cite[Example~1]{Maddah14}, it was shown that memory-rate pair $(M,\min \{N,K\}(1-M/N))$ can be achieved for non-private schemes without any coding in the placement phase or in the  delivery phase. In Theorem~\ref{th:basic} (Section~\ref{Sec_delivery_random}), we show that this memory-rate pair $(M,\min \{N,K\}(1-M/N))$
is also achievable under demand privacy. For $N\leq K$, the scheme (Scheme B) that achieves this pair is trivial, while for $K < N$, the scheme is non-trivial.

\item We then propose a demand-private scheme (Scheme C) that builds on the ideas of Schemes A and B. The memory-rate pairs achievable using Scheme C are given in Theorem~\ref{Thm_PR_SR} (Section~\ref{Sec_random_cach_delivry}).
Using numerical computations, we demonstrate that, for $K < N$, a combination of Schemes B and C outperforms Scheme A. In contrast, Scheme A outperforms  Schemes B and C for $N \leq K$.

\item The characterization of the exact memory-rate trade-off is known to be difficult for non-private schemes. So, the order optimality of the achievable rates is investigated. We show that the rates  achievable using our schemes are within a constant multiplicative gap of the optimal non-private rates (Theorem~\ref{Thm_order} in Section~\ref{Sec_order_optimal}) in all parameter regimes. In particular, we prove this for $K<N$ and $M<N/K$,
the regime that was left open in previous works. This gives the order optimality result since the optimal rates under privacy is lower bounded by the optimal non-private rates.
This also implies that the optimal private and non-private rates are always within a constant factor.

\item  One class of instances for which we have the  exact trade-off \cite{Maddah14, Tian2018} for non-private schemes is when $K=2$ and $N \geq 2$. For this class, we characterize the exact trade-off under demand privacy in Theorem~\ref{Thm_exact_region} (Section~\ref{sec_exact}). Our characterization shows that the exact trade-off region under demand privacy for this class is strictly smaller than the one without privacy. To characterize the exact trade-off, we give a converse bound that accounts for the privacy constraints. To
the best of our knowledge, this converse bound is the first of its kind, and also that this is the first instance where it is demonstrated that the optimal rates with privacy can be strictly larger than the optimal rates without privacy.
\end{enumerate}
 The rest of the paper is organized as follows. In Section~\ref{sec_problem}, we give our problem formulation. We present our results and briefly describe our proposed schemes in  Section~\ref{sec_results}.  All the proofs of our results can be found in Section~\ref{sec_proofs} and the appendices.

\subsection{Notations}
We denote the set $\{0,1, \ldots, N-1\}$ by
$[0:N-1]$, the cardinality of a set $\cA$  by $|\cA|$, and the closed interval between two real numbers $a$ and $b$ by $[a,b]$. For a positive integer $\ell$, if $\pi$ denotes a  permutation of $[0:\ell-1]$, and $Y=(Y_0, Y_1, \ldots , Y_{\ell-1})$, with abuse of notation, we define $\pi(Y) = \left(Y_{\pi^{-1}(i)}\right)_{i \in [0:\ell-1]}$. We denote random variables by upper case letters (e.g. $X$) and their alphabets by calligraphic letters (e.g. $\cX$).
For a random variable/vector $B$, $len(B)$ denotes $\log_{2} |\cB|$.

 \section{Problem formulation and definitions}
 \label{sec_problem}
 Consider one server connected to $K$ users through a noiseless broadcast link. The server has access to $N$ independent files of $F$ bits each. These files are  denoted as $(W_0, W_1, \ldots, W_{N-1})$ and each file is uniformly distributed in $\{0,1\}^F$.
Each user has a cache of size $MF$ bits. The coded caching problem has two phases: prefetching and delivery.  In the prefetching phase, the server places at most $MF$ bits in the cache of each user. The cache content of user $k\in [0:K-1]$ is denoted by $Z_k$. In the delivery phase, each user demands one of the $N$ files from the server and this demand is conveyed secretly to the server. 
Let the demand of user $k$ be denoted by $D_k \in [0:N-1]$. We define $\bar{D} = (D_0,D_1,\ldots,D_{K-1})$. $\bar{D}$ is independent of the files $W_i, i \in [0:N-1]$ and caches $Z_k, k \in [0:K-1]$, and is uniformly distributed in $[0:N-1]^K$.

In the delivery phase, the server broadcasts a message $X$ to all the $K$ users such that user $k \in [0:K-1]$ can decode file $W_{D_k}$ using $X$ and $Z_k$  (see Fig.~\ref{Fig_cach_setup}). If message $X$ consists of $RF$ bits, then $R$ is said to be the rate of transmission. In addition to the decodability of the demanded file, demand-privacy imposes another constraint that the demands of all other users  should remain perfectly secret to each of the $K$ users. To ensure demand-privacy, the server can share some randomness denoted by $S_k$ with user $k\in [0:K-1]$ in the prefetching phase. This shared randomness is of negligible size and hence, it is not included in the memory size.
We define $S = (S_0, S_1, \ldots, S_{K-1})$. The server also has access to some private randomness which we denote by $P$. The  random variables $S , P , \{W_i | i \in [0:N-1]\} , \{D_k|k \in [0:K-1]\}$ are independent of each other.

\begin{figure}[htb]
  \centering
   \includegraphics[scale=0.5]{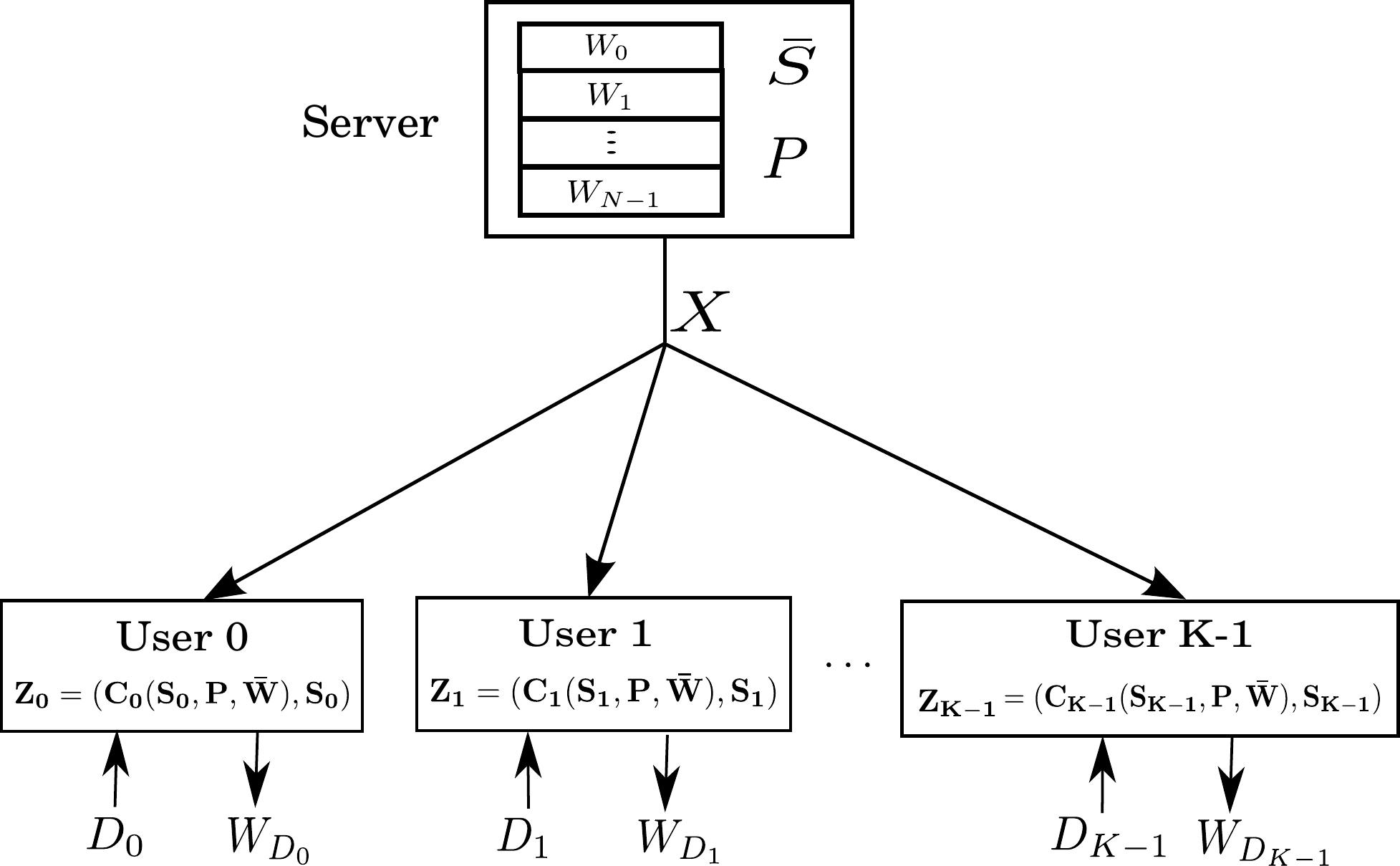}
  \caption{Demand-private coded caching model.}
  \label{Fig_cach_setup}
\end{figure}

{\bf Non-private coded caching scheme:}
An {\it non-private coded caching scheme} consists of the following.

{\it Cache encoding functions:} For $k \in [0:K-1]$, the cache encoding function for the $k$-th user is a map
\begin{align}
C_{k}: {[0:2^F-1]}^N  \rightarrow [0:2^{MF}-1], \label{Def_cach_enc_np}
\end{align} 
and the cache content $Z_k$ is given by $Z_k =C_k(\bar{W})$. 

{\it Broadcast transmission encoding function:} The  transmission
encoding is a map
\begin{align}
E: {[0:2^F-1]}^N \times \cD_0  \times \cdots \times \cD_{K-1}  \rightarrow [0:2^{RF}-1], \label{Def_Tx_enc_np}
\end{align}
and the transmitted message is given by $X=(E(\bar{W}, \bar{D}), \bar{D})$. 

{\it Decoding functions:} User $k$ uses a decoding function
\begin{align}
G_{k}:  \cD_0 \times \cdots \times \cD_{K-1}  \times [0:2^{RF}-1] \times [0:2^{MF}-1]  \rightarrow [0:2^{F}-1]. \label{Def_dec_np}
\end{align}
Let $\cC = \{C_k: k=0,\ldots,K-1\}$ and $\cG =  \{G_k: k=0,\ldots,K-1\}$. Then
the triple $(\cC, E, \cG)$ is called an
$(N,K,M,R)$-non-private scheme if it satisfies 
\begin{align} 
W_{D_{k}} = G_k(\bar{D}, E(\bar{W}, \bar{D}), C_k(\bar{W}))
\label{Eq_dec_cond}
\end{align}
for all values of $\bar{D}$ and $ \bar{W}$.
A memory-rate pair $(M,R)$ is said to be {\em achievable} for the $(N,K)$ coded
caching problem if there exists an $(N,K,M,R)$-non-private scheme  for some
$F$. The {\em  memory-rate trade-off } $R^{*}_{N,K}(M)$ for the non-private coded caching problem
is defined as 
\begin{align}
R^{*}_{N,K}(M)&=\inf\{R: (M,R) \mbox{ is achievable  for $(N,K)$ coded caching problem} \}. \label{Eq_opt_rate_nopriv}
\end{align}

{\bf Private coded caching scheme:} A {\it private coded caching scheme}
consists of the following. 

{\it Cache encoding functions:}
For $k \in [0:K-1]$, the cache encoding function for the $k$-th user is given by
\begin{align}
C_{k} :\cS_k \times \cP \times {[0:2^F-1]}^N  \rightarrow [0:2^{MF}-1], \label{Def_cach_enc}
\end{align} 
and the cache content $Z_k$ is given by 
\mbox{$Z_k =(C_k(S_k, P, \bar{W}), S_k)$}. 

{\it Broadcast transmission encoding function:}
The   transmission encoding functions are
\begin{align*}
&E: {[0:2^F-1]}^N \times \cD_0 \times \cdots \times \cD_{K-1} \times\cP  \times \cS_0 \times \cdots \times \cS_{K-1} \rightarrow [0:2^{RF}-1], \\
&J:\cD_0 \times \cdots \times \cD_{K-1} \times \cP \times \cS_0 \times \cdots \times \cS_{K-1} \rightarrow \cJ.
\end{align*}
The transmitted message $X$ is given by
\begin{align*}
 X=\left(E(\bar{W}, \bar{D}, P, \bar{S}), J(\bar{D}, P, \bar{S}) \right). 
\end{align*}
Here $\log_2 |\cJ|$ is negligible\footnote{ The auxiliary
transmission $J$ essentially captures any additional transmission, that does not
contribute any rate, in addition to the main
payload. Such auxiliary transmissions of negligible rate are used even in non-private
schemes without being formally stated in most work. For example, the
scheme in~\cite{Maddah14} works only if the server additionally transmits the demand vector
in the delivery phase. We have chosen to
formally define such auxiliary transmission here.} compared to file size $F$.

{\it Decoding functions:}
User $k$ has a decoding function
\begin{align}
G_{k} : \cD_k \times \cS_k \times \cJ \times [0:2^{RF}-1] \times [0:2^{MF}-1]  \rightarrow [0:2^{F}-1]. \label{Def_dec}
\end{align}
Let $\cC = \{C_k: k=0,\ldots,K-1\}$ and $\cG =  \{G_k: k=0,\ldots,K-1\}$. The tuple  $(\cC, E, J,\cG)$ is called as an $(N,K,M,R)$-private scheme if it satisfies the following decoding and privacy conditions:
\begin{align*} 
& W_{D_{k}} = G_k\bigl(D_k,S_k,J(\bar{D}, P, \bar{S}, ),  E(\bar{W}, \bar{D},P,\bar{S}), C_k(S_k, P,\bar{W})\bigr), \quad \text{ for }   k \in [0:K-1],\\
& I\left(\ND{k}; D_k,S_k,J(\bar{D}, P, \bar{S}, ),  E(\bar{W}, \bar{D},P,\bar{S}), C_k(S_k, P,\bar{W})\ \right)   = 0, \quad \text{ for }   k \in [0:K-1],
\end{align*}
where $\bar{D}_{\tilde{k}} = (D_0, \ldots, D_{k-1},D_{k+1}, \ldots, D_{K-1})$.
The above conditions are respectively equivalent to
\begin{align} 
H(W_{D_{k}}|Z_k,\TRS, D_k ) & =0, \quad \text{ for }   k \in [0:K-1],
 \label{Eq_decod_cond}\\
I(\ND{k};Z_k,\TRS, D_k )  & = 0, \quad \text{ for }   k \in [0:K-1]. \label{Eq_instant_priv}
\end{align}

A memory-rate pair $(M,R)$ is said to be {\em achievable with demand privacy}
for the $(N,K)$ coded caching problem if there exists an $(N,K,M,R)$-private
scheme  for some  $F$. The {\em memory-rate trade-off with demand privacy} is defined as 
\begin{align}
R^{*p}_{N,K}(M)&=\inf\{R: (M,R) \mbox{ is achievable with demand  privacy for $(N,K)$ coded caching problem} \}. \label{Eq_opt_rate_priv}
\end{align}

\begin{remark}[Different privacy metrics]
	\label{Remark_priv}

 A weaker notion of privacy was considered in~\cite{AravindNCC20, Aravind20} given by
	\begin{align}
	I(D_i ;Z_k, D_k,X ) = 0, \quad  i\neq k. \label{Eq_weak_priv}
	\end{align}
	In words, the privacy condition in~\eqref{Eq_weak_priv} requires that user $k \in [0:K-1]$ should not get any information about $D_i, i\neq k$, but may have some information about the demand vector.
	Note that a scheme that satisfies the privacy condition~\eqref{Eq_instant_priv} also satisfies~\eqref{Eq_weak_priv}.
	The model in~\cite{Yan20} assumed that the users can collude, and  the following stronger notion of privacy metric was considered
	\begin{align}
	I(D_{[0:K-1]\setminus \cS}; Z_{\cS}, D_{\cS}, X|\bar{W}) = 0, \quad \forall \cS \subseteq [0:K-1] \label{Eq_strng_priv_metric}
	\end{align} 
	where $D_{\cS}$ and $Z_{\cS}$ denote the demands and the caches of users in $\cS$, respectively.
	This stronger privacy metric is also satisfied by our Scheme A described in Subsection~\ref{Sec_caching_random} (see Remark~\ref{Rem_SchmA}). In contrast, Schemes B and C, described in Subsections~\ref{Sec_delivery_random} and~\ref{Sec_random_cach_delivry}, respectively, do not satisfy this stronger privacy metric (see  Remark~\ref{Rem_SchemeC}).
\end{remark}

 \section{Results}
\label{sec_results}
In this section, we present our results that include our achievability schemes,
the tightness of the memory-rate pairs achievable using these schemes and the exact trade-off for $N\geq K= 2$.  In Subsections~\ref{Sec_caching_random}, \ref{Sec_delivery_random} and \ref{Sec_random_cach_delivry}, we discuss  Schemes A, B and C, respectively and the memory-rate pairs achievable using these schemes.
 We give a comparison of the  memory-rate pairs achievable using Schemes A, B and C  in Subsection~\ref{sec_compare}. In particular, we show that Scheme A outperforms Schemes B and  C for $N \leq K$, while a combination of Schemes B and C outperforms Scheme A for $K < N$.
In Subsection~\ref{Sec_order_optimal}, we discuss the 
tightness of the achievable memory-rate pairs, and  show the order optimality result for all regimes. Finally, we present the
exact memory-rate trade-off  under demand privacy for the case $N\geq K=2$ in Subsection~\ref{sec_exact}.

\subsection{Scheme A}
\label{Sec_caching_random}

It was observed in~\cite{Wan19, Kamath19} that  a demand-private scheme for $N$ files and $K$ users can be obtained using
an existing non-private achievable scheme for $N$ files and $NK$
users as a blackbox. Here every user is associated with a stack of $N$ virtual users in
the non-private caching problem. For
example, demand-private schemes for $N=K=2$ are obtained from the non-private
schemes for $N=2$ and $K=4$. 
We next show that  only certain types of demand vectors
of the non-private scheme are required in the private scheme. 
To this end, we define this particular subset of demand vectors.

Consider a non-private coded caching problem with $N$ files and $NK$ users. 
A demand vector $\bd$ in this problem is an $NK$-length vector, 
where the $j$-th component denotes the demand of user $j$.
Then $\bd$ can also be represented as $K$ subvectors of length $N$ each, i.e.,	
\begin{align}
\bd= [\bd^{(0)},\bd^{(1)},\ldots,\bd^{(K-1)}]
\end{align}
where $\bd^{(i)}\in [0:N-1]^N$ is  an $N$-length vector for all $i \in [0:K-1]$. We now define a \emph{restricted demand subset} $\DRS$.
 \begin{definition}[Restricted Demand Subset $\DRS$]
	\label{Def_dmnd_subst}
	The restricted demand subset $\DRS$ for an $(N,NK)$ coded caching problem
	is the set of all $\bd$ such that $\bd^{(i)}$ is a cyclic shift of the vector $(0, 1,  \ldots, N-1)$ for all $i=0,1, \ldots, K-1$.
\end{definition}
Since $N$ cyclic shifts are possible for each $\bd^{(i)}$, there are a total of $N^K$ demand vectors in $\DRS$. 

For a given $\bd \in \DRS$ and $i\in [0:K-1]$, let $c_i$ denote the number of right cyclic
shifts of $(0,1,\ldots,N-1)$ needed to get $\bd^{(i)}$. Then, 
$\bd \in \DRS$ is  uniquely identified by the vector $\bar{c}(\bd)
:= (c_0, \ldots, c_{K-1})$. For $N=2$ and $NK =4$, the demands in $\DRS$ and
their corresponding $\bar{c}(\bar{d}_s)$ are given in Table~\ref{Tab_2x2}.

\begin{table}[h]
	\begin{center}
		\begin{tabular}{|c|c|c|c|c|}
			\hline
			$D_{0}$ & $D_{1}$ & $D_{2}$ & $D_{3}$ & $\bar{c}(\bar{d}_s)$ \\
			\hline
			$0$ & $1$ & $0$ & $1$ & $(0,0)$\\
			\hline
			$0$ & $1$ & $1$ & $0$ & $(0,1)$\\
			\hline
			$1$ & $0$ & $0$ & $1$ & $(1,0)$\\
			\hline
			$1$ & $0$ & $1$ & $0$ & $(1,1)$\\
			\hline
		\end{tabular}
	\end{center}
	\caption{Restricted Demand Subset $\DRS$ for $N=2$ and $NK=4$.}
	\label{Tab_2x2}
\end{table}

A non-private scheme for an $(N,K)$ coded caching problem that serves all demand vectors in a particular set $\cD\subseteq [0:N-1]^K$, is called a $\cD$-non-private scheme. We have the following theorem.
\begin{theorem}
	\label{Thm_genach}
	If there exists an $(N,NK,M,R)$ \DRS-non-private scheme, then
	there exists an $(N,K,M,R)$-private scheme.
\end{theorem}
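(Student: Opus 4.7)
The plan is to build the private $(N,K,M,R)$ scheme by simulating the given $\DRS$-non-private $(N,NK,M,R)$ scheme, grouping the $NK$ virtual users into $K$ stacks of $N$ users each, with stack $k$ consisting of virtual users $kN, kN+1, \ldots, kN+N-1$. The shared randomness will be $S_k$ drawn uniformly and independently from $[0:N-1]$ for each real user $k$; intuitively, $S_k$ selects one virtual user from stack $k$ to ``embody'' real user $k$. The server's private randomness $P$ will include whatever randomness the non-private scheme uses (if any).

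In the prefetching phase, for each $k$ I would set $Z_k$ to be the non-private cache content of virtual user $kN+S_k$ (along with $S_k$ itself, as allowed by the definition). In the delivery phase, given $\bar{D}$ and $\bar{S}$, the server computes $c_k = (S_k - D_k) \bmod N$ for each $k$ and forms the demand vector $\bd \in \DRS$ specified by $\bar{c}(\bd) = (c_0,\ldots,c_{K-1})$. By the definition of $\DRS$, the demand of virtual user $kN+S_k$ under $\bd$ is exactly $(S_k - c_k) \bmod N = D_k$. The server then runs the non-private delivery for $\bd$, broadcasts the resulting $X$, and also transmits $J = \bar{c}$ as the negligible-rate auxiliary message.

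Decodability is immediate: real user $k$ recovers $\bar{c}$ from $J$ and hence the full demand vector $\bd$; knowing $S_k$, $Z_k$, and $X$, it invokes the non-private decoder for virtual user $kN+S_k$ and obtains $W_{\bd^{(k)}_{S_k}} = W_{D_k}$. The main work is verifying the privacy condition $I(\ND{k}; Z_k, X, J, D_k, S_k) = 0$. I would chain-rule this as
\begin{align*}
I(\ND{k}; Z_k, X, J, D_k, S_k) = I(\ND{k}; J, D_k, S_k) + I(\ND{k}; Z_k, X \mid J, D_k, S_k).
\end{align*}
For the first term, note that $(J,D_k,S_k)$ is a deterministic function of $(\bar{c}_{\tilde{k}}, D_k, S_k)$, and since $\bar{S}_{\tilde{k}}$ is uniform and independent of $(\bar{D}, \bar{W}, S_k, P)$, the vector $\bar{c}_{\tilde{k}} = \bar{S}_{\tilde{k}} - \ND{k}$ is uniform on $[0:N-1]^{K-1}$ and independent of $\ND{k}$; hence this mutual information vanishes. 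For the second term, conditioned on $(J=\bar{c},D_k,S_k)$, both $Z_k$ and $X$ are functions of $\bar{W}$ and $\bar{c}$ alone (the non-private scheme's caches and transmission are fully specified by the files and the demand), so they are conditionally independent of $\ND{k}$ because $\bar{W} \perp \bar{D}$.

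The main obstacle I anticipate is being careful that the joint distribution of $(\bar{D},\bar{S},\bar{c},\bar{W},P)$ matches the one required by the private-scheme formulation, particularly that $\bar{c}_{\tilde{k}}$ is genuinely uniform and independent of $\ND{k}$ conditioned on $(D_k,S_k)$; this relies crucially on $\bar{S}$ being uniform and independent of $\bar{D}$. Once this independence is nailed down, both terms in the decomposition above collapse to zero and privacy follows. The memory cost is unchanged because each $Z_k$ is just a single virtual cache, the rate is inherited from the non-private scheme, and $J$ contributes negligible rate as permitted by the definition.
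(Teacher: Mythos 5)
Your construction coincides with the paper's proof of Theorem~\ref{Thm_genach}: user $k$'s cache is that of virtual user $kN+S_k$, the expanded demand vector in $\DRS$ is selected by the cyclic-shift indices $\bar{S}\ominus\bar{D}$, and $\bar{S}\ominus\bar{D}$ is sent as the negligible-rate auxiliary message $J$, after which user $k$ invokes the non-private decoder of virtual user $kN+S_k$. Your privacy argument organizes the chain rule slightly differently (peeling off $(J,D_k,S_k)$ first rather than conditioning on $\bar{W}$ first as the paper does), but it rests on the same two facts — $\bar{S}_{\tilde{k}}$ acts as a one-time pad making $\bar{c}_{\tilde{k}}$ independent of $\ND{k}$, and $\bar{W}$ is independent of $(\bar{D},\bar{S})$ so the residual $(Z_k,X)$ carries no information about $\ND{k}$ given $(J,D_k,S_k)$ — so the proof is essentially the same.
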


The proof of Theorem~\ref{Thm_genach} is given in Subsection~\ref{Sec_proof_Thm_genach}.  The proof follows by showing a  construction of an $(N,K,M,R)$-private scheme using an $(N,NK,M,R)$
\DRS-non-private scheme as a blackbox. The following example shows a construction of $(2,2,\frac{1}{3}, \frac{4}{3})$-private scheme from $(2,4,\frac{1}{3}, \frac{4}{3})$-\DRS-non-private scheme. This particular non-private scheme is from~\cite{Tian2018}.  
It is important to note that the memory-rate pair $(\frac{1}{3}, \frac{4}{3})$ is not achievable for $N=2, K=4$ under no privacy requirement.  Thus, we  observe that there exist memory-rate pairs that are achievable for the \DRS-non-private scheme, but not achievable for the non-private scheme which serves all demands.

\begin{example}
	\label{Ex_cach_random}
	We consider the demand-private coded caching problem for $N=2,K=2, M=1/3$. It was shown in~\cite{Tian2018} that
	for memory $M=1/3$, the optimum non-private rate for $N=2,K=4$ satisfies $R^{*}_{2,4}(1/3) > 4/3$.  Next we give a scheme which achieves a rate $4/3$ under demand privacy for $N=2,K=2, M=1/3$. The other known demand-private schemes also do not achieve $R=4/3$ for $N=2,K=2$. See Fig.~\ref{Fig_schemeC} for reference.

	Let $A$ and $B$ denote the two files. We will now
	give a scheme which achieves a rate $4/3$ for $M=1/3$ with $F=3l$ for some positive integer $l$. We denote the
	3 segments of $A$ and $B$ by $A_1,A_2,A_3$ and $B_1,B_2,B_3$ respectively, of $l$ bits each.  
	First let us consider a $\DRS$-non-private scheme for $N=2$ and $K=4$ from~\cite{Tian2018}. Let
	$C_{i,j}(A,B), i,j=0,1$, as shown in Table~\ref{Table_cache_NK2}, correspond to the cache
	content of user $2i+j$ in the $\DRS$-non-private scheme.  The transmission
	$T_{(p,q)}(A,B), p,q=0,1$, as given in Table~\ref{Tab_Tx}, 
	is chosen for the demand
	$\bar{d} \in \DRS$ such that $ (p,q) = \bar{c}(\bar{d})$. Using
	Tables~\ref{Table_cache_NK2} and \ref{Tab_Tx}, 
	it is easy to verify that the non-private scheme satisfies the decodability condition for demands in \DRS. From this
	scheme, we obtain a demand-private scheme for $N=2,K=2$ as follows. 

	\begin{table}[h]
	\begin{center}
		\begin{tabular}{|c|c|}
			\hline
			Cache & Cache Content\\
			\hline
			$C_{0,0}(A,B)$ & $A_1\oplus B_1$ \\
			\hline
			$C_{0,1}(A,B)$ & $A_3\oplus B_3$\\
			\hline
			$C_{1,0}(A,B)$ & $A_2\oplus B_2$ \\
			\hline
			$C_{1,1}(A,B)$ & $A_1\oplus A_2\oplus A_3\oplus B_1\oplus B_2\oplus B_3$\\
			\hline
		\end{tabular}
\end{center}
		\caption{Choices for the caches of user 0 and user 1.}
		\label{Table_cache_NK2}
	\end{table}
	
	\begin{table}[h]
		\begin{center}
			\begin{tabular}{|c|c|}
			\hline
			$T_{(0,0)}(A,B)$ &  $B_1, B_2, A_3, A_1\oplus A_2\oplus A_3$ \\
			\hline
			$T_{(0,1)}(A,B)$ &  $A_2, A_3, B_1, B_1\oplus B_2\oplus B_3$ \\
			\hline
			$T_{(1,0)}(A,B)$ &  $B_2, B_3, A_1, A_1\oplus A_2\oplus A_3$ \\
			\hline
			$T_{(1,1)}(A,B)$ &  $A_1, A_2, B_3, B_1 \oplus B_2 \oplus B_3 $ \\
			\hline
		\end{tabular}
\end{center}
		\caption{Transmissions for $(2,2, \frac{1}{3}, \frac{4}{3})$-private scheme.}
		\label{Tab_Tx}
	\end{table}

	Let the shared key   $S_k ,k=0,1$ of user $k$ be a uniform binary random variable.
	The cache encoding functions and the  transmission encoding function are denoted as
	\begin{align*}
	C_k(S_k, A,B) & = C_{k, S_k} (A,B) \text{ for } k=0,1, \\
	E(A,B, D_0, D_1, S_0, S_1) & = T_{(D_0\oplus S_0, D_1 \oplus S_1)}(A,B).
	\end{align*}
	User $k$ chooses $C_{k, S_k} (A,B)$ given in
	Table~\ref{Table_cache_NK2} as the cache encoding function.
	In the delivery phase, for given $(S_0,S_1)$ and $(D_0,D_1)$,
	the server broadcasts $T_{(D_0 \oplus S_0,D_1 \oplus S_1)}(A,B)$ as the main payload and $(D_0 \oplus S_0,D_1
	\oplus S_1)$ as the auxiliary transmission. For such a transmission, the decodability follows from the decodability of the chosen non-private scheme.

	Further, the broadcast  transmission will
	not reveal any information about the demand of one user to the other user since
	one particular transmission $T_{(p,q)}(A,B)$ happens for all demand vectors
	$(D_0,D_1)$, and also that $S_i$ acts as one time pad for $D_i$ for each $i=0,1$. Here,
	all the transmissions consist of $4l$ bits (neglecting the 2 bits for $(D_0
	\oplus S_0,D_1 \oplus S_1)$). Since $F=3l$, this scheme achieves a rate $R =4/3$.
\end{example}

Example~\ref{Ex_cach_random} showed that there exists an $(M,R)$ pair that is not achievable for $N$ files and $NK$ users, but it is achievable with demand privacy for $N$ files and $K$ users. This was possible because a \DRS-non-private scheme needs to serve only a subset of demands. Our Scheme A as described later utilizes this fact, and obtains a general scheme for any  parameters $N$ and $K$. Specifically, we show that
 a \DRS-non-private scheme for $N$ files and $NK$ users can be obtained from the non-private scheme given in~\cite{Yu18} for $N$ files and $NK-K+1$ users. The memory-rate pairs achievable using Scheme A are presented in Theorem~\ref{Corl_reduced_usrs}.
We use the following lemma to prove Theorem~\ref{Corl_reduced_usrs}.
\begin{lemma}
	\label{Thm_reduced_usrs}
	For the $(M,R)$ pairs given by
	\begin{align*}
	(M,R)= \left(\frac{Nr}{NK-K+1} , \frac{{NK-K+1\choose r+1}-{NK-K+1-N \choose r+1}}{{NK-K+1\choose r}}\right), \quad \text{ for } r \in \{0,1,\ldots,NK-K\} \label{Eq_YMA_achv_pair}
	\end{align*}
	which are achievable for the non-private coded caching problem with  $N$ files and $NK-K+1$ users by the YMA scheme~\cite{Yu18}, there exists an
	$(N,NK,M,R)$ \DRS-non-private scheme.
\end{lemma}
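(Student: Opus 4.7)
The plan is to construct an $(N, NK, M, R)$ $\DRS$-non-private scheme from the $(N, K', M, R)$ YMA scheme with $K' := NK - K + 1$. I would reuse the YMA subpacketization and placement: each file $W_n$ is split into $\binom{K'}{r}$ equal subfiles $\{W_{n,\cT} : \cT \subseteq [0:K'-1], |\cT|=r\}$, and the $k$-th YMA cache holds $\{W_{n,\cT} : k \in \cT, n \in [0:N-1]\}$, giving per-user memory $MF = NrF/K'$. Next, I would index the $NK$ actual users as $(g, p)$ with $g \in [0:K-1]$, $p \in [0:N-1]$, and assign caches via a surjective map $\phi:\{(g,p)\} \to [0:K'-1]$ sending $\phi(0, p) = p$ so that group $0$ occupies the $N$ ``leader'' YMA caches; for each $g \in [1:K-1]$, sending the users $(g, 1), \ldots, (g, N-1)$ to distinct fresh indices in $[N:K'-1]$ and sending $(g, 0)$ back to one of the leader indices. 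This uses exactly $K'$ distinct cache contents.

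For the delivery phase, given $\bd \in \DRS$, I would specify a YMA demand vector $\bd' \in [0:N-1]^{K'}$ whose entry at index $k$ is the demand of the actual user mapped to $k$ (choosing the group-$0$ user at shared leader indices), and broadcast the YMA transmissions for $\bd'$. The rate matches $R$ since these are precisely the YMA transmissions, and every actual user with a unique YMA cache decodes its demand by YMA's correctness.

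The main obstacle is showing that the $K-1$ actual users $(g, 0)$ for $g \in [1:K-1]$, whose demands under $\bd$ may differ from $\bd'$ at their shared leader cache, also decode. The key $\DRS$-structural fact I would leverage is that within each group $g$ all $N$ files are demanded exactly once, so $\bd_{(g,0)}$ always equals the demand of some group-$0$ user $(0, p_g^\ast)$ whose YMA cache $p_g^\ast$ is a leader. To overcome the obstacle, the construction must either (i) replace the shared-cache content with a carefully designed linear combination of leader-cache contents so that $(g, 0)$ can algebraically extract the required subfiles of $W_{\bd_{(g,0)}}$ from the YMA transmissions, or (ii) fold a small number of additional coded pieces into the transmissions to serve the shared users without increasing the rate. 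The correctness verification would proceed case by case on the cyclic-shift tuple $(c_0, \ldots, c_{K-1})$ parameterizing $\DRS$ and reduce to an algebraic check of YMA's linear transmission structure; this is the technical heart of the proof.
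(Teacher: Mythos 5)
Your high-level plan matches the paper's: map the $NK$ users onto the $K' = NK-K+1$ YMA caches so that exactly $K-1$ users per the remaining stacks are ``extra,'' and broadcast the YMA delivery for the demands of the $K'$ users holding genuine YMA caches. You also correctly identify the crux: the $K-1$ extra users $(g,0)$ for $g\geq 1$ must somehow decode from a transmission tailored to a different demand vector. However, you leave precisely that crux unresolved, and option~(i) as you state it would not work. The paper (in its proof of the lemma) gives the extra user $u'_i$ a dedicated \emph{coded} cache $Z'_i$ consisting of XOR-combinations $Z^j_{i,\cS} = \bigoplus_{u \in \cV_i \setminus (\cS \cap \cV_i)} W_{j,\{u\}\cup\cS}$, where $\cV_i = \cK'_0\cup\cK'_i$ spans both leader and non-leader indices in the $i$-th stack. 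These combinations involve subfiles $W_{j,\{u\}\cup\cS}$ with $u$ a non-leader index, so they are not expressible as a linear combination of the $N$ leader-cache contents alone, contrary to your option~(i). Option~(ii) (adding extra transmissions) is also not what the paper does, and without a careful accounting it would inflate the rate.

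The missing technical content is twofold. First, the paper needs Lemma~\ref{Lem_subset_users}: although $Z'_i$ only stores $Z^j_{i,\cS}$ for $\cS\not\ni 0$, every $Z^j_{i,\cS}$ (including those with $0\in\cS$) is recoverable from $Z'_i$. Second, the paper proves an explicit XOR identity expressing $W_{d'_i,\cR}$ in terms of $\{Z^{d_u}_{i,\cR\setminus\{u\}}\}$ and the broadcast YMA symbols $\{Y_{\{u\}\cup\cR}\}$, using crucially that in $\DRS$ each file appears exactly once in each stack's demand (which is why $\bigoplus_{u\in\cV_i}W_{d_u,\cR} = W_{d'_i,\cR}$, since every other term cancels in pairs). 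This is a unified algebraic identity, not a case analysis over $(c_0,\ldots,c_{K-1})$ as you propose. Without the coded-prefetching construction and these two verification steps, the proposal is a plan with the central lemma missing, not a proof.
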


The proof of Lemma~\ref{Thm_reduced_usrs} can be found in Subsection~\ref{Sec_proof_reducd_usrs}. The proof follows by dividing $NK$ users in the \DRS-non-private scheme into two groups with the first group  containing $K-1$ users and the second group containing  $NK-K+1$ users. Users in the second group follow the prefetching of the YMA scheme while users in the first group follow coded prefetching. In particular, the users in the first group follow the coded prefetching of Type III caching discussed in~\cite{ShaoVZT19}. In the delivery phase, for a given $\bar{d} \in\DRS$, the server chooses the transmission of the YMA scheme corresponds to the demands in the second group of users. Due to the special nature of the demand vectors in \DRS, using this transmission, the demands of all users in the first group can also be served.

\underline{Scheme A:} Scheme A consists of two steps. In the first step, a \DRS-non-private scheme is obtained from the non-private YMA scheme for $N$ files and $NK-K+1$ users. In the second step, an $(N,K, M,R)$-private scheme is obtained using this \DRS-non-private scheme as a blackbox. Scheme A achieves the memory-rate pairs given in the following theorem.

\begin{theorem}
	\label{Corl_reduced_usrs}
There exists an $(N,K,M,R)$-private scheme with the following memory-rate pair:
	\begin{align} 
(M, R)=\left(\frac{Nr}{NK-K+1}, \frac{{NK-K+1\choose r+1} -{NK-K-N+1\choose r+1}}{{NK-K+1\choose r}}\right), \quad \mbox{ for } r=\{0, \ldots, NK-K+1\}.
	\end{align}
\end{theorem}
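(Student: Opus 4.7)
The plan is to obtain Theorem~\ref{Corl_reduced_usrs} as a direct corollary by chaining the two previously established results in the excerpt. Specifically, Lemma~\ref{Thm_reduced_usrs} converts the YMA non-private scheme for $N$ files and $NK-K+1$ users into a \DRS-non-private scheme for $N$ files and $NK$ users at the same memory-rate pair, and Theorem~\ref{Thm_genach} converts any \DRS-non-private scheme for $N$ files and $NK$ users into a private scheme for $N$ files and $K$ users at the same memory-rate pair. Composing these two reductions immediately yields the claimed achievable pairs.

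In more detail, I would first recall that the YMA scheme~\cite{Yu18} for the $(N, NK-K+1)$ non-private coded caching problem achieves, for each $r \in \{0, 1, \ldots, NK-K\}$, the memory-rate pair
\begin{align*}
(M, R) = \left(\frac{Nr}{NK-K+1},\ \frac{\binom{NK-K+1}{r+1} - \binom{NK-K+1-N}{r+1}}{\binom{NK-K+1}{r}}\right).
\end{align*}
Then, by Lemma~\ref{Thm_reduced_usrs}, for each such pair there exists an $(N, NK, M, R)$ \DRS-non-private scheme. Finally, by Theorem~\ref{Thm_genach}, the existence of an $(N, NK, M, R)$ \DRS-non-private scheme guarantees the existence of an $(N, K, M, R)$-private scheme. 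Concatenating these three facts gives the claim for $r \in \{0, 1, \ldots, NK-K\}$.

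The only minor subtlety is the boundary case $r = NK-K+1$ stated in the theorem, which corresponds to the trivial operating point where the memory $M = N$ is already large enough to cache all files and the rate $R = 0$ is achievable under any privacy requirement (the binomial $\binom{NK-K+1}{r+1}$ for $r = NK-K+1$ is zero by convention). This case follows by having each user simply store all files, and no demand information needs to be revealed.

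There is no genuine obstacle here: the theorem is essentially a packaging result. The real technical content lies in the proofs of Lemma~\ref{Thm_reduced_usrs} (constructing the \DRS-non-private scheme by splitting $NK$ users into a coded-prefetching group of $K-1$ users and a YMA group of $NK-K+1$ users and exploiting the cyclic-shift structure of \DRS to serve the first group with the same transmission) and Theorem~\ref{Thm_genach} (the virtual-user/blackbox reduction that realizes demand privacy via shared randomness selecting which virtual user in each stack plays the role of the real user). Since both of these are already established earlier in the excerpt, the proof of Theorem~\ref{Corl_reduced_usrs} itself is just a short composition argument.
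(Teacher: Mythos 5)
Your proof is correct and follows exactly the same route as the paper's: it is a two-step composition of Lemma~\ref{Thm_reduced_usrs} (YMA $\to$ \DRS-non-private for $NK$ users) and Theorem~\ref{Thm_genach} (\DRS-non-private for $NK$ users $\to$ private for $K$ users). You are in fact slightly more careful than the paper, since you explicitly handle the boundary case $r = NK-K+1$, which lies outside the range $\{0,\ldots,NK-K\}$ covered by Lemma~\ref{Thm_reduced_usrs} but appears in the theorem's stated range; your observation that this corresponds to $(M,R)=(N,0)$, trivially achievable by caching all files, closes that small gap.
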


\begin{proof}
The given memory-rate pair is achievable  by the YMA scheme for $N$ files and $NK-K+1$ users. So, the theorem follows from Theorem~\ref{Thm_reduced_usrs} and Lemma~\ref{Thm_genach}.
\end{proof}

\begin{remark}
	\label{Rem_SchmA}
	If a private scheme is derived from a \DRS-non-private scheme
	using the construction described in the proof of Theorem~\ref{Thm_genach}, then it also satisfies the stronger notion of privacy metric~\eqref{Eq_strng_priv_metric}. 
	This can be shown by replacing $\ND{k}$ by $D_{[0:K-1]\setminus \cS}$, $D_k$ by $D_{\cS}$, and $Z_k$ by $Z_{\cS}$ in the proof of privacy that led to~\eqref{Eq_privcy4}. Since Scheme A is obtained using a \DRS-non-private scheme as a blackbox, it also satisfies the stronger privacy condition~\eqref{Eq_strng_priv_metric}.
\end{remark}

\subsection{Scheme B}
\label{Sec_delivery_random}

Now we describe Scheme B.
For $N \leq K$, Scheme B is trivial, where the caches of all users are populated with the same $M/N$ fraction of each file in the placement phase. In the delivery phase, the uncached parts of all files are transmitted. In this scheme, all users get all files, and the rate of transmission is given by $N(1-M/N) = N-M$. Since the broadcast transmission is independent of the demands, it clearly satisfies the privacy condition~\eqref{Eq_instant_priv}. However, if the number of users is less than the number of files, then this scheme is very wasteful in terms of rate.
For $K<N$,  next we give an outline of Scheme B.  It achieves a rate $K(1-M/N)$ that, in this case, is an improved rate compared to $N-M$.

For $K < N$, let us first consider the non-private scheme~\cite[Example~1]{Maddah14} which achieves rate  $K(1-M/N)$. In this scheme,  all users store the same $M/N$ fraction of each file in the placement phase. In the delivery phase, the server transmits $K$ components where $i$-th component consists of the uncached part of the file demanded by user $i$. However, this scheme does not ensure demand privacy since if the $i$-th component is different from the $j$-th component, $i \neq j$, then user $i$ learns that $D_j \neq D_i$. This clearly violates demand privacy. For $K<N$, the placement phase in Scheme B is the same as that of for $N\leq K$. In the delivery phase, the server transmits $K$ components without violating demand privacy. We next illustrate Scheme B using an example with $N>2$ and $K=2$.

\begin{example}
	\label{Ex_no_coding}
	Let us consider that there are two users and more than two files, i.e., $N>2$ and $K=2$. In the placement phase, each user stores $M/N$ fraction of each file. In  the delivery phase, first let us consider the case of $D_0 \neq D_1$. In this case, the server transmits two components which correspond to the uncached parts of each demanded file. To achieve privacy, the position where the uncached part of $W_{D_0}$ is placed, is selected from one out of two possible choices uniformly at random. The uncached fraction of $W_{D_1}$ is placed in the other position. These positions are conveyed to each user in the auxiliary transmissions. The random variable to convey the position to user $0$ is XOR-ed with shared randomness $S_0$. Since $S_0$ is known only to user $0$, it acts as an one-time pad.  Similarly, $S_1$ helps in protecting the privacy against user $0$. When $D_0 = D_1$, the uncached part of the file is placed in a position chosen randomly. The other position is filled with random bits. Since one user does not have any information about the component from which the other user's demanded file is decoded, and since the files are independent of the demands, this scheme preserves the demand privacy.
\end{example}

For $K <N$, Scheme B is a generalization of the scheme presented in Example~\ref{Ex_no_coding}. Scheme B achieves the memory-rate pairs given in the following theorem.

\begin{theorem}
	\label{th:basic}
	There exists an $(N,K,M,R)$-private scheme with the memory-rate pair $(M,\min \{N,K\}(1-M/N))$.
\end{theorem}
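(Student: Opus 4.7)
The plan is to dispose of the two parameter regimes $N\leq K$ and $K<N$ separately. For $N\leq K$, I would use the trivial scheme: every user stores the same $M/N$ fraction of each file, and in the delivery phase the server simply transmits the uncached parts of all $N$ files, so the rate is $N(1-M/N)=\min\{N,K\}(1-M/N)$. Since neither the cache content nor the broadcast depends on $\bar{D}$, $P$, or $\bar{S}$, decodability is immediate and the privacy condition~\eqref{Eq_instant_priv} is trivially satisfied.

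For $K<N$, I would generalize the construction sketched in Example~\ref{Ex_no_coding}. The placement is the same as above, so $Z_k$ is identical across users and independent of $\bar{D}$. In the delivery phase, the server prepares $K$ ``slots'' of $F(1-M/N)$ bits each. Using $P$, it draws a random map $\pi:[0:K-1]\to[0:K-1]$ uniformly among all maps satisfying $\pi(i)=\pi(j)\iff D_i=D_j$, places the uncached portion of $W_{D_k}$ into slot $\pi(k)$, and fills every slot not in the image of $\pi$ with fresh uniform random bits drawn from $P$. The concatenation of the $K$ slots forms the main payload, giving rate $K(1-M/N)$. Each $S_k$ is taken uniform on $[0:K-1]$, and the auxiliary message is $J=(\pi(k)\oplus S_k)_{k=0}^{K-1}$, of negligible length $K\log_2 K$. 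Decoding is immediate: user $k$ unmasks $\pi(k)$ using $S_k$, reads slot $\pi(k)$, and combines it with $Z_k$ to recover $W_{D_k}$.

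The heart of the proof is verifying the privacy condition~\eqref{Eq_instant_priv}, which I would break into three steps. First, a direct counting argument over consistent maps yields $\Pr(\pi(k)=p\mid \bar{D})=1/K$ for every $p$ and every $\bar{D}$, so $\pi(k)$ is uniform on $[0:K-1]$ and independent of $\bar{D}$. Second, for every $j\neq k$, the component $\pi(j)\oplus S_j$ of $J$ is uniform on $[0:K-1]$ and independent of the rest of user $k$'s view, because $S_j$ is an independent one-time pad unknown to user $k$; thus from $(J,S_k)$ user $k$ effectively learns only $\pi(k)$. Third, conditioning on $D_k$ and $\pi(k)=p$, the slot at position $p$ is a deterministic function of $W_{D_k}$, while each of the remaining $K-1$ slots is either the uncached part of some $W_d$ with $d\neq D_k$ or a fresh uniform pad; in either case these $K-1$ strings are uniform on $\{0,1\}^{F(1-M/N)}$ and jointly independent of $(Z_k,W_{D_k},S_k,\pi(k))$. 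Combining these observations shows that the conditional distribution of user $k$'s full view given $D_k$ does not depend on $\ND{k}$, which, together with $\ND{k}\perp D_k$, yields~\eqref{Eq_instant_priv}.

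I expect the third step to be the main obstacle: one must argue that swapping an uncached slice of a file user $k$ has no cache information about for a uniform random pad leaves the joint law of the $K$ slots invariant, even as the collision pattern of $\bar{D}$ changes with $\ND{k}$. This hinges on the padding being freshly generated from $P$ and on $\pi$ being drawn uniformly among the maps consistent with the demand-induced partition, so that both the identity of the ``non-target'' slots and the bits they contain are marginally indistinguishable from i.i.d.\ uniform noise from user $k$'s perspective.
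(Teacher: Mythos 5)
Your proof is correct and follows essentially the same approach as the paper: the same uncoded placement, the same device of placing $W^{(u)}_{D_k}$ at a position drawn uniformly over the maps consistent with the demand-induced partition, filling unused slots with fresh pads, and delivering each position via a one-time pad keyed by $S_k$. Your privacy argument (conditional distribution of the view given $D_k$ is invariant to $\bar{D}_{\tilde{k}}$) is just a slightly more explicit, distribution-level rendering of the paper's one-shot information-theoretic step, which drops $(X', W^{(c)}_0,\ldots,W^{(c)}_{N-1})$ from the mutual information by observing it is jointly uniform and independent of $(\bar{D}_{\tilde{k}}, Q_0,\ldots,Q_{K-1}, D_k, S_k)$.
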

Full description of Scheme B and the proof of Theorem~\ref{th:basic} are provided in Subsection~\ref{Sec_proof_no_coding}.

 \subsection{Scheme C}
\label{Sec_random_cach_delivry}

For $K<N$, the broadcast in scheme A contains symbols which are not necessary for decoding, but are still broadcasted to preserve privacy. In this section, we propose Scheme C which gets rid of such redundant symbols using the idea of permuting the  broadcast symbols as in Scheme B, thus improving the memory-rate trade-off. In Theorem~\ref{Thm_PR_SR}, we give the memory-rate pairs  achievable using Scheme C.
 
\begin{theorem}
	\label{Thm_PR_SR}
There exists an $(N,K,M,R)$-private scheme with the following memory-rate pair:
	\begin{align}
	(M,R) = &\left(\frac{N\sum_{s=t}^{NK-1}{NK-1 \choose s-1}r^{NK-s-1}}{\sum_{s=t}^{NK-1}{NK \choose s} r^{NK-s-1}}, \frac{\sum_{s=t+1}^{NK}[{NK \choose s} -{NK-K\choose s}]r^{NK-s}}{\sum_{s=t}^{NK-1}{NK \choose s} r^{NK-s-1}} \right), \notag \\
	& \qquad \mbox{ for } t=\{1, \ldots, NK-1\}, \; r \in [1,N-1]. \label{Eq_Thm_PR_SR}
	\end{align}
\end{theorem}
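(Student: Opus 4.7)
The plan is to construct Scheme~C by layering the permutation-based disguise of Scheme~B on top of an MAN-type placement on the $NK$ virtual users underlying Scheme~A, and then to memory-share across the caching subset size $s$ with weights controlled by $r$. The guiding observation is that Scheme~A's broadcast XORs over $(s+1)$-subsets $T'$ of virtual users that contain \emph{no} real alias $(k,S_k)$ cannot help any real user decode; for $K<N$ these ``useless'' XORs are the dominant contribution to the rate, so dropping them yields an immediate saving, provided privacy can be restored.

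First I would inherit the Scheme~A framework: introduce $NK$ virtual users arranged in $K$ stacks of size $N$, let the shared randomness $S_k$, uniform on $[0:N-1]$, pick virtual user $(k,S_k)$ as user $k$'s alias, and use the element of $\DRS$ with cyclic shifts $c_k=(D_k+S_k)\bmod N$ as the virtual demand vector. For a fixed $s\in\{t,\ldots,NK-1\}$ I would run MAN placement on the $NK$ virtual users, splitting each file into $\binom{NK}{s}$ subfiles $W_{n,T}$ indexed by $s$-subsets of virtual users and placing $W_{n,T}$ in the caches of the virtual users in $T$; this gives constituent memory $M_s=Ns/NK$. In the delivery phase, for each $(s+1)$-subset $T'$ that contains at least one real alias, the server broadcasts the MAN XOR $\bigoplus_{j\in T'}W_{d_j,T'\setminus\{j\}}$, dropping the $\binom{NK-K}{s+1}$ XORs over $T'$ disjoint from all real aliases. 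The number of useful XORs is $\binom{NK}{s+1}-\binom{NK-K}{s+1}$, so the constituent rate is $R_s=(\binom{NK}{s+1}-\binom{NK-K}{s+1})/\binom{NK}{s}$.

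To restore privacy I would, as in Scheme~B, draw a uniform permutation from the server's private randomness $P$ and broadcast the useful XORs in the permuted order; the auxiliary transmission $J$ then tells each real user $k$ only the positions of the XORs indexed by $T'\ni(k,S_k)$, one-time-padded by $S_k$. Decodability follows from MAN on virtual users, while privacy follows along the lines of the arguments for Theorems~\ref{Thm_genach} and~\ref{th:basic}: conditioned on all symbols $X,J,D_k,S_k,Z_k$ that user~$k$ sees, the other shared keys $S_{\tilde k}$ remain uniform, and hence so do the cyclic shifts $c_{\tilde k}=(D_{\tilde k}+S_{\tilde k})\bmod N$, which force $\ND{k}$ to stay uniform on $[0:N-1]^{K-1}$. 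Memory-sharing the constituent schemes with weights $\alpha_s\propto\binom{NK}{s}r^{NK-s-1}$ over $s\in\{t,\ldots,NK-1\}$ and applying the identity $s\binom{NK}{s}=NK\binom{NK-1}{s-1}$ together with an index shift $s\mapsto s+1$ in the rate sum reproduces exactly the expressions in~\eqref{Eq_Thm_PR_SR}.

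The main obstacle I expect is the privacy argument for the delivery modification, since dropping the $\binom{NK-K}{s+1}$ useless XORs implicitly signals which virtual users are real, so the interaction between the uniform permutation, the one-time-padded position announcement in $J$, and the content of the remaining XORs must be carefully controlled to keep $\ND{k}$ independent of everything user~$k$ sees. A secondary concern is that $J$ now carries position lists whose length grows with the number of useful XORs; checking that it still stays $o(F)$ for an appropriate choice of subpacketization is the other delicate piece of bookkeeping.
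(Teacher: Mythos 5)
Your high-level plan (place MAN subfiles on the $NK$ virtual users, drop the XORs over subsets disjoint from the real aliases, permute the remaining XORs, one-time-pad the position indices to each user, and memory-share over $s$ with weights $\propto \binom{NK}{s}r^{NK-s-1}$) reproduces exactly the memory and rate arithmetic of~\eqref{Eq_Thm_PR_SR}, and you have correctly flagged the privacy argument as the crux. However, the privacy step is a genuine gap, not a routine adaptation of Theorems~\ref{Thm_genach} and~\ref{th:basic}, and your scheme as stated would not close it. In Theorem~\ref{Thm_genach} (Scheme~A) the set of transmitted XORs is determined by a \emph{fixed} leader set, so the broadcast is a deterministic function of $(\bar{W},\bar{S}\ominus\bar{D})$ and independence of $\ND{k}$ is immediate. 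In your scheme the set $\tau$ of transmitted subsets depends on the hidden keys $\bar{S}$, and the permutation only hides \emph{positions}, not \emph{contents}: given $\bar{d}=\bar{S}\ominus\bar{D}$ (which every user receives), $Z_k$, and the decoded $W_{D_k}$, a user can in general test whether a given observed XOR value is consistent with a candidate subset $\cR'$, and the pattern of which $\cR'$ are present conveys information about $\cL$ and hence about $\ND{k}$. The paper's Scheme~C resolves this by \emph{not} transmitting the raw $Y_{\cR}$: it transmits $V_{\cR}=Y_{\cR}\oplus W^{r}_{\cR}$, where the mask $W^{r}_{\cR}$ is built from the $\cR$-indexed subfiles $(W_{0,\cR}\oplus W_{1,\cR},\ldots,W_{N-2,\cR}\oplus W_{N-1,\cR})$. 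Because $W_{D_k}$ contains only one of $W_{0,\cR},\ldots,W_{N-1,\cR}$, this mask is fresh randomness from user $k$'s viewpoint whenever her alias is not in $\cR$, which is precisely what makes the unidentified symbols $\Tilde{X}_{k,i}$ vanish in the key step~\eqref{eq:pause11} of the privacy proof. This masking is the central new idea of Scheme~C and is absent from your proposal.

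A second, smaller discrepancy: in the paper's construction $r$ is not a memory-sharing weight but a subpacketization ratio — subfiles indexed by $\cR$ are $r$ times larger than those indexed by any superset of cardinality $|\cR|+1$, and the mask $W^{r}_{\cR}$ uses exactly the first $r/(N-1)$ fraction of $W_{\cR}$ so that it matches the length of $Y_{\cR}$. The constraint $r\in[1,N-1]$ comes from requiring this mask fraction to lie in $[0,1]$; under your pure memory-sharing reading, that restriction on $r$ would be unmotivated. So although your bookkeeping is sound, the scheme it describes is not the scheme the theorem refers to, and there is no argument establishing that the scheme you describe is private.
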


 Note that for the memory-rate pairs in Theorem~\ref{Thm_PR_SR}, we have 2 free parameters $t$ and $ r$. By fixing the value of $r$, one can obtain a memory-rate curve by varying the value of $t$. We have observed through numerical computations that the memory-rate curve achieved for $r = r_1$ is better than that for $r=r_2$ if $r_1 > r_2$ (see Fig.~\ref{Fig_schemeC}).
The memory-rate curve for $r< N-1$, although empirically suboptimal compared to $r=N-1$, is useful in showing the order optimality result presented in Theorem~\ref{Thm_order}.

\begin{remark}
	\label{Rem_SchemeC}
	For $K< N$, Scheme B does not satisfy the stronger privacy metric in~\eqref{Eq_strng_priv_metric}. Since Scheme C builds on the ideas of Scheme B, it also does not satisfy this stronger privacy metric.
	The fact that Scheme B does not satisfy~\eqref{Eq_strng_priv_metric} can be intuitively observed from Example~\ref{Ex_no_coding}. For $K=2$, the privacy condition~\eqref{Eq_strng_priv_metric} is achieved if there is no leakage of privacy  after one user gets to know all the files. If one user has all the files, then she can easily verify that the part of the broadcast that she has not used for decoding  is some random bits or a part of a file. Thus, she can infer some knowledge about the demand of the other user in Scheme B.
	This was also observed in~\cite{Yan20} (see~\cite[Example~1]{Yan20}). 
\end{remark}

\begin{figure}
	\centering
	\begin{subfigure}{0.4\textwidth}
		\centering
		\includegraphics[scale=0.5]{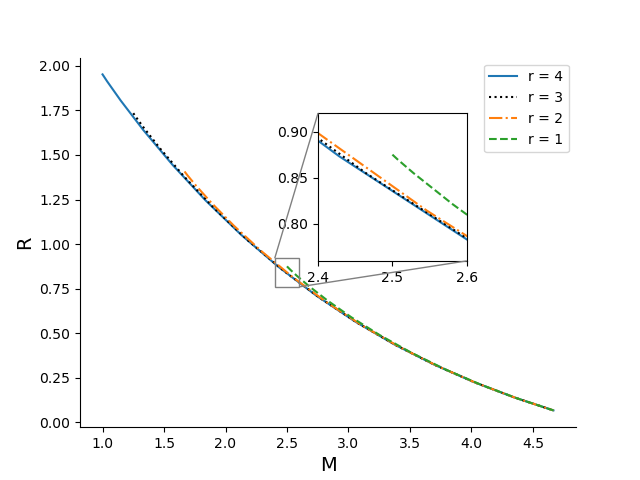}
	\end{subfigure}
\begin{subfigure}{0.4\textwidth}
		\centering
		\includegraphics[scale=0.5]{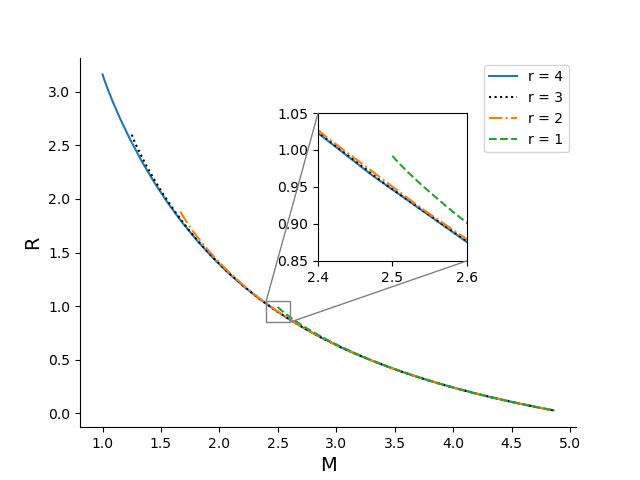}
\end{subfigure}
\caption{Memory-rate pairs in Theorem~\ref{Thm_PR_SR} are plotted for different values of $r$. The first figure is for $N=5,K=3$ and the second one is for $N=5,K=7$.}
	\label{Fig_schemeC}
\end{figure}

Next we illustrate Scheme C for $N=3, K=2$.

\begin{example}
	\label{Ex_PR_SR}
	Let us consider the demand-private coded caching problem for $N=3$ files and $K=2$ users. By choosing $r=2$ and $t=3$ in the expression for memory in Theorem~\ref{Thm_PR_SR}, we get $M=\frac{195}{116}$. The same parameters give $R=\frac{69}{116}$. Next we describe the scheme which achieves this memory-rate pair with $F=116l$ for some positive integer $l$. We partition each file $W_{i}, i \in [0:2]$ into $\sum_{j=t}^{NK-1} {NK \choose j} = \sum_{j=3}^{5} {6 \choose j} = 41$ segments of three different sizes. These segments are grouped into three groups such that all segments in one group have the same size.  The segments are labelled by some subsets of $[0:NK-1] = [0:5]$. The segments of $W_i$ are $W_{i,\cR}$; $\cR \subset [0:5], |\cR|=3,4,5$. These segments are of different sizes, and  these are grouped into 3 groups as 
	\begin{align*}
	\cT^{i}_{5} &=(W_{i,\cR})_{\cR \subset [0:5], |\cR| = 5}, \\ \cT^{i}_{4} &=(W_{i,\cR})_{\cR \subset [0:5], |\cR| = 4}, \\ 
	\cT^{i}_{3} &=(W_{i,\cR})_{\cR \subset [0:5], |\cR| = 3}.
	\end{align*}
	The size of segment $W_{i,\cR}, i\in[0:2]$ is chosen as follows:
	$$
	len(W_{i,\cR}) =  \left\{
	\begin{array}{lcl}
	l & \text{if } |\cR| = 5\\
	\\
	rl=2l & \text{if } |\cR| = 4\\
	\\
	r^2l=4l & \text{if } |\cR| = 3.\\
	\end{array}
	\right.
	$$
	Thus, each segment in $\cT^{i}_{5},\cT^{i}_{4}$ and $\cT^{i}_{3}$ has respectively \(l, 2l\) and \(4l\) bits. Then,  for all $i \in [0:2]$, we have
	\begin{align*}
	len(W_i) &= (|\cT^{i}_{5} |+|\cT^{i}_{4} |\times r+|\cT^{i}_{3}| \times r^2)l \\
	&= (6+15\times 2+20\times4)l \\
	&= 116l.
	\end{align*}
	
	\underline{Caching:} The cache content of user $k \in \{0,1\}$ is determined by the  key $S_k,k=0,1$ which is shared only between the server and user $k$. Shared key $S_k,k=0,1$ is  distributed as $S_k \sim unif\{[0:N-1]\} =unif\{[0:2]\}$. The cache contents of each user is grouped into three parts. The $j^{th}, j=1,2,3$ part of user $k\in \{0,1\}$ is denoted by $\cG_{k,j}$ and is shown in Table~\ref{Tab_cach_PR_SR}. Thus, the number of bits stored at one user is given by $3\left({5 \choose 4} +2\times {5 \choose 3} + 4 \times {5 \choose 2}\right)l = 195l$. Thus, we have $M=\frac{195}{116}$. Other than \(S_k\) the server also places some additional random keys of negligible size in the cache of user \(k\in \{0,1\}\). These will be used as keys for one-time pad in the delivery phase. 
	\begin{table}[h]
	\begin{center}
		\begin{tabular}{|c|c| } 
			\hline 
			$\cG_{k,1}$ & $(W_{i,\cR} | W_{i,\cR}  \in \cT^{i}_{5} \mbox{ and } S_k +3k \in \cR )_{i=0,1,2}$  \\
			\hline
			$\cG_{k,2}$ & $(W_{i,\cR} | W_{i,\cR}  \in \cT^{i}_{4} \mbox{ and } S_k +3k\in \cR )_{i=0,1,2} $ \\
			\hline
			$\cG_{k,3}$ & $(W_{i,\cR} | W_{i,\cR}  \in \cT^{i}_{3} \mbox{ and } S_k +3k \in \cR )_{i=0,1,2} $ \\
			\hline
		\end{tabular}
\end{center}
		\caption{Cache contents of user $k, k = 0,1$.}
		\label{Tab_cach_PR_SR}
	\end{table}
	
	\underline{Delivery:} In the delivery phase, for given demands $(D_{0},D_{1})$, we first construct an expanded demand vector $\bar{d}$ of length $6$ such that $ \bar{d} \in \DRS$ defined in Definition~\ref{Def_dmnd_subst}. The vector \(\bar{d} \) is given by
	$ \bar{d}= (\bar{d}^{(0)}, \bar{d}^{(1)})$, where $\bar{d}^{(k)}, k=0,1$ is obtained by applying $S_{k} \ominus D_{k}$ right cyclic shift to the vector $(0,1,2)$, where $\ominus$ denotes modulo $3$ subtraction. That is, for $k=0,1$, $d_i^{(k)} = i-(S_k-D_k) \mod 3$. 
	Having defined vector \(\bar{d}\), we now define symbols \(Y_{\cR}\) for \(\cR \subset [0:5]\) and \(|\cR| = 4,5,6\) as follows:
	\begin{align*}
	    Y_{\cR} = \bigoplus_{u \in \cR} W_{d_{u}, \cR \setminus \{u\}}
	\end{align*}
	where $d_{u}$ is the $u+1$-th item in $\bar{d}$. In particular, for \(\cR = [0:5]\), we have
	\begin{align*}
	    Y_{[0:5]} = \bigoplus_{u \in [0:5]} W_{d_{u}, [0:5] \setminus \{u\}}.
	\end{align*}
	Symbol \(Y_{[0:5]}\) as defined above is a part of the main payload in the broadcast transmission which needs \(l\) bits.

    To give the other parts of the broadcast, we define symbols \(W_{\cR}\) and \(V_{\cR}\) for  $\cR \subset [0:5] $ and $|\cR|=4,5$ as follows:
	\begin{align*}
	& W_{\cR} = (W_{0,\cR} \oplus W_{1,\cR}, W_{1,\cR} \oplus W_{2,\cR})
	\end{align*}
	and
	\begin{align*}
	V_{\cR} & = Y_{\cR} \oplus W_{\cR} .
	\end{align*}
	Note that for $|\cR|=4, W_{\cR}$ has two parts, each of length $2l$ bits, and $ Y_{\cR}$  has a length of $4l$ bits. We further define sets \(V_4\) and \(V_5\) as follows:
	\begin{align*}
	V_4 = \{V_{\cR} | \cR \cap \{S_0,S_1+3\} \neq \phi, |\cR| = 4\} 
	\end{align*}
	and
	\begin{align*}
	V_5 &= \{V_{\cR} | |\cR| = 5\}.
	\end{align*}
	Observe that \(V_4\) and \(V_5\) contain 14 symbols each of size \(4l\) bits and 6 symbols each of size \(2l\) bits,  respectively.
	The server picks permutation functions \(\pi_4(\cdot) \) and \(\pi_5(\cdot)\) uniformly at random  respectively from the symmetric group\footnote{A symmetric group defined over any set is the group whose elements are all the bijections from the set to itself, and whose group operation is the composition of functions.} of permutations of $[0:13]$ and $[0:5]$ and broadcasts $\pi_4(V_4)$ and $\pi_5(V_5)$. 
	The server does not fully reveal these permutation functions with any of the users. The position of any symbol \(V_{\cR} \in V_i\), \(i=4,5\) in  \(\pi_i(V_i)\) is privately conveyed to user \(k\), if and only if \(S_k +3k \in \cR\). 
	This private transmission of positions is achieved using one-time pads whose keys are deployed in the caches of respective users in the caching phase.  The main payload of the broadcast \(X'\) can be written as
	\[
	X'=(X_0,X_1,X_2)=(Y_{[0:5]},\pi_4(V_4),\pi_5(V_5)).
	\]
	Thus, the total number of transmitted bits  are
	$$
	(1+6 \times 2 + 14 \times 4)l = 69l.
	$$
	So, the rate of transmission is \(\frac{69}{116}.\) Note that \(X'\) is only the main payload. Along with \(X'\), the server also broadcasts some auxiliary transmission $J =(S_0 \ominus D_0, S_1 \ominus D_1, J') = (\bar{S} \ominus \bar{D}, J') $. Here, $J'$ contains the positions of various symbols in \(X_1\) and \(X_2\) encoded using one-time pad as discussed above. Thus, the complete broadcast transmission is $X = (X',J) $.
	
	\underline{Remark:} Here, note that $V_5$ contains all $V_{\cR}$ with $|\cR|=5$. However, $V_4$ does not contain all $V_{\cR}$ with $|\cR|=4$. For example, if $S_0=0$ and $S_1=0$, then $V$ does not contain $V_{\{1,2,4,5\}}$. This is similar to avoiding some redundant transmissions in the leader-based YMA scheme~\cite{Yu18} compared to the scheme in~\cite{Maddah14}. This is the main reason for getting lower rates using this scheme compared to the rates in Theorem~\ref{Corl_reduced_usrs}. 
	 \comment{
	\underline{Decoding:} Now we explain how user \(k\), where \(k = 0,1\) decodes file $W_{D_k}$. In the first part $\cG_{k,1}$ of cache, user \(k\) does not have one segment of  $\cT^{D_k}_{5}$ namely $W_{D_k, [0:5]\setminus \{S_k + 3k\}}$ . User $k$ decodes this segment as,
	\begin{align*}
	    \widehat{W}_{D_k, [0:5]\setminus \{S_k + 3k\}} = Y_{[0:5]} \oplus \left(\bigoplus_{u\in {[0:5]\setminus \{S_k + 3k\}}} W_{d_{u},[0:5] \setminus \{u\}}\right)
	\end{align*}
	Observe that \(Y_{[0:5]}\) is broadcasted by the server while each symbol $W_{d_{u},[0:5] \setminus \{u\}}$ is a part of $\cG_{k,1}$ and hence a part of the cache of user \(k\). Thus, user \(k\) can compute $\widehat{W}_{D_k, [0:5]\setminus \{S_k + 3k\}}$. Now,
	\begin{align*}
	    \widehat{W}_{D_k, [0:5]\setminus \{S_k + 3k\}} &= Y_{[0:5]} \oplus \left(\bigoplus_{u\in {[0:5]\setminus \{S_k + 3k\}}} W_{d_{u},[0:5] \setminus \{u\}}\right) \nonumber \\
	    & = \bigoplus_{u \in [0:5]} W_{d_{u}, [0:5] \setminus \{u\}} \oplus \left(\bigoplus_{u\in {[0:5]\setminus \{S_k + 3k\}}} W_{d_{u},[0:5] \setminus \{u\}}\right) \nonumber \\
	    & = W_{d_{S_k + 3k},[0:5]\setminus \{S_k + 3k\}} \nonumber \\ 
	    & \overset{(a)}{=} W_{D_k,[0:5]\setminus \{S_k + 3k\}}
	\end{align*}
	Here \((a)\) follows because $d_{S_k + 3k} = (S_k + 3k - (S_{k} - D_{k}))$ mod \(3\) \(= D_k\). Now that user \(k\) has all segments in $\cT^{D_k}_{5}$, we look at recovery of segments belonging to $\cT^{D_k}_{4}$. This is done using symbols from $\pi_5(V_5)$. All symbols $W_{D_k, \cR} \in \cT^{D_k}_{4}$ where $S_k + 3k \in \cR$ i.e. all symbols in set $ \cG_{k,2}$ are cached at user \(k\). To recover the remaining symbols in $\cT^{D_k}_{4}$ i.e. \(W_{D_k,\cR}\) such that \(|\cR|=4, S_k+3k \notin \cR\) and \(\cR \subset [0:5]\), user $k$ decodes $ \widehat{W}_{D_k,\cR}$ as follows,
	\begin{align*}
	\widehat{W}_{D_k,\cR} = V_{\{S_k+3k\} \cup \cR} \oplus W_{\{S_k+3k\} \cup \cR}  \oplus \left(\bigoplus_{u\in {\cR}} W_{d_{u},\{S_k+3k\} \cup \cR \setminus \{u\}}\right) 
	\end{align*}
	Here, $V_{\{S_k+3k\} \cup \cR}$ is a part of $X_1$ and its position in $X_1$ has been revealed to user \(k\) since $S_k+3k \in (S_k+3k \cup \cR)$. The symbols $W_{\{S_k+3k\} \cup \cR}$ and $W_{d_{u},\{S_k+3k\} \cup \cR \setminus \{u\}}$ in the above equation can be recovered from her cache. Now,
	\begin{align*}
	\widehat{W}_{D_k,\cR} &= V_{\{S_k+3k\} \cup \cR} \oplus W_{\{S_k+3k\} \cup \cR}  \oplus \left(\bigoplus_{u\in {\cR}} W_{d_{u},\{S_k+3k\} \cup \cR \setminus \{u\}}\right) \nonumber \\
	&= Y_{\{S_k+3k\} \cup \cR} \oplus W_{\{S_k+3k\} \cup \cR} \oplus W_{\{S_k+3k\} \cup \cR}  \oplus \left(\bigoplus_{u\in {\cR}} W_{d_{u},\{S_k+3k\} \cup \cR \setminus \{u\}}\right)\nonumber \\
	&= \bigoplus_{u \in \{S_k+3k\} \cup \cR} W_{d_{u}, \{S_k+3k\} \cup \cR \setminus \{u\}} \oplus \left(\bigoplus_{u\in {\cR}} W_{d_{u},\{S_k+3k\} \cup \cR \setminus \{u\}}\right) \nonumber \\
	&= W_{d_{S_k+3k}, \cR } \nonumber \\
	&= W_{D_k, \cR }  \nonumber 
	\end{align*}
	In this way user \(k\) can recover all symbols in $\cT^{D_k}_{4}$. In the last group $\cT^{D_k}_{3}$, all symbols $W_{D_k,\cR} \in \cT^{D_k}_{3}$ satisfying $S_k+3k \in \cR$ form the set $\cG_{k,3}$ and hence are a part of the cache of user $k$ . The remaining symbols $W_{D_k,\cR} \in \cT^{D_k}_{3}$ such that $S_k+3k \notin \cR$ can be decoded by user $k$ as follows,
	\begin{align*}
	\widehat{W}_{D_k,\cR} = V_{\{S_k+3k\} \cup \cR} \oplus W_{\{S_k+3k\} \cup \cR}  \oplus \left(\bigoplus_{u\in {\cR}} W_{d_{u},\{S_k+3k\} \cup \cR \setminus \{u\}}\right) 
	\end{align*}
	Here, $V_{\{S_k+3k\} \cup \cR}$ is a part of $X_2$ and its position in $X_2$ has been revealed to user \(k\) since $S_k+3k \in (S_k+3k \cup \cR)$. The symbols $W_{\{S_k+3k\} \cup \cR}$ and $W_{d_{u},\{S_k+3k\} \cup \cR \setminus \{u\}}$ in the above equation can be recovered from her cache. Now,
	\begin{align*}
	\widehat{W}_{D_k,\cR} &= V_{\{S_k+3k\} \cup \cR} \oplus W_{\{S_k+3k\} \cup \cR}  \oplus \left(\bigoplus_{u\in {\cR}} W_{d_{u},\{S_k+3k\} \cup \cR \setminus \{u\}}\right) \nonumber \\
	&= Y_{\{S_k+3k\} \cup \cR} \oplus W_{\{S_k+3k\} \cup \cR} \oplus W_{\{S_k+3k\} \cup \cR}  \oplus \left(\bigoplus_{u\in {\cR}} W_{d_{u},\{S_k+3k\} \cup \cR \setminus \{u\}}\right)\nonumber \\
	&= \bigoplus_{u \in \{S_k+3k\} \cup \cR} W_{d_{u}, \{S_k+3k\} \cup \cR \setminus \{u\}} \oplus \left(\bigoplus_{u\in {\cR}} W_{d_{u},\{S_k+3k\} \cup \cR \setminus \{u\}}\right) \nonumber \\
	&= W_{d_{S_k+3k}, \cR } \nonumber \\
	&= W_{D_k, \cR }  \nonumber 
	\end{align*}
  Thus, user $k$ can retrieve all symbols belonging to each of the three groups of file $W_{D_k}$ and she can recover this file by concatenating these symbols.

	\underline{Privacy:} Now we show how this scheme ensures demand privacy for user \(k\) where \(k=0,1\). We define \(\tilde{k} = (k + 1)\) mod \(2\). Since $I(D_{\tilde{k}};Z_{k},D_{k}) = 0$, the privacy condition $I(D_{\tilde{k}};X,Z_k,D_k) = 0$ follows by showing that $I(X;D_{\tilde{k}}|Z_{k},D_{k}) = 0$.  The main payload $X'$ consists of three parts: $\pi_{4}(V_{4}),\pi_{5}(V_{5})$ and $Y_{[0:5]} $. We write $J' = (J'_{0},J'_{1})$, where $J'_{k}$ has the positions of $V_{\cR}$, for $S_{k} + 3k \in \cR$, in $\pi_4(V_4)$ and $\pi_5(V_5)$. Some positions of $V_{\cR}$ for $S_k + 3k \notin \cR$ may be contained in $J'_{\tilde{k}}$. But these are one time padded using keys shared only with $\tilde{k}$. So the positions of $V_{\cR}$ for $S_k + 3k \notin \cR$ are not known to user $k$.
	\comment{
	The positions of various symbols that are given by auxiliary transmission $J'$ can be broken into two parts: a set $B_k$ which reveals the positions of $V_{\cR}$ such that $S_k + 3k \in \cR$ and a set $B_k^c$ which reveals the positions of $V_{\cR}$ such that $S_k + 3k \notin \cR$. The contents of $B_k$ and $B_k^c$ are transmitted after XORing with some keys.
	User \(k\) has access to all keys that are used to XOR  the contents of $B_k$ and has no information about the keys that are used to XOR the contents in $B_k^c$.}
	Thus, we obtain
	\begin{align}
	I(X;D_{\tilde{k}}|Z_{k},D_{k}) & = I(\pi_{4}(V_{4}),\pi_{5}(V_{5}), Y_{[0:5]}, \bar{S}\ominus \bar{D}, J_k,J_{\tilde{k}};D_{\tilde{k}}|Z_{k},D_{k}) \nonumber \\
	& \overset{(a)}{\leq} I(V_{4},V_{5}, Y_{[0:5]},\pi_4, \pi_5, S_k+3k,\bar{S}\ominus \bar{D} ;D_{\tilde{k}}|Z_{k},D_{k}) \nonumber \\
	& \overset{(b)}{=} I(V_{4},V_{5}, Y_{[0:5]},\bar{S}\ominus \bar{D} ;D_{\tilde{k}}|Z_{k},D_{k}) \label{Eq_ex3_info1}
	\end{align}
	where (a) follows because $\pi_4(V_4), \pi_5(V_5)$ are functions of $(V_{4},V_{5}, \pi_4, \pi_5)$ and $J_k$ is a function of $(\pi_4, \pi_5, S_k + 3k)$. (b) follows because $(\pi_4, \pi_5)$ are independent of everything else and $S_k + 3k$ is a part of $Z_k$. We can further divide 
	\(V_4\) and \(V_5\)  into sets $V_{4,k}, \tilde{V}_{4,k}$ and $V_{5,k},\tilde{V}_{5,k}$, respectively which are defined as follows: 
	\begin{align*}
	V_{i,k} & = \{V_{\cR}| S_{k}+3k \in \cR, V_{\cR} \in V_i \}, \quad \mbox{for } i=4,5 \\
	\tilde{V}_{i,k} & = V_{i}\setminus V_{i,k} \quad \mbox{for } i=4,5.
	\end{align*}	
	Using these definitions, it follows from~\eqref{Eq_ex3_info1} that
	\begin{align}
	I(X;D_{\tilde{k}}|Z_{k},D_{k}) & \leq I(V_{4,k},V_{5,k}, \tilde{V}_{4,k}, \tilde{V}_{5,k}, Y_{[0:5]},\bar{S}\ominus \bar{D};D_{\tilde{k}}|Z_{k},D_{k}) \nonumber\\
	& = I(V_{4,k},V_{5,k}, \tilde{V}_{4,k}, \tilde{V}_{5,k}, Y_{[0:5]};D_{\tilde{k}}|Z_{k},D_{k},\bar{S}\ominus \bar{D}) + I(S_0 \ominus D_0, S_1 \ominus D_1;D_{\tilde{k}}|Z_{k},D_{k})\nonumber\\
	& \overset{(a)}{=} I(V_{4,k},V_{5,k}, \tilde{V}_{4,k}, \tilde{V}_{5,k}, Y_{[0:5]};D_{\tilde{k}}|Z_{k},D_{k},\bar{S}\ominus \bar{D}) \nonumber\\
	& = I(V_{4,k},V_{5,k}, Y_{[0:5]};D_{\tilde{k}}|Z_{k}, D_{k}, \bar{S}\ominus \bar{D}) +   I(\tilde{V}_{4,k}, \tilde{V}_{5,k} ;D_{\tilde{k}}|Z_{k},D_{k},V_{4,k},V_{5,k}, Y_{[0:5]}, \bar{S}\ominus \bar{D}) \nonumber \\
	&= I(V_{4,k},V_{5,k}, Y_{[0:5]};D_{\tilde{k}}|Z_{k},D_{k}, \bar{S}\ominus \bar{D}) +   I( \tilde{V}_{5,k} ;D_{\tilde{k}}|Z_{k},D_{k},V_{4,k},V_{5,k}, Y_{[0:5]},\tilde{V}_{4,k}, \bar{S}\ominus \bar{D}) \nonumber \\
	& \enspace + I( \tilde{V}_{4,k} ;D_{\tilde{k}}|Z_{k},D_{k},V_{4,k},V_{5,k}, Y_{[0:5]}, \bar{S}\ominus \bar{D}). \label{eq:ex3_3}
	\end{align}
	Here, $(a)$ follows because $S_{\tilde{k}}, D_{\tilde{k}}$ are independent of $Z_k, D_k$ and $S_{\tilde{k}} \sim unif\{[0:2]\}$. Next, we show that each  term on the RHS of~\eqref{eq:ex3_3} is 0. To show that the second term is 0, we first argue that 
	\begin{align}
	H(V_{4,k},V_{5,k}, Y_{[0:5]}|W_{D_k},Z_k,\bar{S}\ominus \bar{D})=0. \label{Eq_fn_WD0}
	\end{align}
	We can write $(V_{4,k},V_{5,k}, Y_{[0:5]})$ as a function of  $(W_{D_k},Z_k,\bar{S}\ominus \bar{D})$ since  $(V_{4,k},V_{5,k})$ is composed of symbols \(V_{\{S_{k} + 3k\} \cup \cR}\) such that \(\cR \subset[0:5] \backslash \{S_{k} + 3k\}\) and \(|\cR| = 3,4\)  and from their definition, each of these symbols can be given as
	\begin{align}
	V_{\{S_{k} + 3k\} \cup \cR} & = W_{\{S_{k} + 3k\} \cup \cR}  \oplus Y_{\{S_{k} + 3k\} \cup \cR} \nonumber \\
	& = W_{\{S_{k} + 3k\} \cup \cR}  \oplus \left(\bigoplus_{u\in {\cR \cup \{S_{k} + 3k\}}} W_{d_{u},\{S_{k} + 3k\} \cup \cR \setminus \{u\}}\right) \nonumber \\
	& \overset{(a)}{=} W_{\{S_{k} + 3k\} \cup \cR}  \oplus W_{D_k,\cR} \oplus    \left(\bigoplus_{u\in {\cR}} W_{d_{u},\{S_{k} + 3k\} \cup \cR \setminus \{u\}}\right) \label{eq:ex3_4}
	\end{align}	
	where $(a)$ follows because $d_{S_k + 3k} = D_k$. Here, $W_{D_k, \cR}$ is a part of $W_{D_k}$ while, the segments given by the first and third terms on RHS of~\eqref{eq:ex3_4} are parts of $Z_k$. 
	Similarly, we can write \(Y_{[0:5]}\) as
	\begin{align}
	Y_{[0:5]} = \left(\bigoplus_{u \in [0:5]\backslash {S_{k} + 3k}} W_{d_{u}, [0:5] \setminus \{u\}}\right) \oplus W_{D_{k},[0:5]\backslash {\{S_{k} + 3k\}}} \label{eq:ex3_5}
	\end{align}
	Again, the second term is a part of \(W_{D_k}\) and the first term is a part of \(Z_k\). So~\eqref{Eq_fn_WD0} follows. We have seen in the decoding section that user $k$ can recover $W_{D_k}$ from $(V_{4,k},V_{5,k}, Y_{[0:5]},Z_k,\bar{S}\ominus \bar{D})$ which gives,
	\allowdisplaybreaks
	\begin{align}
	H(W_{D_k}|V_{4,k},V_{5,k}, Y_{[0:5]},Z_k,\bar{S}\ominus \bar{D})=0. \label{Eq_decod_condtion}
	\end{align}
	 Thus, we obtain
	\begin{align*}
	&I( \tilde{V}_{5,k} ;D_{\tilde{k}}|Z_{k},D_{k},V_{4,k},V_{5,k}, Y_{[0:5]},\tilde{V}_{4,k}, \bar{S}\ominus \bar{D})\\
	& \overset{(a)}{=} I( \tilde{V}_{5,k} ;D_{\tilde{k}}|Z_{k},D_{k}, W_{D_k}, V_{4,k},V_{5,k}, Y_{[0:5]},\tilde{V}_{4,k}, \bar{S}\ominus \bar{D})\\
	& \overset{(b)}{=} I( \tilde{V}_{5,k} ;D_{\tilde{k}}|Z_{k},D_{k},W_{D_k},\tilde{V}_{4,k}, \bar{S}\ominus \bar{D}) \\
	& \overset{(c)}{=} I( Y_{[0:5]\backslash \{S_{k} + 3k\}} \oplus W_{[0:5]\backslash \{S_{k}+3k\}} ;D_{\tilde{k}}|Z_{k},D_{k},W_{D_k},\tilde{V}_{4,k}, \bar{S}\ominus \bar{D})\\
	& = I( Y_{[0:5]\backslash \{S_{k}+3k\}} \oplus \left(W_{0,[0:5]\backslash \{S_{k}+3k\}} \oplus W_{1,[0:5]\backslash \{S_{k}+3k\}} , W_{1,[0:5]\backslash \{S_{k}+3k\}} \oplus W_{2,[0:5]\backslash \{S_{k}+3k\}}\right) ;D_{\tilde{k}}|Z_{k},D_{k},W_{D_k},\tilde{V}_{4,k}, \bar{S}\ominus \bar{D})\\
	& \overset{(d)}{=} 0
	\end{align*}
	where in $(a)$ we used~\eqref{Eq_decod_condtion}, and in $(b)$ we used~\eqref{Eq_fn_WD0}. Further, $(c)$ follows from the definition of \(\tilde{V}_{5,k}\). We obtain $(d)$, since all symbols in \(W_{[0:5]\backslash \{S_{k}+3k\}}\) and \(\tilde{V}_{4,k}\) are non-overlapping, \(W_{[0:5]\backslash \{S_{k}+3k\}}\) is also non-overlapping with \(Z_{k}\), and also due to the fact that \(W_{D_{k}}\) contains exactly one of \(W_{0,[0:5]\backslash \{S_{k}+3k\}}, W_{1,[0:5]\backslash \{S_{k}+3k\}}\) and   \(W_{2,[0:5]\backslash \{S_{k}+3k\}}\).
	Using similar arguments, we can show that  the third term on the RHS  of~\eqref{eq:ex3_3} is also zero.
	
	Finally, let us consider the first term on the RHS of~\eqref{eq:ex3_3}. Each symbol in \((V_{4,k},V_{5,k})\) is given by~\eqref{eq:ex3_4} and it contains the term \(W_{D_{k}, \cR}\), $\cR \subset [0:5] \setminus \{S_k + 3k\} $ which is not a part of  \(Z_k\). 
	\begin{align*}
	I(V_{4,k},V_{5,k}, Y_{[0:5]};D_{\tilde{k}}|Z_{k}, D_k, \bar{S}\ominus \bar{D}) & \overset{(a)}{=} I(V_{4,k},V_{5,k}, Y_{[0:5]}, W_{D_k};D_{\tilde{k}}|Z_{k}, D_k, \bar{S}\ominus \bar{D}) \nonumber \\
	& \overset{(b)}{=} 	I(W_{D_k};D_{\tilde{k}}|Z_{k}, D_k, \bar{S}\ominus \bar{D}) \nonumber\\
	& = 0 \nonumber
	\end{align*}
where $(a)$ follows from~\eqref{Eq_decod_condtion}, and $(b)$ follows from~\eqref{Eq_fn_WD0}. This completes the proof of privacy for user \(k\).
}
	 
	\underline{Decoding:} For user $k\in \{0,1\}$, let us  first consider the recovery of segments belonging to $\cT^{D_k}_{i}$, $i = 3,4$. This is done using symbols from $X_1$ and $X_2$. All symbols $W_{D_k, \cR} \in \cT^{D_k}_{i}$ such that $S_k + 3k \in \cR$ (all symbols in set $ \cG_{k,6-i}$) are cached at user \(k\). User $k$ decodes the remaining symbols in $\cT^{D_k}_{i}$, i.e., \(W_{D_k,\cR}\) such that \(|\cR|=i, S_k+3k \notin \cR\) and \(\cR \subset [0:5]\) as follows:
	\begin{align}
    \widehat{W}_{D_k,\cR} = V_{\cR^{+}} \oplus W_{\cR^{+}}
	\oplus \left(\bigoplus_{u\in {\cR}} W_{d_{u},\cR^{+} \setminus \{u\}}\right) \label{eq:decodform}
	\end{align} 
	where $\cR^{+} = \{S_k+3k\} \cup \cR$. Here, $V_{\cR^{+}}$ is a part of $\pi_{i+1}(V_{i+1})$ and its position in $\pi_{i+1}(V_{i+1})$ has been revealed to user \(k\) since $S_k+3k \in \cR^{+}$. The symbols $W_{\cR^{+}}$ and $W_{d_{u},\cR^{+} \setminus \{u\}}$ in \eqref{eq:decodform} can be recovered from her cache. Substituting for $V_{\cR^{+}}$ in \eqref{eq:decodform} yields
	\begin{align}
	\widehat{W}_{D_k,\cR} &= Y_{\cR^{+}} \oplus W_{\cR^{+}} \oplus W_{\cR^{+}}  \oplus \left(\bigoplus_{u\in {\cR}} W_{d_{u},\cR^{+} \setminus \{u\}}\right)\nonumber \\
	&= \bigoplus_{u \in \cR^{+}} W_{d_{u}, \cR^{+} \setminus \{u\}} \oplus \left(\bigoplus_{u\in {\cR}} W_{d_{u},\cR^{+} \setminus \{u\}}\right) \nonumber \\
	&= W_{d_{S_k+3k}, \cR } \nonumber \\
	&= W_{D_k, \cR }.  \label{eq:decodform2}
	\end{align}
	Since user \(k\) has all segments in $\cT^{D_k}_{3}$ and $\cT^{D_k}_{4}$, we consider the recovery of symbols in $\cT^{D_k}_{5}$.
	In the first part $\cG_{k,1}$ of cache, user \(k\) does not have one segment of  $\cT^{D_k}_{5}$, namely $W_{D_k, [0:5]\setminus \{S_k + 3k\}}$ . User $k$ decodes this segment as
	\begin{align*}
	    \widehat{W}_{D_k, [0:5]\setminus \{S_k + 3k\}} = Y_{[0:5]} \oplus \left(\bigoplus_{u\in {[0:5]\setminus \{S_k + 3k\}}} W_{d_{u},[0:5] \setminus \{u\}}\right).
	\end{align*}
	Observe that \(Y_{[0:5]}\) is broadcasted by the server while each symbol $W_{d_{u},[0:5] \setminus \{u\}}$ is a part of $\cG_{k,1}$, and hence a part of the cache of user \(k\). Thus, user \(k\) can compute $\widehat{W}_{D_k, [0:5]\setminus \{S_k + 3k\}}$. Using~\eqref{eq:decodform2}, it can be shown that $\widehat{W}_{D_k, [0:5]\setminus \{S_k + 3k\}} = W_{D_k, [0:5]\setminus \{S_k + 3k\}} $. Thus, user $k$ can retrieve all symbols belonging to each of the three groups of file $W_{D_k}$ and she can recover this file by concatenating these symbols.
	 
	\underline{Privacy:} To show the demand-privacy for user $k \in \{0,1\}$, we first define \(\tilde{k} = (k + 1)\) mod \(2\). Since $I(D_{\tilde{k}};Z_{k},D_{k}) = 0$, the privacy condition $I(D_{\tilde{k}};X,Z_k,D_k) = 0$ follows by showing that $I(X;D_{\tilde{k}}|Z_{k},D_{k}) = 0$. To that end, we
	divide all symbols that are a part of the main payload into two sets, $X'_k$ and $\tilde{X}'_k$ which are defined as follows:
	\begin{align*}
	X'_{k} & = \{Y_{[0:NK-1]}\} \cup \{V_{\cR}| S_{k}+3k \in \cR, V_{\cR} \in X' \}, \\  
	\tilde{X}'_k & = X'\setminus X'_{k} .
	\end{align*}
	Note that the positions in $X'$ of all symbols belonging to $X'_{k}$ is known to user $k$ while the positions of symbols belonging to $\tilde{X}'_k$ are not known. It can be shown that all symbols in $\tilde{X}'_k$ appear like a sequence of random bits to user $k$. This is because for some set $\cR$, $\cR \subset [0:5],|\cR| = 4,5$, the server broadcasts $V_{\cR}$ instead of $Y_{\cR}$. The symbol $W_{\cR}$ essentially hides the message $Y_{\cR}$ from all users that do not belong to set $\cR$. Further, it can also be shown that 
	\begin{align}
	H(X'_{k}|W_{D_k},Z_k,\bar{S}\ominus \bar{D})=0. \label{eq:dummypriv} 
	\end{align}
	It is easy to see that, $(W_{D_k},Z_k,\bar{S}\ominus \bar{D})$ does not reveal any information about $D_{\tilde{k}}$ which in combination with \eqref{eq:dummypriv} ensures privacy. 
\end{example} 
 
 \subsection{Comparison of our schemes}
 \label{sec_compare}
 
  \begin{figure}[h]
 	\centering
 	\includegraphics[scale=0.6]{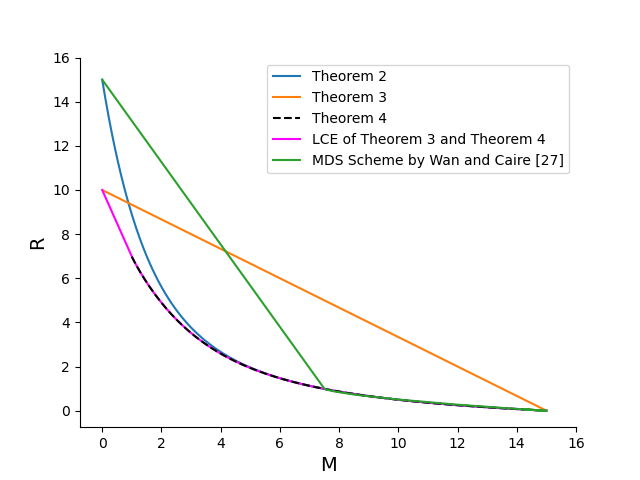}
 	\caption{Comparison of different schemes for $N=15$ and $K=10$. The region given by the lower convex envelop (LCE) of the the points in Theorem~\ref{th:basic} and Theorem~\ref{Thm_PR_SR}   is larger than the region given by the LCE  of the points in Theorem~\ref{Corl_reduced_usrs}.}
 	\label{Fig_N15K10}
 \end{figure}
 
 Now we give a comparison of our schemes.  
 From numerical simulations we observe that, a combination of Schemes B and C outperforms Scheme A for $K<N$, and Scheme A outperforms both Schemes B and C for $N\leq K$, i.e., for $K < N$, the region given by the lower convex envelop (LCE) of the the points in Theorem~\ref{th:basic} and Theorem~\ref{Thm_PR_SR},  is  larger than the region given by  the LCE of the points in  Theorem~\ref{Corl_reduced_usrs}.  Whereas, we observe the opposite for $N \leq K$, i.e., the region given by the LCE of the points in Theorem~\ref{Corl_reduced_usrs} is larger than the region given by the LCE of the points  in Theorem~\ref{th:basic} and Theorem~\ref{Thm_PR_SR}.
 In Fig.~\ref{Fig_N15K10}, we plot the memory-rate pairs achievable using our schemes along with the pairs of achievable using the MDS scheme in~\cite{Wan19} for $N=15$ and $K=10$.
 In Fig.~\ref{Fig_N10K15}, we give a comparison of the memory-rate pairs achievable using different schemes for $N=10$ and $K=15$.

 \begin{figure}[h]
 	\centering
 	\includegraphics[scale=0.6]{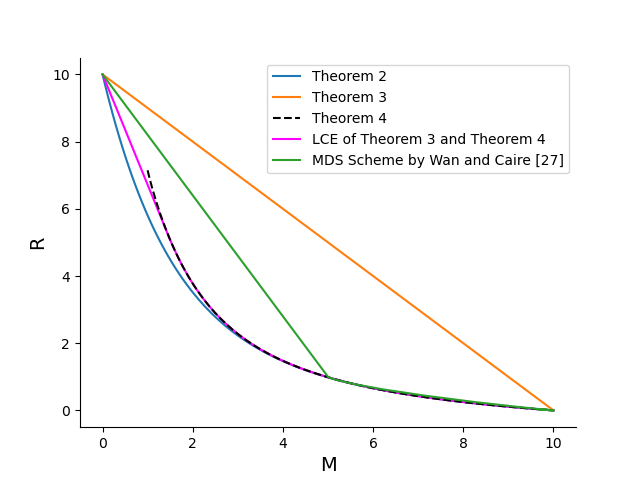}
 	\caption{Comparison of different schemes for $N=10$ and $K=15$. The region given by the lower convex envelop (LCE) of the points in Theorem~\ref{Corl_reduced_usrs} is larger than the region given by the LCE of the points in Theorem~\ref{th:basic} and Theorem~\ref{Thm_PR_SR}.}
 	\label{Fig_N10K15}
 \end{figure}

\subsection{Tightness of the achievable memory-rate pairs}
\label{Sec_order_optimal}

Now we compare the memory-rate pairs achievable using our schemes  with lower bounds on the optimal rates for non-private schemes. Recall that  for $N$ files, $K$ users and memory $M$, $R^{*p}_{N,K}(M)$ and $R^{*}_{N,K}(M)$ denote the optimal private rate and non-private rate, respectively.

\begin{theorem}
	\label{Thm_order}
 Let 
$R^{A}_{N,K}(M)$  denote the LCE of the points in Theorem~\ref{Corl_reduced_usrs}, and let $R^{BC}_{N,K}(M)$ denote the LCE of the points in Theorem~\ref{th:basic} and Theorem~\ref{Thm_PR_SR}. Then, we have
	\begin{enumerate}
		\item 
		For $N \leq K$,
		\begin{align}
		\label{Eq_ratio_bound1}
		\frac{R^{A}_{N,K}(M)}{R^{*}_{N,K}(M)}  \leq 
		\begin{cases}
		4 & \text{ if } M \leq \left(1 - \frac{N}{K}\right)\\
		8 & \text{ if}  \left(1 - \frac{N}{K}\right) \leq M \leq \frac{N}{2}\\
		2 & \text{ if }M \geq \frac{N}{2}.
		\end{cases}
		\end{align} \label{Thm_ordr_part1}
		\item \label{Thm_ordr_part2}
		For $ K < N$,
		\begin{align}
		\label{Eq_ratio_bound2}
		\frac{R^{BC}_{N,K}(M)}{R^{*}_{N,K}(M)} & \leq
		\begin{cases}
		3 & \text{ if } M < \frac{N}{2}\\
		2 & \text{ if } M \geq \frac{N}{2}.
		\end{cases}
		\end{align}
		\item \label{Thm_ordr_part_exct}
		 For all $N$ and $K$, $R^{*p}_{N,K}(M) = R^{*}_{N,K}(M)$ if $M \geq \frac{N(NK-K)}{NK-K+1}$.
	 \end{enumerate}	
  Since $R^{A}_{N,K}(M) \geq R^{*p}_{N,K}(M) \geq R^{*}_{N,K}(M)$, the upper bounds in~\eqref{Eq_ratio_bound1} also hold for the ratios $\frac{R^{A}_{N,K}(M)}{R^{*p}_{N,K}(M)}$ and $\frac{R^{*p}_{N,K}(M)}{R^{*}_{N,K}(M)}$. Similarly,   the upper bounds in~\eqref{Eq_ratio_bound2} also hold for the ratios $\frac{R^{BC}_{N,K}(M)}{R^{*p}_{N,K}(M)}$ and $\frac{R^{*p}_{N,K}(M)}{R^{*}_{N,K}(M)}$.
\end{theorem}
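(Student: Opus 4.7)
The plan is to compare the upper bounds $R^{A}_{N,K}(M)$ and $R^{BC}_{N,K}(M)$ provided by Schemes~A, B, C to lower bounds on $R^{*}_{N,K}(M)$ from the non-private literature. The two lower bounds I will invoke are the Maddah-Ali--Niesen cutset bound $R^{*}_{N,K}(M) \geq s(1 - sM/N)$ for $s \in [1:\min(N,K)]$, and a constant-factor order bound of the form $R^{*}_{N,K}(M) \geq \min(N,K)(1-M/N)/c$ from the non-private literature (e.g.\ Yu--Maddah-Ali--Avestimehr or Ghasemi--Ramamoorthy). Since $R^{A}_{N,K}(M) \geq R^{*p}_{N,K}(M) \geq R^{*}_{N,K}(M)$, any bound on $R^{A}/R^{*}$ transfers directly to the ratios mentioned after \eqref{Eq_ratio_bound2}.

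For Part~\ref{Thm_ordr_part1} ($N\leq K$), I would partition $[0,N]$ into the three regimes of the claim and, in each, pick an integer $r$ in Theorem~\ref{Corl_reduced_usrs} whose associated memory $Nr/(NK-K+1)$ is closest to the target $M$, using memory-sharing between adjacent values of $r$ for intermediate $M$. The identity $\binom{m}{r+1}-\binom{m-N}{r+1} = \sum_{i=0}^{N-1}\binom{m-1-i}{r}$ applied at $m = NK-K+1$ telescopes the YMA rate in Theorem~\ref{Corl_reduced_usrs} into an expression proportional to $(1-M/N)$ times a bounded factor; comparing against the cutset bound at $s=\min(N,K)$ yields the stated constants. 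For Part~\ref{Thm_ordr_part2} ($K<N$), the regime $M\geq N/2$ is handled by Scheme~B alone, whose rate $K(1-M/N)$ lies within a factor of $2$ of the cutset bound $\min(N,K)(1-M/N)/c$. The low-memory regime, including the previously open case $M < N/K$, requires Scheme~C: I would pick $(t,r)$ in Theorem~\ref{Thm_PR_SR} so that the achievable pair sits close to a line of slope $\min(N,K)/N$ through $(0,\min(N,K))$, take the LCE with Scheme~B's pair, and compare against $R^{*}_{N,K}(M) \geq \min(N,K)(1-M/N)/c$.

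For Part~\ref{Thm_ordr_part_exct}, setting $r = NK-K$ in Theorem~\ref{Corl_reduced_usrs} gives the achievable private memory-rate pair
\begin{equation*}
\bigl(M^{*},\; \tfrac{1}{NK-K+1}\bigr), \qquad M^{*} := \tfrac{N(NK-K)}{NK-K+1},
\end{equation*}
because $\binom{NK-K-N+1}{NK-K+1}=0$ whenever $N\geq 1$ and $\binom{NK-K+1}{NK-K}=NK-K+1$. Since $1/(NK-K+1) = 1 - M^{*}/N$, this pair lies exactly on the line $R = 1 - M/N$; memory-sharing this scheme with the trivial $(N,0)$ scheme therefore gives $R^{*p}_{N,K}(M) \leq 1 - M/N$ for every $M \in [M^{*}, N]$. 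On the converse side, the $s=1$ cutset bound gives $R^{*}_{N,K}(M) \geq 1 - M/N$. Combined with $R^{*p}_{N,K}(M) \geq R^{*}_{N,K}(M)$, both functions equal $1 - M/N$ on $[M^{*}, N]$, which proves Part~\ref{Thm_ordr_part_exct}.

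The main obstacle I anticipate is Part~\ref{Thm_ordr_part1}: obtaining the sharp constants $(4,8,2)$ rather than a single loose constant forces a regime-dependent choice of $r$ and sharp control on the binomial ratio in the YMA rate, and the factor $8$ in the middle regime reflects the loss incurred when $M$ is far from any grid point $Nr/(NK-K+1)$ so that a memory-sharing chord to $(0,N)$ must be used. Part~\ref{Thm_ordr_part_exct} is by contrast the cleanest, amounting to the observation that Scheme~A happens to meet the $s=1$ cutset exactly at $M = M^{*}$.
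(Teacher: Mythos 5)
Your Part~\ref{Thm_ordr_part_exct} argument is correct and matches the paper: setting $r=NK-K$ in Theorem~\ref{Corl_reduced_usrs} gives the achievable pair $\bigl(\tfrac{N(NK-K)}{NK-K+1},\tfrac{1}{NK-K+1}\bigr)$, which lies on the cutset line $R=1-M/N$, and memory-sharing with $(N,0)$ finishes it. Part~\ref{Thm_ordr_part1} is sketched too loosely to verify, but the paper's mechanism is worth noting: rather than comparing the achievable $R^{\text{MAN}}_{N,NK}(M)$ directly to a lower bound on $R^{*}_{N,K}(M)$, it first bounds the \emph{ratio of MAN rates at $NK$ versus $K$ users} by $1$ or $2$ (depending on whether $M\leq 1-N/K$ or not), and then composes with the non-private order-optimality factor $4$ from Ghasemi--Ramamoorthy; the constant $8$ is exactly $2\times 4$. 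You gesture at the right lower bound but do not articulate this two-step decomposition, which is what produces the sharp regime-dependent constants.

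There is a genuine error in your Part~\ref{Thm_ordr_part2} handling of $M\geq N/2$. You claim this regime ``is handled by Scheme~B alone, whose rate $K(1-M/N)$ lies within a factor of $2$ of the cutset bound $\min(N,K)(1-M/N)/c$.'' But for $K<N$ no lower bound of the form $R^{*}_{N,K}(M)\geq K(1-M/N)/c$ with a universal constant $c$ exists near $M=N$: the valid cutset bound is only $R^{*}_{N,K}(M)\geq 1-M/N$ (and it is tight, since the non-private MAN scheme achieves rate $1-M/N$ at $M=(K-1)N/K$). Scheme~B's rate $K(1-M/N)$ therefore exceeds the lower bound by a factor of $K$, which is unbounded. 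The paper avoids this by using a \emph{Scheme~C} point: substituting $t=r=1$ in Theorem~\ref{Thm_PR_SR} gives a pair with $M=N/2$ and rate at most $1$, so the chord from $(N/2,1)$ to $(N,0)$ yields $R^{BC}_{N,K}(M)\leq 2(1-M/N)$ on $[N/2,N]$, which is within factor $2$ of the cutset bound. Scheme~B alone cannot deliver this. Your description of the low-memory regime in Part~\ref{Thm_ordr_part2} is also too vague to certify the constant $3$; the paper obtains it by the specific substitution $r=K/s-1$ for integer $s\in[1,\lfloor K/2\rfloor]$, showing $R^{BC}_{N,K}(Ns/K)/R^{\text{YMA}}_{N,K}(Ns/K)\leq 3/2$, and multiplying by the factor $2$ between $R^{\text{YMA}}_{N,K}$ and $R^{*}_{N,K}$.
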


The proof of Theorem~\ref{Thm_order} is presented in Subsection~\ref{Sec_proof_order}. Theorem~\ref{Thm_order} shows that a combination of our schemes gives rates that are always within a constant multiplicative factor from the optimal, i.e., the order optimality result is shown for all cases. We also note that the order optimality result is also obtained in~\cite{Wan19} for all regimes except for the case when $ K < N$ and $M < N/K$. The constant factors in~\eqref{Eq_ratio_bound1} and also the factor $2$ for the regime $K<N, M \geq N/2$  are obtained in~\cite{Wan19}. In contrast, the constant factor $3$ in~\eqref{Eq_ratio_bound2} for the regime $K<N, M<N/2$ shows the order optimality for the case $K<N, M<N/K$, and improves the previously known factor $4$ for the case $K< N, M \leq N/K < N/2$.

 One natural question that arises in demand-private coded
caching is how much cost it incurs due to the extra constraint
of demand privacy. It follows from Theorem~\ref{Thm_order} that the extra cost is always within a constant factor.  However, we note that the extra cost may not be a constant factor for all the regimes under the stronger privacy condition~\eqref{Eq_strng_priv_metric}. For example, when $K<N$ and $M=0$, the optimal non-private rate is $K$.
However, for this case, the optimal private rate under the stronger privacy condition~\eqref{Eq_strng_priv_metric} is shown to be $N$ in~\cite{Yan20}, whereas the optimal private rate under the privacy condition~\eqref{Eq_instant_priv} is $K$ (Theorem~\ref{th:basic}).
Such a difference in rates under these two notions of privacy conditions  also extends for very small memory regimes when $K<N$. 

\subsection{Exact trade-off for  $N\geq K=2$}
\label{sec_exact}
For $N=K=2$, the exact trade-off  under no privacy  was shown in~\cite{Maddah14}. Tian characterized the exact trade-off under no privacy for $N>K=2$ in~\cite{Tian2018}. For $N=K=2$, the non-private trade-off region is characterized by three lines. Whereas, if $N>2, K=2$, then the non-private trade-off region is given by two lines.  We characterize the exact trade-off for $N\geq K=2$ under demand privacy in the following theorem. The characterization shows that the exact trade-off  region of private schemes for $N\geq K=2$ is always given by three lines.

\begin{figure}
	\centering
	\begin{subfigure}{0.4\textwidth}
		\centering
		\includegraphics[scale=0.5]{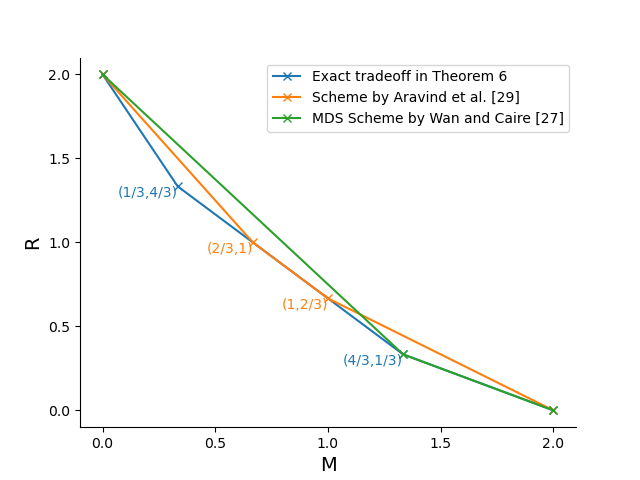}
	\end{subfigure}
\begin{subfigure}{0.4\textwidth}
		\centering
		\includegraphics[scale=0.5]{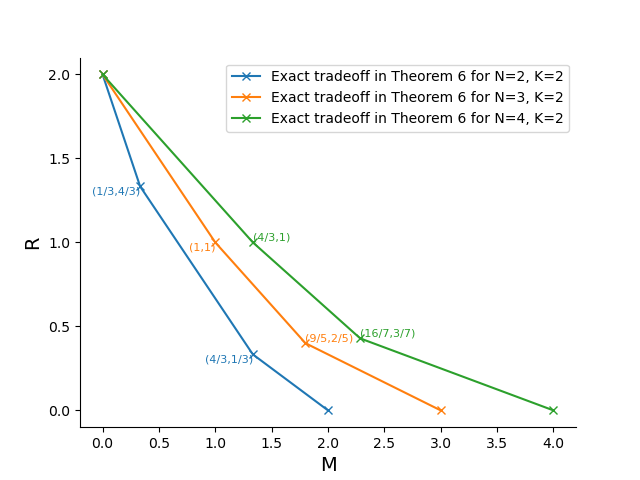}
\end{subfigure}
\caption{The figure on the left gives the exact trade-off with demand privacy for $N=K=2$ and the region given by other known  schemes. The figure on the right gives the exact trade-off with demand privacy for $N=2,3,4$ and $K=2$.}
	\label{Fig_schemeC}
\end{figure}

 \begin{theorem}
	\label{Thm_exact_region}
	\begin{enumerate}
		\item 	Any memory-rate pair $(M,R)$ is achievable with demand privacy for $N=K=2$ if and only if 
		\begin{align}
		2M + R \geq 2, \quad 3M+3R \geq 5, \quad M+2R \geq 2. \label{Eq_N2K2_region}
		\end{align}
		\label{N2K2_region}
		\item Any memory-rate pair $(M,R)$ is achievable with demand privacy for $N>K=2$ if and only if 
		\begin{align}
		3M+NR \geq 2N, \quad  3M+(N+1)R \geq 2N+1, \quad M+NR \geq N. \label{Eq_AnyNK2_region}
		\end{align}
		\label{AnyNK2_region}
	\end{enumerate}
\end{theorem}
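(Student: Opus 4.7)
The plan is to prove both directions (achievability and converse) for the two cases $N=K=2$ and $N>K=2$ separately. For the achievability, I would identify the vertices of the polyhedron defined by the stated inequalities intersected with $M,R\geq 0$, and match each vertex to an explicit private scheme; memory-sharing (which convexifies the achievable region) then fills in the rest. For $N=K=2$ the vertices are $(0,2)$, $(1/3,4/3)$, $(4/3,1/3)$, $(2,0)$: the first and last by the trivial schemes, $(1/3,4/3)$ by Scheme~A as in Example~\ref{Ex_cach_random}, and $(4/3,1/3)$ by Scheme~A at $r=2$ in Theorem~\ref{Corl_reduced_usrs}. For $N>K=2$ the vertices are $(0,2)$, $(N/3,1)$, $(N^{2}/(2N-1),(N-1)/(2N-1))$, $(N,0)$, and I would match each to an instance of Scheme~A, B, or~C at an appropriate parameter choice, again combined via memory-sharing.

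For the converse, two of the three inequalities are classical in spirit and do not require privacy. The bound $M+NR\geq N$ follows by considering user~$0$'s cache together with $N$ transmissions corresponding to $N$ different values of $D_{0}$ (with $D_{1}$ fixed), since these must collectively determine all $N$ files; this is a single-cache cut-set argument. The bound $3M+NR\geq 2N$ (resp.\ $2M+R\geq 2$) follows from a two-cache cut-set argument combining $H(Z_0)+H(Z_1)$ with a few well-chosen transmissions and invoking the decoding constraint~\eqref{Eq_decod_cond}; these are adaptations of the Maddah-Ali–Niesen / Tian style bounds and remain valid under privacy since privacy only shrinks the feasible set.

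The crux of the theorem, and the expected main obstacle, is the novel privacy-induced bound $3M+3R\geq 5$ for $N=K=2$ and $3M+(N+1)R\geq 2N+1$ for $N>K=2$. My approach would exploit~\eqref{Eq_instant_priv}, which implies that the conditional distribution of $(Z_k,X)$ given $D_k$ does not depend on $D_{\tilde k}$. Hence transmissions for different demand vectors that agree in $D_k$ can be coupled so as to appear identical to user $k$, even though user $\tilde k$ must decode different files from them. I would exploit this by considering a carefully chosen collection of demand vectors (e.g., $(0,0),(0,1),(1,0)$ for $N=K=2$, with analogous enlargements for $N>2$), writing $H(Z_0)+H(Z_1)+H(X_{\bar d_1})+H(X_{\bar d_2})+\ldots$ and using the privacy-induced invariance to replace certain terms by common random variables, then combining with the decodability equations for each demand vector via submodularity / Han-type arguments. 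The final inequality emerges as a linear combination, with the coefficient $3$ on $M$ reflecting the three caches/transmissions forced to be jointly compatible by privacy. The principal difficulty is that privacy does not directly yield a conditional-entropy equality usable in standard cut-set chains; it must first be translated (via coupling or auxiliary random variables) into an entropy identity, and the counting of how many transmissions/caches one can ``charge'' in this manner is what produces the specific constants $3$, $N+1$, and $2N+1$.
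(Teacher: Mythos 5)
Your converse sketch is on the right track and matches the paper's key idea: the paper's Lemma~\ref{Lem_eqiuv_distrbn2} is exactly the privacy-induced distributional identity you describe, namely that $(X,Z_k,W_j \mid D_k=j)$ is identically distributed for every value of $D_{\tilde k}$, and the paper then threads this through a long entropy chain combined with decodability to extract $3M+(N+1)R\geq 2N+1$. You correctly flag that pinning down the precise weighted combination (the paper sums $NH(Z_0)+H(Z_1)+2H(X_{j,j})+\sum_{i\neq j}H(X_{j,i})$ over $j$) is the crux; you do not carry it out, so the converse remains a sketch rather than a proof, but the direction is right.

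The achievability plan has two concrete gaps. First, for $N=K=2$ you assign the vertex $(1/3,4/3)$ to ``Scheme~A as in Example~\ref{Ex_cach_random},'' but Scheme~A is specifically the YMA-for-$NK-K+1$-users instantiation of Theorem~\ref{Thm_genach}, and its lower convex envelope for $N=2,K=3$ passes through $(2/3,1)$ and only gives $R=3/2$ at $M=1/3$. Example~\ref{Ex_cach_random} instead uses Tian's $(2,4)$ scheme as the $\DRS$-non-private blackbox; the paper formalizes this via Proposition~\ref{Cor_Type} and Tian's Proposition~7, not via Scheme~A. Your identification of $(4/3,1/3)$ with Scheme~A at $r=2$ is correct. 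Second, and more seriously, for $N>K=2$ you propose to hit $(N/3,1)$ and $(N^2/(2N-1),(N-1)/(2N-1))$ with ``an instance of Scheme~A, B, or~C at an appropriate parameter choice.'' None of A, B, C reaches these corner points: for instance at $N=3$, Scheme~A gives $(6/5,1)$ rather than $(1,1)$, Scheme~B gives $(M,2(1-M/3))$ which yields $R=4/3$ at $M=1$, and Scheme~C (per Theorem~\ref{Thm_PR_SR}) evaluates to strictly dominated points near $(1,1)$. The paper introduces two new constructions, Scheme~D (uncoded prefetching with randomly permuted cache symbols and a permuted three-symbol broadcast) for $(N/3,1)$ and Scheme~E (an MDS-coded cache, again with hidden permutations) for $(N^2/(2N-1),(N-1)/(2N-1))$. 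Without these, the achievability direction of part~\ref{AnyNK2_region} would not go through, so this is a genuine missing ingredient rather than a matter of detail.
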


 Next we give some outlines of the achievability schemes and the converse to obtain Theorem~\ref{Thm_exact_region}.

\noindent \underline{Outline of converse:}
Any $(M,R)$ pair that is achievable under no privacy requirement needs to satisfy the first and third inequalities in~\eqref{Eq_N2K2_region} for $N=K=2$~\cite{Maddah14}. Similarly, for $N>K=2$, any $(M,R)$ pair satisfies the first and third inequalities in~\eqref{Eq_AnyNK2_region} under no privacy~\cite{Tian2018}. Since any converse bound with no privacy requirement is also a converse bound with privacy. So, to prove the converse result, we need to show only the second inequality in~\eqref{Eq_AnyNK2_region} and  in~\eqref{Eq_N2K2_region}.
Furthermore, observe that substituting $N=2$ in the second inequality in~\eqref{Eq_AnyNK2_region} gives the second inequality in~\eqref{Eq_N2K2_region}. So, to show the converse of Theorem~\ref{Thm_exact_region}, we prove that for $N \geq K=2$, any $(M,R)$ pair under privacy satisfies the second inequality in~\eqref{Eq_AnyNK2_region}.  Full proof of the converse can be found in Subsection~\ref{Sec_proof_exact}.

\noindent \underline{Outline of  achievability:} To show the achievability of the region in \eqref{Eq_N2K2_region}, we use a particular non-private scheme from~\cite{Tian2018}. Using this particular non-private scheme, we can show that for any $(M,R)$ pair in the region by \eqref{Eq_N2K2_region}, there exists an $(2,4, M,R)$-\DRS-non-private scheme. Then, the achievability follows from Theorem~\ref{Thm_genach}. Details can be found in Subsection~\ref{sec_exact_achv}.

To prove  the achievability of the region given by~\eqref{Eq_AnyNK2_region} for $K=2$ and $N>2$, we show  the achievability of two corner points $(\frac{N}{3},1)$ and $(\frac{N^2}{2N-1}, \frac{N-1}{2N-1})$. The corner points of the memory-rate curve given by \eqref{Eq_AnyNK2_region} are $(0,2), (\frac{N}{3},1), (\frac{N^{2}}{2N-1}, \frac{N-1}{2N-1})$ and $(N,0)$. The achievability of  the points $(0,2)$ and $(N,0)$ follows from Theorem~\ref{th:basic}. We propose two  schemes, Scheme D and Scheme E which achieve the pairs  $(\frac{N}{3},1)$ and $(\frac{N^{2}}{2N-1}, \frac{N-1}{2N-1})$, respectively. Scheme D achieves memory-rate pair $(\frac{N}{3},1)$ using uncoded prefetching while Scheme E achieves memory-rate pair $(\frac{N^{2}}{2N-1}, \frac{N-1}{2N-1})$ using coded prefetching. In Example~\ref{Ex_exact_first_point}, we describe Scheme D for $N=3,K=2, M=1$. Then in Example~\ref{Ex_exact_second_point}, we describe  Scheme E for $N=3,K=2, M=2/5$.  General versions of both these schemes for $N>2$ and $K=2$  are provided in Subsection~\ref{sec_exact_achv}.

\begin{example}
	\label{Ex_exact_first_point}
 We describe  Scheme D for $N=3$ and $K=2$ which achieves rate \( 1\) for \(M=\frac{N}{3} = 1\). File $W_{i}, i \in [0:2]$ is divided into 3 disjoint  parts of equal size, i.e., \mbox{$W_{i} = (W_{i,0},W_{i,1},W_{i,2})$}. 
 
 \underline{Caching: }The server picks 2 independent permutations \(\pi_{0}\) and \(\pi_{1}\) uniformly at random from the symmetric group of permutations of [0:2]. The server places \(\pi_{0}(W_{0,0},W_{1,0},W_{2,0})\)  and \(\pi_{1}(W_{0,1},W_{1,1},W_{2,1})\) in the caches of user 0 and user 1, respectively. Each of these permutation functions \(\pi_{0}\) and \(\pi_{1}\) are unknown to both the users. Some additional random bits are also shared with each user in the caching phase.
 
 \underline{Delivery:} The server picks permutation  \(\pi_{2}\)  uniformly at random from the symmetric group of permutations of [0:2] which is independent of \(\pi_{0}\), \(\pi_{1}\).
 The main payload \(X'\) is given by
 \begin{align*}
 X' = 
 \begin{cases}
 \pi_{2}(W_{D_{0},1} \oplus W_{D_{1},0}, W_{D_{0},2}, W_{D_{1},2}) \qquad \qquad \qquad \text{if } D_{0}\neq  D_{1}\\
\pi_{2}(W_{D_{0},1} \oplus W_{m,0}, W_{D_{0},2}, W_{D_{1},0} \oplus W_{m,1})  \qquad \quad \text{if } D_{0} =D_{1}
 \end{cases}
 \end{align*}
where \(m = (D_{0}+1)\) mod $3$. 
 \comment{
 The main payload $X'$ alone is not enough for decoding because none of the permutation functions are known to them.
}
To enable decoding at each user, the server also transmits some auxiliary transmission $J=(J_1,J_2,J_3)$ of negligible rate. Each $J_j,j = 1,2,3$ can be further divided into 2 parts, i.e., $J_j =(J_{j,0},J_{j,1})$, where $J_{j,k},k\in \{0,1\}$ is meant for user $k$ for all $j=1,2,3$. Using a one-time pad which uses the pre-shared random bits, the server ensures that $J_{j,k}$ can be decoded only by user \(k\) and it is kept secret from the other user.  These parts are used as follows:
 \begin{enumerate}
     \item  $J_{1,k}$ conveys the position of \(W_{D_{k},k}\) in user $k$'s cache.
     \item  $J_{2,k}$ gives the positions of the coded and uncoded parts of \(X'\) involving $W_{D_k}$ to user $k$. Specifically, $J_{2,k}$ reveals the positions of \(W_{D_{0},1} \oplus W_{D_{1},0}\) and \(W_{D_{k},2}\) to user $k$ when \(D_{0} \neq D_{1}\), and the positions of \(W_{D_{k},\tilde{k}} \oplus W_{m,k}\) and \(W_{D_{k},2}\) when \(D_{0} = D_{1}\), where $\tilde{k} = (k+1)$ mod 2.
     \item  $J_{3,k}$ discloses the position of \(W_{D_{\tilde{k}},k}\) if \(D_{0} \neq D_{1}\) and \(W_{m,k}\) if \(D_{0} = D_{1}\) in her cache to user $k$.
 \end{enumerate}
 
 \underline{Decoding:} User $k$ decodes \(W_{D_{k}}\) as follows. \(W_{D_{k},k}\) can be obtained from the cache since  she knows its position from $J_{1,k}$. User $k$ recovers \(W_{D_{k},2}\) from the delivery since she knows its position in \(X'\) from $J_{2,k}$. The remaining segment \(W_{D_{k},\tilde{k}}\) is available in coded form in $X'$. The segment that \(W_{D_{k},\tilde{k}}\) is XOR-ed with, is available in the cache of user $k$, and its position in the cache is revealed by $J_{3,k}$. Thus, user $k$ retrieves all three segments of file $W_{D_k}$.
 
 \underline{Privacy:} Now we give an outline of how $D_{1}$ is kept secret from user 0. From the transmission, we can observe that for both the cases, i.e., \(D_{0} \neq D_{1}\) and \(D_{0} = D_{1}\), user 0 receives \(W_{D_{0},2}\) in the uncoded form and \(W_{D_{0},1}\) coded with another symbol. Also, in both the cases, the remaining symbol is like a sequence of \(\frac{F}{3}\) random bits to user 0, because it contains either \(W_{D_{1},2}\) or \(W_{m,1}\) which she doesn't  have access to. Thus, even though the structure of the broadcast is different in the two cases, any user cannot  differentiate between them. Further, given \(J_{1,0}\), any of the remaining 2 symbols can occupy the remaining 2 positions in the cache with equal likelihood. Thus, although user 0 can use one of these symbols, i.e., the  symbol XOR-ed with \(W_{D_{0},1}\), for decoding with the help of \(J_{3,0}\), the identity of the symbol is not known because \(J_{3,0}\) only discloses the position of that symbol in user 0's cache. Due to the symmetry of the scheme, similar privacy arguments apply for user 1.
\comment{
One should note that even though the broadcast is significantly different for the cases of \(D_{0}=D_{1}\) and \(D_{0} \neq D_{1}\), from the perspective of one user it appears the same(For example, user 0 always gets \(W_{D_{0},2}\) in uncoded form and \(W_{D_{0},3}\) in coded form). Also, user 0 doesn't know which symbol is XOR-ed with \(W_{D_{0},3}\) because only the position of the XOR-ed symbol in user 0's cache is revealed in the  auxiliary transmission. These points are crucial in meeting the privacy constraints in this scheme.
}    
\comment{
\subsection{An Example of Scheme D for $(N,K,M) = (3,2,\frac{N^{2}}{2N-1})$}

	Now we describe a demand-private scheme for $N=3$ and $K=2$ which achieves rate \( \frac{N-1}{2N-1} = \frac{2}{5}\), for \(M=\frac{N^2}{2N-1} = \frac{9}{5}\). File $W_{i}, i \in [0:2]$ is divided into \(2N-1 = 5\) disjoint  parts of equal size, i.e., $W_{i} = (W_{i,0},W_{i,1},W_{i,2},W_{i,3},W_{i,4})$. 
	We encode each file \(W_{i}\) using a \((3N-2, 2N-1) = (7,5)\) MDS code such that the file can be reconstructed using any \(5\) symbols out of 7.
	 The \(7\) MDS coded symbols of file \(W_{i}\) are denoted by \(F_{i,0}, F_{i,0,0},F_{i,0,1},F_{i,1,0},F_{i,1,1},F_{i,2,0}\) and \(F_{i,2,1}\). Now we define tuples \(\cL_{0},\cL_{1}\) and \(\cL_{2}\) as follows,
	 $$   \cL_{j} = (F_{i,0}, F_{i,j,0}, F_{i,j,1})_{i \in [0:2]}  \qquad \quad \forall j \in [0:2]\\
	 $$
	 
	 \begin{remark}
	 We note that in~\cite{Wan19}, MDS codes were used
	 to obtain some demand-private schemes. Even though Scheme D is also based on MDS codes, it is different from  the one proposed~\cite{Wan19} in many aspects. In our scheme, each user is not fully aware of her own cache content while that is not the case in~\cite{Wan19}. We also note that the MDS scheme in~\cite{Wan19} does not achieve the memory-rate point $\left( \frac{N^{2}}{2N-1}, \frac{N-1}{2N-1}\right)$ for $N>2$ and $K=2$.
	 \end{remark}

	\underline{Caching:} The server picks 2 independent permutations \(\pi_{0}\) and \(\pi_{1}\) uniformly at random from the symmetric group of permutations of [0:8]. The server also picks a random number \(U_{0}\) which is uniformly distributed in \(\{0,1,2\}\) and places \(\pi_{0}(\cL_{U_0})\) in the cache of user 0. Similarly, the server picks random number \(U_1\) which is uniformly distributed in \(\{0,1,2\}\backslash \{U_0\}\) and places \(\pi_{1}(\cL_{U_1})\) in user 1's cache.
	Similar to the previous scheme, each of these permutations \(\pi_{0}\) and \(\pi_{1}\) are unknown to both the users. Unlike the permutation functions, \(U_i\) is shared with user \(i\) by placing it in her cache and thus kept secret from the other user.
	\underline{Delivery: }The main payload, \(X'\) is given by,
	$$
	X' \hspace{-0.6mm} = \hspace{-0.5mm} \left\{
	\begin{array}{lcl}
	\hspace{-2.3mm} (F_{D_{0},U_1,0} \oplus F_{D_{1},U_0,0}, F_{D_{0},U_1,1} \oplus F_{D_{1},U_0,1}) \enspace \hspace{-0.6mm} \text{if } \hspace{-0.3mm} D_{0} \hspace{-0.5mm} \neq \hspace{-0.5mm} D_{1}\\
	\\
	\hspace{-2.3mm} (F_{D_{0},V,0} \oplus F_{m_{1},0}, F_{D_{0},V,1} \oplus F_{m_{2},0}) \hspace{-0.6mm} \enspace \qquad \quad \text{if } \hspace{-0.3mm} D_{0} \hspace{-0.5mm} = \hspace{-0.5mm} D_{1}
	\end{array}
	\right.
	$$
	where, \(m_{i} = (D_{0}+i)\) mod \(3\), and \(V = [0:2]\backslash\{U_0,U_1\}\).
	
	\comment{
		As in the previous example, the main payload $X'$ alone is not enough for decoding because the permutation functions are private. 
	}
	To enable decoding at each user, the server also transmits some auxiliary transmission $J=(J_1,J_2,J_3)$ of negligible rate. Each $J_j,j = 1,2,3$ can be further divided into 2 parts, i.e., $J_j =(J_{j,0},J_{j,1})$. where $J_{j,i},i\in \{0,1\}$ is meant for user $i$. Using a one time pad which uses keys shared through the cache,  the server ensures that $J_{j,i}$ can be decoded only by user \(i\) and it is kept secret from the other user.  These parts are used as follows:
	\begin{enumerate}
		\item  $J_{1,i}$ conveys the positions of \(F_{D_{i},0},F_{D_{i},U_i,0}\) and \(F_{D_{i},U_i,1}\) in user i's cache.
		\item  $J_{2,i}$ discloses the positions of symbols in user $i$'s cache which are necessary for decoding the relevant symbols from $X'$. Specifically, if \(D_{0} \neq D_{1}\), $J_{2,i}$ discloses the positions of \(F_{D_{\tilde{i}},U_i,0}\)
		and \(F_{D_{\tilde{i}},U_i,1}\),  
		 where \(\tilde{i} = (i+1)\) mod 2. In case if \(D_{0} = D_{1}\), then the positions of \(F_{m_{1},0}\)
		and \(F_{m_{2},0}\) in the cache of user \(i\) are disclosed through $J_{2,i}$.
		\item  $J_{3,i}$ gives the value of the random variable \(T_{i}\) which takes the value \(U_{\tilde{i}}\) when \(D_{0}\neq D_{1}\) and \(V\) when \(D_{0} = D_{1}\).
	\end{enumerate}
	
	\underline{Decoding: }Now, we discuss the decoding of file \(W_{D_{i}}\) at user $i$, $i=0,1$. \(F_{D_{i},i},F_{D_{i},U_i,0}\) and \(F_{D_{i},U_i,1}\) can be retrieved directly from the cache since its positions are obtained from $J_{i,1}$. The positions of \(F_{D_{\tilde{i}},U_i,0}\) and \(F_{D_{\tilde{i}},U_i,1}\) or \(F_{m_{1},0}\) and \(F_{m_{2},0}\) for the cases of \(D_{0} \neq D_{1}\) and \(D_{0} = D_{1}\) respectively are available through $J_{i,2}$. Thus, using \(X'\) user $i$ can recover \(F_{D_{i},U_{\tilde{i}},0}\) and \(F_{D_{i},U_{\tilde{i}},1}\) or \(F_{D_{i},V,0}\) and \(F_{D_{i},V,1}\) accordingly. 
	Note that because user $i$ does not know whether \(D_{0} \neq D_{1}\) or \(D_{0} = D_{1}\),
	she also does not know whether she has recovered \(F_{D_{i},U_{\tilde{i}},0}\) and \(F_{D_{i},U_{\tilde{i}},1}\) or \(F_{D_{i},V,0}\) and \(F_{D_{i},V,1}\). 
	This information is available in $J_{i,3}$ which gives the value of \(T_{i}\). That means now user $i$ has knowledge of \(F_{D_{i},T_{i},0}\) and \(F_{D_{i},T_{i},1}\). Note that because \(T_{i} \neq U_{i}\), user $i$ has access to 5 distinct symbols of the MDS code namely, \(F_{D_{i},0},F_{D_{i},U_i,0},F_{D_{i},U_i,1},F_{D_{i},T_{i},0}\) and \(F_{D_{i},T_{i},1}\). Thus the file \(W_{D_{i}}\) can be decoded. 
	
	\underline{Privacy:} Now, we discuss how the demand of user 1 is perfectly secret for user 0. It is important to note that, from the knowledge of the tuple \((U_0,D_{0},T_{0})\), user 0 cannot find out the demand of user 1. For example, if \((U_0,D_{0},T_{0}) = (0,0,1)\), \((U_1,D_{1})\) can take 3 distinct values namely, \((1,1)\), \((1,2)\) and \((2,0)\), all 3 possibilities being equally likely, implying that even after knowing \((U_0,D_{0},T_{0})\), \(D_{1}\) can take all 3 values with equal likelihood. 
	Since \(\pi_0(\cdot)\) is not shared with user 0, and because the other auxiliary transmissions i.e. $J_{0,1}$ and $J_{0,2}$ reveal only the positions of the 5 relevant symbols in user 0's cache, they are independent of \(D_1\) and can take any of the \(9 \choose {5}\) possible values depending on \(\pi_0(\cdot)\). Hence, it is clear that they do not give away the demand of user 1.  These arguments are crucial in ensuring the demand-privacy of user 1 and similar arguments hold for the other user as well due to the symmetry of this scheme.
	}
\end{example}

\begin{example}
	\label{Ex_exact_second_point}
	Now we describe Scheme E for $N=3$ and $K=2$ which achieves rate \( \frac{N-1}{2N-1} = \frac{2}{5}\), for \(M=\frac{N^2}{2N-1} = \frac{9}{5}\). File $W_{i}, i \in [0:2]$ is partitioned into \(2N-1 = 5\)  parts of equal size, i.e., $W_{i} = (W_{i,1},W_{i,2},W_{i,3},W_{i,4},W_{i,5})$. 
	We encode each file \(W_{i}\) using a \((3N-2, 2N-1) = (7,5)\) MDS code (i.e. each file is split into \(5\) pieces of size \(\frac{F}{5}\) bits each, which are then encoded using \((7,5)\) MDS code such that each of \(7\) MDS coded symbols has \(\frac{F}{5}\) bits). Each file can be reconstructed using any \(5\) MDS symbols. The \(7\) symbols of file \(W_{i}\) are denoted by \(F_{i,0}, F_{i,0,0},F_{i,0,1},F_{i,1,0},F_{i,1,1},F_{i,2,0}\) and \(F_{i,2,1}\).	Further, we define tuples \(\cL_{0},\cL_{1}\) and \(\cL_{2}\) as follows:
	$$   \cL_{j} = (F_{i,0}, F_{i,j,0}, F_{i,j,1})_{i \in [0:2]},   \quad \forall j \in [0:2].\\
	$$
	
	\underline{Caching:} The server picks 2 independent permutation functions \(\pi_{0}\) and \(\pi_{1}\) uniformly at random from the symmetric group of permutations of [0:8]. The server also picks a random number \(U_{0}\) which is uniformly distributed in \(\{0,1,2\}\) and places \(\pi_{0}(\cL_{U_0})\) in the cache of user 0. Similarly, the server picks random number \(U_1\) which is uniformly distributed in \(\{0,1,2\}\backslash \{U_0\}\) and places \(\pi_{1}(\cL_{U_1})\) in user 1's cache.
	Similar to Scheme D, each of these permutation functions \(\pi_{0}\) and \(\pi_{1}\) are private to both the users. Unlike the permutation functions, \(U_k\) is shared with user $k \in \{0,1\}$ by placing it in her cache and thus kept secret from the other user.
	
	\underline{Delivery:} The main payload, \(X'\) is given by
	$$
	X' = \left\{
	\begin{array}{lcl}
	(F_{D_{0},U_1,0} \oplus F_{D_{1},U_0,0}, F_{D_{0},U_1,1} \oplus F_{D_{1},U_0,1}) & \text{if } D_{0}\neq D_{1}\\
	\\
	(F_{D_{0},V,0} \oplus F_{m_{1},0}, F_{D_{0},V,1} \oplus F_{m_{2},0}) & \text{if } D_{0} = D_{1}
	\end{array}
	\right.
	$$
	where, \(m_{t} = (D_{0}+t)\) mod \(3\), and \(V = [0:2]\backslash\{U_0,U_1\}\).
	
	\comment{
		As in the previous example, the main payload $X'$ alone is not enough for decoding because the permutation functions are private. 
	}
	To enable decoding at each user, the server also transmits some auxiliary transmission $J=(J_1,J_2,J_3)$ of negligible rate. Each $J_j,j = 1,2,3$ can be further divided into 2 parts, i.e., $J_j =(J_{j,0},J_{j,1})$. where $J_{j,k},k\in \{0,1\}$ is meant for user $k$ for all $j=1,2,3$. Using a one-time pad, the server ensures that $J_{j,k}$ can be decoded only by user \(k\) and it is kept secret from the other user.  These parts are used as follows:
	\begin{enumerate}
		\item  $J_{1,k}$ conveys the positions of \(F_{D_{k},0},F_{D_{k},U_k,0}\) and \(F_{D_{k},U_k,1}\) in user $k$'s cache.
		\item  $J_{2,k}$ discloses the positions of symbols \(F_{D_{\tilde{k}},U_k,0}\)
		and \(F_{D_{\tilde{k}},U_k,1}\) in user $k$'s cache 
		if \(D_{0} \neq D_{1}\), where \(\tilde{k} = (k+1)\) mod 2. If \(D_{0} = D_{1}\), then $J_{2,k}$ reveals the positions of \(F_{m_{1},0}\)
		and \(F_{m_{2},0}\) in the cache for user \(k\).
		\item  $J_{3,k}$ gives the value of the random variable \(T_{k}\) which takes the value \(U_{\tilde{k}}\) if \(D_{0}\neq D_{1}\), and \(V\) if \(D_{0} = D_{1}\).
	\end{enumerate}
	
	\underline{Decoding:} Let us consider the decoding of \(W_{D_{k}}\) at user $k$. First, the user retrieves \(F_{D_{k},0},F_{D_{k},U_k,0}\) and \(F_{D_{k},U_k,1}\) directly from the cache since its positions are obtained from $J_{1,k}$. The positions of \(F_{D_{\tilde{k}},U_k,0}\) and \(F_{D_{\tilde{k}},U_k,1}\) when  \(D_{0} \neq D_{1}\), and the positions of \(F_{m_{1},0}\) and \(F_{m_{2},0}\) when \(D_{0} = D_{1}\), are available through $J_{2,k}$. Thus, using \(X'\) user $k$ can recover \(F_{D_{k},U_{\tilde{k}},0}\) and \(F_{D_{k},U_{\tilde{k}},1}\), or \(F_{D_{k},V,0}\) and \(F_{D_{k},V,1}\), accordingly. Note that because user $k$ does not know whether \(D_{0} \neq D_{1}\) or \(D_{0} = D_{1}\), she also does not know whether she has recovered \(F_{D_{k},U_{\tilde{k}},0}\) and \(F_{D_{k},U_{\tilde{k}},1}\) or \(F_{D_{k},V,0}\) and \(F_{D_{k},V,1}\). This information is available through $J_{3,k}$ which gives the value of \(T_{k}\). Thus, user $k$ has knowledge of \(F_{D_{k},T_{k},0}\) and \(F_{D_{k},T_{k},1}\). Since \(T_{k} \neq U_{k}\), user $k$ has access to 5 distinct symbols of the MDS code namely, \(F_{D_{k},0},F_{D_{k},U_k,0},F_{D_{k},U_k,1},F_{D_{k},T_{k},0}\) and \(F_{D_{k},T_{k},1}\). Thus, the file \(W_{D_{k}}\) can be retrieved.
	
	\underline{Privacy:} Now we describe how $D_{1}$ remains private to user 0. It is important to note that, from the knowledge of the tuple \((U_0,D_{0},T_{0})\), user 0 cannot find out the demand of user 1. For example, if \((U_0,D_{0},T_{0}) = (0,0,1)\), \((U_1,D_{1})\) can take 3 distinct values namely, \((1,1)\), \((1,2)\) and \((2,0)\), all 3 possibilities being equally likely, implying that even after knowing \((U_0,D_{0},T_{0})\), \(D_{1}\) can take all 3 values with equal likelihood. 
	Since \(\pi_0\) is not shared with user 0, and since the other auxiliary transmissions ($J_{1,0}$ and $J_{2,0}$) reveal only the positions of the 5 relevant symbols in user 0's cache, they are independent of \(D_1\) and can take any of the \(9 \choose {5}\) possible values depending on \(\pi_0\). Hence, it is clear that they do not reveal the demand of user 1.  These arguments are crucial in ensuring the demand-privacy of user 1 and similar arguments hold for the other user as well due to the symmetry of this scheme.
\end{example}

 \section{Proofs}
\label{sec_proofs}
\subsection{Proof of Theorem~\ref{Thm_genach}}
\label{Sec_proof_Thm_genach}

Let us consider any $(N,NK,M,R)$ \DRS-non-private scheme. 
Let $C_k^{(np)}, k\in [0:NK-1]$ be the cache encoding functions, $E^{(np)}$ be the
broadcast encoding function, and $G_k^{(np)}, k\in [0:NK-1]$ be the decoding
functions for the given $(N,NK,M,R)$ \DRS-non-private scheme.
We now present a construction of an $(N,K,M,R)$-private scheme from
the given $(N,NK,M,R)$ \DRS-non-private scheme. 

\underline{Caching:} For $k\in [0:K-1]$ and $S_k\in [0:N-1]$, the $k$-th user's cache
encoding function is given by
\begin{eqnarray}
&& C_k (S_k, \bar{W}) := C_{kN+S_k}^{(np)} (\bar{W}).
\label{eq:cachesch}
\end{eqnarray}
The $k$-th user's cache encoding function is taken to be the same
as that of the $S_k$-th user in the $k$-th stack in the corresponding $(N,NK)$
caching problem.
The cache content is given by $Z_k=(C_k (S_k, \bar{W}), S_k)$.

\underline{Delivery:} To define the broadcast encoding, we need some
new notations and definitions.
Let $\Psi: [0:N-1]^N \rightarrow [0:N-1]^N$ denote the cyclic shift operator,
such that $\Psi (t_1,t_2,\ldots,t_N)=(t_N,t_1,\ldots,t_{N-1})$.
Let us denote a vector $\mathbb{I}:=(0,1,\ldots,N-1)$.
Let us also define 
\begin{align*}
\bar{S}\ominus \bar{D}:=(S_0\ominus D_0, S_1\ominus D_1, \ldots, S_{K-1}\ominus D_{K-1})
\end{align*}
where $S_k\ominus D_k$ denotes the difference of $S_k$ and $D_k$ modulo $N$.
For a given $\bar{D}\in [0:N-1]^K$, we define an expanded demand vector
for the non-private problem as:
\begin{align*}
\bar{D}^{(np)}(\bar{D},\bar{S})=(\Psi^{S_0\ominus D_0}(\mathbb{I}),
\ldots,\Psi^{S_{K-1}\ominus D_{K-1}}(\mathbb{I}))
\end{align*}
where $\Psi^i$ denotes
the $i$-times cyclic shift operator.

The broadcast encoding function for the $(N,K,M,R)$-private scheme is defined by
\begin{eqnarray}
&& E (\bar{W},\bar{D},\bar{S}) := E^{(np)}(\bar{W},\bar{D}^{(np)}(\bar{D},\bar{S})).
\label{eq:trencsch}
\end{eqnarray}
Let us denote $X_1=E (\bar{W},\bar{D},\bar{S})$. In the private scheme,
the server transmits $X=(X_1, \bar{S}\ominus \bar{D})$.

\underline{Decoding:}
User $k\in[0:K-1]$ uses the decoding function of the $(kN+S_k)$-th user in the
non-private scheme, i.e., 
\begin{align}
G_k(D_k, S_k, \bar{S}\ominus \bar{D}, X_1,Z_k)
&:=G_{kN+S_k}^{(np)}(\bar{D}^{(np)}(\bar{D},\bar{S}), X_1, Z_k).
\label{eq:decsch}
\end{align}
Here the decoder computes $\bar{D}^{(np)}(\bar{D},\bar{S})$ from
$\bar{S}\ominus \bar{D}$.

From \eqref{eq:cachesch}, \eqref{eq:trencsch}, and \eqref{eq:decsch}, it is clear that the decoder of the $k$-th user outputs the same file requested by the $S_k$-th virtual user of the $k$-th stack in the non-private scheme.
The index of the output file is the $(kN+S_k)$-th component in
$\bar{D}^{(np)}(\bar{D},\bar{S})$, i.e., $S_k\ominus (S_k\ominus D_k)=D_k$. 
Thus, the $k$-th user recovers its desired file.

\underline{Proof of privacy:}
The proof of privacy essentially follows from the fact that $S_i$ acts as
one-time pad for $D_i$  which prevents any user $j \neq i$ getting any
information about $D_i$.
We now show that the derived $(N,K,M,R)$-private scheme satisfies the privacy condition~\eqref{Eq_instant_priv}. 
First we show that $I(\ND{k}; Z_k,D_k,\TRS | \bar{W}) =0$.
\begin{align}
 I(\ND{k}; Z_k,D_k,\TRS | \bar{W}) 
& = H(Z_k,D_k,\TRS | \bar{W}) - H(Z_k,D_k,\TRS | \bar{W}, \ND{k})  \nonumber \\
& \overset{(a)}{=}  H(S_k,D_k, \bar{S}\ominus \bar{D}, \bar{D}^{(np)}(\bar{D},\bar{S})  | \bar{W}) - H(S_k,D_k, \bar{S}\ominus \bar{D}, \bar{D}^{(np)}(\bar{D},\bar{S}) | \bar{W},  \ND{k}) \nonumber  \\
& \overset{(b)}{=}  H(S_k,D_k, \bar{S}\ominus \bar{D}  | \bar{W}) - H(S_k,D_k, \bar{S}\ominus \bar{D} | \bar{W},  \ND{k})  \nonumber\\
& \overset{(c)}{=}  H(S_k,D_k,\bar{S}\ominus \bar{D}) - H(S_k,D_k,\bar{S}\ominus \bar{D}  |  \ND{k}) \nonumber \\
& \overset{(d)}{=}  H(S_k,D_k,\bar{S}\ominus \bar{D}) - H(S_k,D_k, \bar{S}\ominus \bar{D}) \nonumber \\
& =0. \label{Eq_privcy4}
\end{align}
Here, $(a)$ follows since $X=(X_1, \bar{S} \ominus \bar{D}), Z_k = (C_k(S_k,\bar{W}),S_k)$, and also 
due to~\eqref{eq:trencsch}. In $(b)$, we used that $H(\bar{D}^{(np)}(\bar{D},\bar{S})| \bar{S}\ominus \bar{D}) =0$, and $(c)$ follows since $(S_k,D_k, \bar{S}\ominus \bar{D}, \ND{k})$ is independent of $\bar{W}$.
We get $(d)$ since $S_i \ominus D_i$ is independent of $D_i$ for all $i \in [0:K-1]$.
Using the fact that demands and files are independent, 
we get 
the following from~\eqref{Eq_privcy4} 
\begin{align*}
I(\ND{k}; Z_k,D_k,\TRS , \bar{W}) & = I(\ND{k};  \bar{W}) + I(\ND{k}; Z_k,D_k,\TRS | \bar{W}) \\
& = 0.
\end{align*}
This shows the derived scheme satisfies the privacy condition 
$I(\ND{k}; Z_k,D_k,\TRS) =0$.

The size of the cache in the $(N,K,M,R)$-private scheme differs only by the size of the shared key from the $(N,NK,M,R)$ \DRS-non-private scheme. For large enough file size $2^F$, this difference is negligible. Furthermore, we can observe that the rate of transmission in $(N,K,M,R)$-private scheme is the same as  that of the $(N,NK,M,R)$ \DRS-non-private scheme. This proves Theorem~\ref{Thm_genach}.

\subsection{Proof of Lemma~\ref{Thm_reduced_usrs}}
\label{Sec_proof_reducd_usrs}

Consider any
\begin{align*}
(M,R)= \left(\frac{Nr}{NK-K+1} , \frac{{NK-K+1\choose r+1}-{NK-K+1-N \choose r+1}}{{NK-K+1\choose r}}\right), \quad \text{ for } r \in \{0,1,\ldots,NK-K\}
\end{align*}
which is achievable for $N$ files and $NK-K+1$ users by the YMA scheme.  We will construct  an $(N,NK,M,R)$ \DRS-non-private scheme with these $(M,R)$ pairs. We denote the set of $NK$ users by $\cU = \cU_1\cup\cU_2$,
where 
\begin{align*}
\cU_1 := \{u'_1,u'_2, \ldots, u'_{K-1}\},
\end{align*}
and
\begin{align*}
\cU_2 := \{u_0,u_1, \ldots, u_{NK-K}\}.
\end{align*}
The users are partitioned into $K$ subsets/stacks $\cK_i,i\in [0:K-1]$ as given below.
\begin{align*}
\cK_0 &= \{u_j | j \in [0:N-1] \},\\
\cK_i &=\{u_j | i(N-1)+1\leq j \leq (i+1)N-1 \} \cup \{u'_i\} \quad \mbox{ for } 1 \leq i\leq K-1.
\end{align*}
The stacks of users are shown in Fig.~\ref{Fig_users}. 
We denote the demand of $u_i \in \cU_2$ by $d_i$, and the demand of $u'_i\in \cU_1$ by $d'_i$. Similarly, we denote the cache of 	$u_i \in \cU_2$ by $Z_i$, and the cache of user $u'_i\in \cU_1$ by $Z'_{i}$.
 We further define 
\begin{align*}
\cK'_i &:= \{j | u_j \in \cK_i \} \quad \text{for } i = 0,1,2,\ldots,K-1.\\
\cV_i  & := \cK'_{0} \cup \cK'_{i}, \quad \text{for } i = 1,2,\ldots,K-1.
\end{align*}

\begin{figure}[htb]
	\centering
	\includegraphics[scale=0.5]{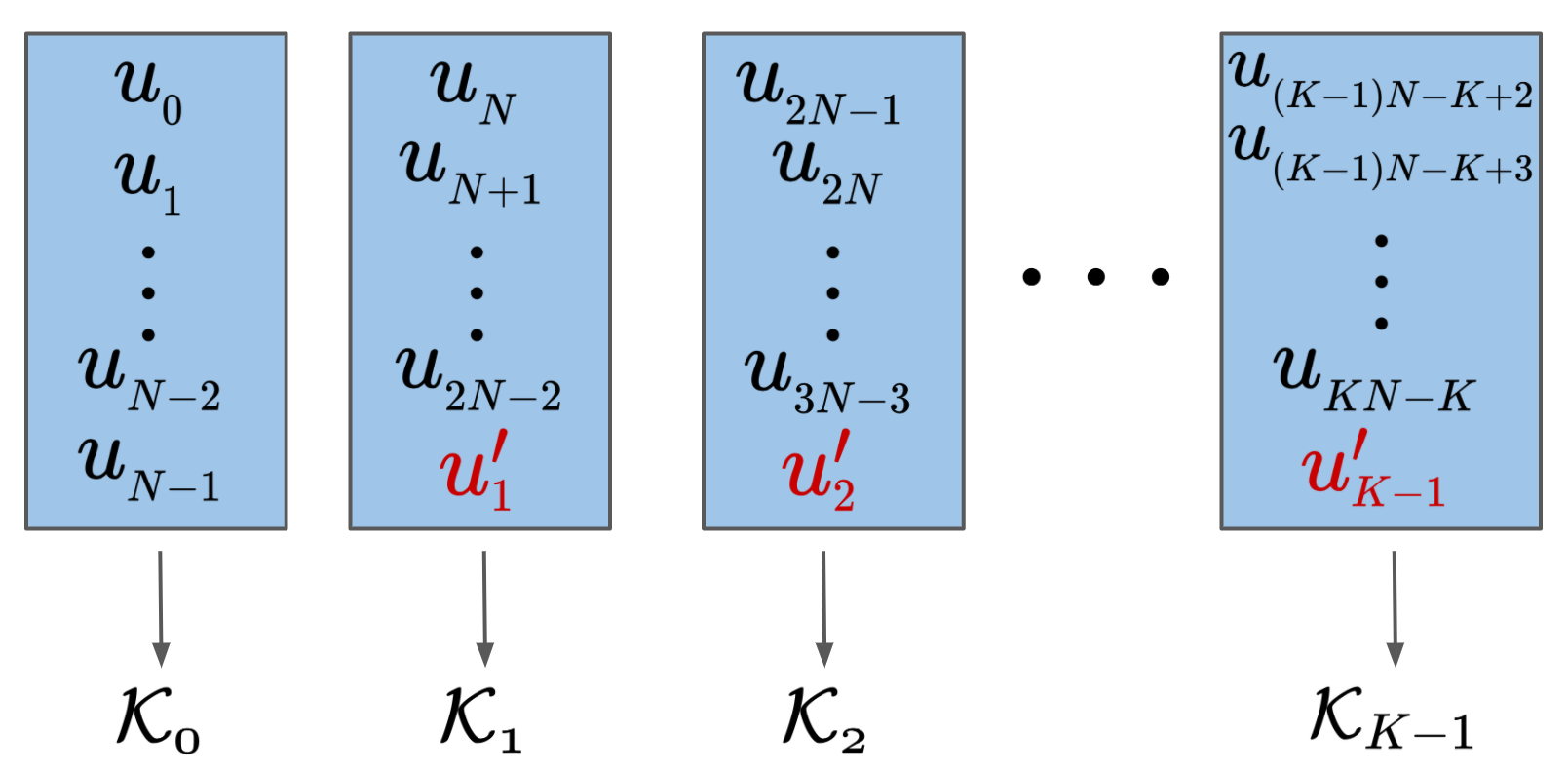}
	\caption{The stacks of users are shown vertically. The users from $\cU_2$ are shown in black color, whereas those from $\cU_1$ are shown in red color.}
	\label{Fig_users}
\end{figure}

\underline{Caching:} 
The users in $\cU_2$ has  the same prefetching as that of the users in the YMA scheme. 
Let $Z_m^{\text{YMA}},m=0,\ldots,NK-K$ denote the cache content of $m$-th user in the YMA scheme. The cache content of user $ u_j \in \cU_2$ is given by
\begin{align*}
Z_j = Z_{j}^{\text{YMA}}.
\end{align*}
To explain the prefetching of users in $\cU_1$, let
\begin{align*}
\cT := [0:NK-K].
\end{align*}
 In the YMA scheme, 
each file is divided into ${NK-K+1\choose r}$ subfiles and file $W_{i}, i \in [0:N-1]$ is given by
$$W_{i}=(W_{i,\cR})_{\cR \subset \cT, |\cR| = r}.$$
For $ \cS \subset \cT$ such that $|\cS| = r-1$, we  define  
 \begin{align}
 Z^{j}_{i, \cS} := \bigoplus_{u \in \cV_i  \setminus \cS \cap \cV_i } W_{j, \{u\} \cup \cS},  \text{ for } i \in [0:K-1] \setminus \{0\}, \; j \in [0:N-1]. \label{Eq_cach_subset}
 \end{align}
The cache content of user $u'_i \in \cU_1$ is given by
\begin{align*}
Z'_{i} = \left(Z^{j}_{i, \cS} | \cS \subset \cT \setminus \{0\}, |\cS| = r-1\right)_{ j \in [0:N-1]}.
\end{align*}
Since the size of one $Z^{j}_{i, \cS}$ is $\frac{F}{{NK-K+1\choose r}}$
and since there are ${NK-K\choose r-1}$ possible sets $\cS$, the number of bits stored at user $u'_i$  is given by
\begin{align}
 len(Z'_{i}) & = \frac{N F{NK-K\choose r-1}}{{NK-K+1\choose r}} \nonumber \\
 &= \frac{NFr}{NK-K+1} \nonumber\\
 & = FM.\nonumber
\end{align}

\underline{Delivery:} For a given $\bar{d} \in \DRS$, let $\bar{d}_1 = (d'_{i})_{u'_i \in \cU_1}$ and $\bar{d}_2 = (d_i)_{u_i \in \cU_2}$. The server chooses the transmission of the YMA scheme for $\cU_2$ under demand $\bar{d}_2$.
 The broadcast transmission is described using
\begin{align}
Y_{\cR}=\bigoplus_{u \in \cR} W_{d_{u}, \cR \setminus \{u\}}, \quad  \cR \subset \cT  \text{ such that } |\cR| = r+1. \label{Eq_YMA_symbols}
\end{align}
The broadcast transmission $X$ is given by 
\begin{align*}
X = \{Y_{\cR}| \cR \cap \cK_{0} \neq \phi \}.
\end{align*}
Note that in the above broadcast transmission, stack of users $\cK_0$ corresponds to the set of  ``leaders" described in the YMA scheme since each file $W_{i}, i \in [0:N-1]$ is demanded by exactly one user in $\cK_{0}$.
The size of each symbol $Y_{\cR} \in X$ is $\frac{F}{{NK-K+1\choose r}}$ and $X$ contains ${NK-K+1 \choose r+1} - {NK-K+1-N\choose r+1}$ such symbols. Thus, 
\begin{align*}
len(X) & = \frac{\left({NK-K+1 \choose r+1} - {NK-K+1-N\choose r+1}\right)F}{{NK-K+1\choose r}} \\
& = RF.
\end{align*}

\underline{Decoding: }  For all user in $\cU_2$, the decodability follows from the decodability of the YMA scheme. The decoding of users in $\cU_1$ is as follows.

\begin{remark}
	\label{Rem_SchemeA}
From the YMA scheme, we know that all symbols $Y_{\cR}$ such that $\cR \subset [0:NK-1]$ and $|\cR| = r+1$ can be recovered from $X$. Similarly the following lemma states that although all symbols $Z^{j}_{i, \cS}$ given in \eqref{Eq_cach_subset} are not a part of the cache of user $u'_{i} \in \cU_{1}$, each of these symbols can still be recovered from the cache contents.
\end{remark}

\begin{lemma}
 \label{Lem_subset_users}
	For $i \in [0:K-1]\setminus \{0\}$, all symbols $Z^{j}_{i, \cS} $, where $\cS \subset \cT , \;|\cS| = r-1$, and $j\in[0:N-1]$, can be recovered from the cache content $Z'_{i}$ of user $u'_i$.
\end{lemma}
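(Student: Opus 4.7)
The plan is as follows. Observe first that $Z'_i$ contains $Z^{j}_{i, \cS}$ directly only for those $\cS \subset \cT$ of size $r-1$ that do not contain the index $0$. Consequently, the only thing left to show is that $Z^{j}_{i, \cS}$ can also be reconstructed from the cache when $0 \in \cS$. Fix such an $\cS$ and write $\cS = \{0\} \cup \cS'$ where $\cS' \subset \cT \setminus \{0\}$ has size $r-2$.

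I would then conjecture, and subsequently verify, the identity
\begin{align*}
Z^{j}_{i, \cS} \;=\; \bigoplus_{u_0 \in \cV_i \setminus \cS} Z^{j}_{i, \cS' \cup \{u_0\}}.
\end{align*}
If this holds, the lemma is immediate, because each index $\cS' \cup \{u_0\}$ on the right has cardinality $r-1$ and avoids $0$ (the outer sum runs over $u_0 \in \cV_i \setminus \cS$, hence $u_0 \neq 0$), and is therefore a subset for which $Z^{j}_{i, \cdot}$ is explicitly stored in $Z'_i$.

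To verify the identity, I would expand the right-hand side using the definition in \eqref{Eq_cach_subset}, giving a double XOR of terms $W_{j, \{u, u_0\} \cup \cS'}$ indexed by ordered pairs $(u_0, u)$ with $u_0 \in \cV_i \setminus (\cS' \cup \{0\})$ and $u \in \cV_i \setminus (\cS' \cup \{u_0\})$. The crux is a pairing/cancellation argument: any unordered pair $\{u, u_0\} \subset \cV_i \setminus (\cS' \cup \{0\})$ with $u \neq u_0$ appears exactly twice in the double sum (once as $(u_0,u)$ and once as $(u,u_0)$) and therefore cancels modulo $2$, while the terms with $u = 0$ appear exactly once and survive (the symmetric case $u_0 = 0$ is excluded by the outer sum). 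Summing the surviving contributions $W_{j, \{0, u_0\} \cup \cS'} = W_{j, \{u_0\} \cup \cS}$ over $u_0 \in \cV_i \setminus \cS$ then reproduces $Z^{j}_{i, \cS}$ as defined.

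The combinatorics itself is routine; the only thing that needs a separate remark is the degenerate case $\cV_i \subset \cS$, in which both sides of the identity are empty XORs equal to zero, so the identity still holds. The argument is purely linear-algebraic over $\mathbb{F}_2$ and does not invoke any property of the YMA delivery phase or of the demand vectors, so no further obstacle is anticipated beyond choosing the right linear combination of cached symbols.
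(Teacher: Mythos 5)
Your proposal is correct and is essentially the paper's own argument: the paper also proves the identity $Z^{j}_{i,\cS} = \bigoplus_{t \in \cV_{i} \setminus \cS} Z^{j}_{i,(\cS\setminus\{0\}) \cup \{t\}}$ and establishes it by the same double-XOR pairing/cancellation over ordered pairs, merely presenting it as an isolate-and-substitute computation rather than stating the identity up front. The degenerate case you flag ($\cV_i\subset\cS$) is handled implicitly in the paper as well since both sides are vacuous, so there is no substantive difference.
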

\begin{proof}
	See Appendix~\ref{Sec_proof_subset_lem}.
\end{proof}
 Next we show how user $u'_i \in \cU_1$ obtains $W_{d'_{i}, {\cR}}$ for all $\cR$ satisfying $\cR \subset \cT$ and $|\cR| = r$. Each $W_{d'_{i}, {\cR}}$ can be written as 
\allowdisplaybreaks
\begin{align}
W_{d'_{i}, {\cR}} & \overset{\mathrm{(a)}}{=} \bigoplus_{u \in \cV_{i}} W_{d_{u},\cR} \nonumber \\ 
&= \bigoplus_{u \in \cV_i \cap \cR} W_{d_{u},\cR} \oplus \bigoplus_{u \in \cV_i \backslash \cV_i \cap \cR} W_{d_{u},\cR} \nonumber \\
&\overset{\mathrm{(b)}}{=} \bigoplus_{u \in \cV_i \cap \cR} W_{d_{u},\cR} \oplus \bigoplus_{u \in \cV_i \backslash \cV_i \cap \cR} \bigg\{ Y_{\{u\}  \cup {\cR}} \oplus \bigoplus_{t\in {\cR}} W_{d_{t}, \{u\}  \cup {\cR} \backslash\{t\}} \bigg\} \nonumber\\
&= \bigoplus_{u \in \cV_i \cap \cR} W_{d_{u},\cR} \oplus \bigoplus_{u \in \cV_i \backslash \cV_i \cap \cR} Y_{\{u\} \cup {\cR}} \oplus \bigoplus_{u \in \cV_i \backslash \cV_i \cap \cR} \; \bigoplus_{t\in {\cR}} W_{d_{t}, \{u\}  \cup {\cR} \backslash\{t\}}  \nonumber\\
&= \bigoplus_{u \in \cV_i \cap \cR} W_{d_{u},\cR} \oplus \bigoplus_{u \in \cV_i \backslash \cV_i \cap \cR} Y_{\{u\} \cup {\cR}} \oplus \bigoplus_{t \in \cV_i \backslash \cV_i \cap \cR} \; \bigoplus_{u\in {\cR}} W_{d_{u}, \{t\}\cup {\cR} \backslash\{u\}}  \nonumber\\
&= \bigoplus_{u \in \cV_i \cap \cR} W_{d_{u},\cR} \oplus \bigoplus_{u \in \cV_i \backslash \cV_i \cap \cR} Y_{\{u\} \cup {\cR}}  \nonumber \\
& \quad \oplus \bigoplus_{t \in \cV_i \backslash \cV_i \cap \cR} \; \bigg\{\bigoplus_{u\in {\cV_i \cap \cR}} W_{d_{u}, \{t\}\cup {\cR} \backslash\{u\}} \oplus \bigoplus_{u\in {\cR \backslash \cV_i \cap \cR}} W_{d_{u}, \{t\}\cup {\cR} \backslash\{u\}} \bigg\}  \nonumber\\
&= \bigoplus_{u \in \cV_i \cap \cR} W_{d_{u},\cR} \oplus \bigoplus_{u \in \cV_i \backslash \cV_i \cap \cR} Y_{\{u\} \cup {\cR}} \oplus \bigoplus_{t \in \cV_i \backslash \cV_i \cap \cR} \; \bigoplus_{u\in {\cV_i \cap \cR}} W_{d_{u}, \{t\}\cup {\cR} \backslash\{u\}} \nonumber \\
& \quad \oplus \bigoplus_{t \in \cV_i \backslash \cV_i \cap \cR} \; \bigoplus_{u\in {\cR \backslash \cV_i \cap \cR}} W_{d_{u}, \{t\}\cup {\cR} \backslash\{u\}} \nonumber\\
&= \bigoplus_{u \in \cV_i \cap \cR} W_{d_{u},\cR}  \oplus \bigoplus_{u\in {\cV_i \cap \cR}} \; \bigoplus_{t \in \cV_i \backslash \cV_i \cap \cR} W_{d_{u},\{t\} \cup {\cR} \backslash\{u\}} \oplus \bigoplus_{ u \in \cV_i \backslash \cV_i \cap \cR} Y_{\{u\} \cup {\cR}}  \nonumber \\
& \quad \oplus \bigoplus_{u\in {\cR \backslash \cV_i \cap \cR}} \; \bigoplus_{t \in \cV_i \backslash \cV_i \cap \cR} W_{d_{u},\{t\} \cup {\cR} \backslash\{u\}} \nonumber\\
&= \bigoplus_{u \in \cV_i \cap \cR} \bigg\{ W_{d_{u},\cR}  \oplus \bigoplus_{t \in \cV_i \backslash \cV_i \cap \cR} W_{d_{u},\{t\}\cup {\cR} \backslash\{u\}}\bigg\} \oplus \bigoplus_{u \in \cV_i \backslash \cV_i \cap \cR} Y_{\{u\} \cup {\cR}} \nonumber \\
& \quad \oplus \bigoplus_{u\in {\cR \backslash \cV_i \cap \cR}} \; \bigoplus_{t \in \cV_i \backslash \cV_i \cap (\cR\backslash \{u\})} W_{d_{u}, \{t\}\cup {\cR} \backslash\{u\}} \nonumber\\
&= \bigoplus_{u \in \cV_i \cap \cR} \bigg\{\bigoplus_{t \in \cV_i \backslash \cV_i \cap (\cR\backslash \{u\})} W_{d_{u}, \{t\}\cup {\cR} \backslash\{u\}}\bigg\} \oplus \bigoplus_{u \in \cV_i \backslash \cV_i \cap \cR} Y_{\{u\} \cup {\cR}} \nonumber \\
& \quad \oplus \bigoplus_{u\in {\cR \backslash \cV_i \cap \cR}} \; \bigoplus_{t \in \cV_i \backslash \cV_i \cap (\cR\backslash \{u\})} W_{d_{u}, \{t\}\cup {\cR} \backslash\{u\}} \nonumber\\
&\overset{\mathrm{(c)}}{=} \bigoplus_{u \in \cV_i \cap \cR} Z^{d_{u}}_{i,\cR\backslash \{u\}} \oplus \bigoplus_{u \in \cV_i \backslash \cV_i \cap \cR} Y_{\{u\} \cup {\cR}} \quad \oplus \bigoplus_{u\in {\cR \backslash \cV_i \cap \cR}} Z^{d_{u}}_{i,\cR\backslash \{u\}} \nonumber
\end{align} 
where $(a)$ follows since all  $W_{d_u,\cR}, u \in \cV_i$  but $W_{d'_{i}, \cR}$ appear twice  in the summation on the RHS of $(a)$ which is due to the structure of demands in  \DRS. Further, $(b)$ follows from the definition of $Y_{\{u\} \cup \cR}$.
The symbols in the first and third terms of $(c)$ can be obtained from the cache of the user due to Lemma ~\ref{Lem_subset_users}, and the symbols in the second term can be obtained from the delivery part because of Remark~\ref{Rem_SchemeA}.
Hence, the decodability of user $u'_i\in \cU_1$ follows.  This completes the proof of Theorem~\ref{Thm_reduced_usrs}.

\subsection{Proof of Theorem~\ref{th:basic}}
\label{Sec_proof_no_coding}

In the placement phase, the caches of all
users are populated with the same $M/N$ fraction of each file. 
Let each file $W_i$ be split in two parts: cached part $W^{(c)}_i$
of length $FM/N$, and uncached part $W^{(u)}_i$
of length $F(1-M/N)$.
The cache contents of all the users are the same, and given by
$Z_k=(Z^{(0)},Z^{(1)},\ldots,Z^{(N-1)})$, where
\begin{align*}
Z^{(i)}=W^{(c)}_i, \quad \text{for }i=0,1,\ldots,N-1.
\end{align*}
To describe the delivery phase, we consider two cases:

\noindent
\underline{Case 1: $N \leq K$}

For $N\leq K$, the server broadcasts the remaining $(1-M/N)$ fraction of each
file. 
This scheme achieves privacy because the transmission does not depend
on the demands of the users.

\noindent
\underline{Case 2: $K < N$}

Let $D_0,D_1,\ldots,D_{K-1}$ be the demands of the users.
The transmission $X$ has two parts $(X',J)$, where 
$X'=(X'_0,X'_1,\ldots,X'_{K-1})$ is the main payload, and $J$ is
the auxiliary transmission of negligible rate which helps each user find the corresponding decoding function. 
For each $i$, $X'_i$
is either $W^{(u)}_{D_j}$ for some $j$ or random bits
of the same length. In particular,
the position of $W^{(u)}_{D_j}$ in $X'$ is denoted by a random variable
$P_j\in [0:K-1]$. 
The random variables $P_0,P_1,\ldots,P_{K-1}$ are defined inductively
as
\begin{align*}
P_i = & \begin{cases} 
P_j & \text{if } D_i=D_j  \text{ for some } j<i \\
\sim unif([0:K-1]\setminus\{P_0,P_1,\ldots,P_{i-1}\})
& \text{if } D_i\neq D_j \text{ for all } j<i.
\end{cases}
\end{align*}
Note that each demanded (uncached) file is transmitted only in one component
of the transmission so that one user can not possibly detect the same file
(as its own demand) being transmitted in another component and thus infer
that the corresponding other user also has the same demand.

The keys $S_0,S_1,\ldots,S_{K-1}\in [0:K-1]$ are chosen i.i.d. and uniformly
distributed. 
The transmission is then given by
\begin{align*}
X'_{j} = \begin{cases} W^{(u)}_{D_i} & \text{ if } j=P_i   \text{ for some }i\in [0:K-1]\\
\sim unif\left( \{0,1\}^{F(1-M/N)}\right) & \text{ otherwise}
\end{cases}
\end{align*}
and
\begin{align*}
J = (P_0\oplus_K S_0, P_1\oplus_K S_1,\ldots,P_{K-1}\oplus_K S_{K-1})
\end{align*}
where $\oplus_K$ denotes the addition modulo $K$ operation.
Since user $k$ knows $S_k$, it can find $P_k$ from $J$. It
then can find $X'_{P_k}=W^{(u)}_{D_k}$, and thus $W_{D_k}=(Z^{(D_k)},X'_{P_k})$.

Next we show that this scheme also satisfies the privacy condition.
Let us denote $Q_i=P_i\oplus_K S_i$ for the ease of writing.
\begin{align}
I(\ND{k};X,  D_k,Z_k) & =I(\ND{k}; X'_0,\ldots,X'_{K-1}, Q_0,Q_1,\ldots,Q_{K-1},  D_k, S_k, W^{(c)}_{0}, \ldots,W^{(c)}_{N-1} ) \notag \\
&\overset{(a)}{=} I(\ND{k};  Q_0, \ldots,Q_{K-1}, D_k, S_k) \notag \\
& = I(\ND{k};  Q_0,  \ldots,Q_{k-1},Q_{k+1},\ldots,Q_{K-1}, 
D_k, S_k,P_k)\notag\\
& \overset{(b)}{=} 0 \notag
\end{align}
where $(a)$ follows because $(X'_0,\ldots,X'_{K-1},W^{(c)}_{0},
\ldots,W^{(c)}_{N-1} )$ is uniformly distributed in $\{0,1\}^{MF+FK(1-M/N)}$,
and is independent of $(\ND{k},Q_0, \ldots,Q_{K-1},D_k, S_k)$,
and $(b)$ follows because all the random variables in the
mutual information are independent.
In this scheme, the number of bits broadcasted is $FK(1-M/N)$ 
as the bits transmitted for communicating $J$ is negligible for large $F$.
Thus, the scheme achieves  rate $K(1-M/N)$.

\subsection{Proof of Theorem~\ref{Thm_PR_SR}}
\label{Sec_proof_PR_SR}
First we explain the scheme that achieves the memory-rate pairs given in Theorem ~\ref{Thm_PR_SR} . We further show that this scheme also preserves privacy.

For $t \in \{1,2,\ldots,NK-1\}$, we partition 
file $W_i,i\in [0:N-1]$ into $\sum_{l=t}^{NK-1} {NK \choose l}$  segments of $(NK-t)$ different sizes. These segments are grouped   into $(NK-t)$ groups such that all segments in the same group have the same size.  The segments are labelled by some subsets of $[0:NK-1]$. The segments of $W_i$ are $W_{i,\cR}$; $\cR \subset [0:NK-1], |\cR| \in \{t, t+1, \ldots, NK-1\}$. These $(NK-t)$ groups are given as
\begin{align*}
\cT_{|\cR|}^{i} = \{W_{i, \cR}| \cR \subset [0:NK-1] \} \quad \text{ for } t \leq |\cR| \leq NK-1.
\end{align*}
Thus, file $W_i,i\in[0:N-1]$ is given as 
\begin{align*}
W_i = \left(\cT_{|\cR|}^{i}\right)_{t \leq |\cR| \leq NK-1}.
\end{align*}
All elements of each file in one group have same size and elements of different groups have different sizes. For $|\cR_1| < |\cR_2|$, size of an element in $\cT_{|\cR_1|}^{i} $ is $r^{|\cR_2|-|\cR_1|}$ times the size of an element in $\cT_{|\cR_2|}^{i}$ for parameter $r \in [1,N-1]$. Hence, for $i \in [0:N-1]$ and $\cR \subset [0:NK-1], t \leq |\cR| \leq NK-1$, we have
$$ len(W_{i,\cR}) = \frac{r^{NK-|\cR|-1}}{\sum_{s=t}^{NK-1}{NK \choose s} r^{NK-s-1}}F.$$ 
Sum of all these segments is $F$ since  for any $r>0$, we have
\begin{align*}
len(W_i) & = \sum_{|\cR|=t}^{NK-1} \frac{ {NK \choose |\cR|}  r^{NK-|\cR|-1}}{\sum_{s=t}^{NK-1}{NK \choose s} r^{NK-s-1}}F \\
 &=\frac{  \sum_{|\cR|=t}^{NK-1} {NK \choose |\cR|}  r^{NK-|\cR|-1}}{\sum_{s=t}^{NK-1}{NK \choose s} r^{NK-s-1}}F \\
& = F. \label{Eq_fraction_sum}
\end{align*}

\underline{Caching:} 
The cache content $Z_k$ of user $k\in [0:K-1]$ has two components: the main load $Z'_k$  and sub load $Z''_k$. The main load $Z'_k$ is grouped into $NK-t$ groups similar to the way we partition the file. The groups are indexed by the cardinality of $\cR \subset [0:NK-1]$, where $t  \leq |\cR|\leq NK-1$. The group indexed by $|\cR|$ of user $k$ is denoted by $\cG_{k,|\cR|}$. Its content is determined by random variable $S_k \sim unif[0:N-1]$ which is shared between user $k$ and the server, and it is given by
\begin{align*}
\cG_{k,|\cR|} = \{W_{i,\cR}| W_{i,\cR} \in \cT_{|\cR|}^i \text{ and } S_k+kN \in \cR\}_{i\in[0:N-1]}.
\end{align*}
Then the main load $Z'_k$ is given by
\begin{align*}
Z'_k := (\cG_{k,|\cR|})_{t \leq |\cR| \leq NK-1}.
\end{align*}
Since there are $N{NK-1 \choose |\cR|-1}$ elements in $\cG_{k,|\cR|}$,
we obtain the size of the main load as
\begin{align*}
len(Z'_{k}) &= \sum_{|\cR|=t}^{NK-1}|\cG_{k,|\cR|}| \frac{r^{NK-|\cR|-1}F}{\sum_{s=t}^{NK-1}{NK \choose s} r^{NK-s-1}} \\
&= \frac{N\sum_{s=t}^{NK-1}{NK-1 \choose s-1}r^{NK-s-1}F}{\sum_{s=t}^{NK-1}{NK \choose s} r^{NK-s-1}} \\
&= MF.
\end{align*}

Now we define the sub load $Z''_k$ which is 
of negligible size compared to the file size $F$. To this end, we first define
$$
\cL := \{ kN+S_{k}|0 \leq k \leq K-1\},
$$
and
$$
\tau := \{ \cR | \cR \subset [0:NK-1], \cR \cap \cL \neq \phi, \text{ and } t+1\leq |\cR| \leq NK-1\}.
$$
The server generates independent symbols $S'_{\cR}$ for all $\cR\in\tau$, where each $S'_{\cR}\sim unif\{[0:\kappa_{|\cR|}-1]\}$, with $\kappa_{s}$ defined as 
\begin{align*}
\kappa_{s} = {NK \choose s} - {NK-K \choose s}, \quad \mbox{ for } s \in \{t+1,t+2,\ldots,NK-1\}.
\end{align*}
For all $\cR \in \tau$, $S'_{\cR}$ is cached at user $k$ if and only if $k \in \cR$. Then, the sub load $Z''_k$ is given by
\begin{align}
Z''_k:= \left(\{S'_{\cR} | \cR \in \tau \text{ and } (kN+S_{k})\cap \cR \neq \phi\}, S_k\right).
\end{align}
The cache content $Z_k$ is the concatenation of $Z'_k$ and $Z''_k$, i.e., $Z_k = (Z'_k,Z''_k)$.

\underline{Delivery:}
For a given demand vector $(D_0,\ldots,D_{K-1})$, the server first constructs an expanded demand vector $\bar{d}$ of $NK$-length. We write it as $K$ vectors of $N$ length each, as follows:
\begin{align}
\bar{d} = \left[\bar{d}^{(0)}, \bar{d}^{(1)}, \ldots, \bar{d}^{(K-1)}\right] \label{Eq_def_barD}
\end{align}
where $\bar{d}^{(k)}, k\in [0:K-1]$ is the vector obtained by applying $S_{k} \ominus D_{k}$ cyclic shift to the vector $(0,1,\ldots,N-1)$. Here $\ominus$ denotes modulo $N$ subtraction. That is, for $k \in [0:K-1]$, $d_i^{(k)} = i-(S_k-D_k) \mod N$. We also define
\begin{align}
\bar{S} \ominus \bar{D}:=\left(S_{0} \ominus D_{0}, S_{1} \ominus D_{1}, \ldots, S_{K-1} \ominus D_{K-1}\right).\label{Eq_def_SandD}
\end{align}
To explain the broadcast transmission, we define symbols $Y_{\cR}$ 
for $\cR \subset [0:NK-1]$ and $t+1 \leq |\cR| \leq NK$ as follows:
\begin{align*}
Y_{\cR} := \bigoplus_{u \in \cR} W_{d_{u}, \cR \setminus \{u\}}
\end{align*}
where $d_u$ is the $(u+1)$-th item of $\bar{d}$, and for all $\cR \in \tau$, we define symbols $W_{\cR}$
\begin{equation}\label{eq:key}
W_{\cR} := (W_{0,\cR} \oplus W_{1,\cR},W_{1,\cR} \oplus W_{2,\cR},\ldots ,W_{N-2,\cR} \oplus W_{N-1,\cR}).
\end{equation}
If the size of $W_{\cR}$ is $F'$ bits, we denote the first $rF'/(N-1)$ bits of $W_{\cR}$ by $W^{r}_{\cR}$, where $r \in [1,N-1]$. Further, we also define
\begin{align}
V_{\cR} &:= Y_{\cR}\oplus W^{r}_{\cR}, \quad \text{ for } r \in [1,N-1], \enspace \cR \in \tau,
\end{align}
and 
\begin{align}
V_{|\cR|} &:=\{V_{\cR}| \cR \in \tau\}. \label{Eq_def_setV}
\end{align}
Also, set $V$ is defined as the concatenation of all sets defined in~\eqref{Eq_def_setV}, i.e.,
\begin{align*}
V &:= (V_{s})_{t+1 \leq s \leq NK-1}.
\end{align*}
The server picks permutation functions $\left(\pi_{t+1}(\cdot),\pi_{t+2}(\cdot),\ldots,\pi_{NK-1}(\cdot)\right )$, where $\pi_{i}(\cdot) $ is picked uniformly at random from the symmetric group of permutations of $[0:\kappa_{i}-1]$ for \(i \in \{t+1,t+2,\ldots,NK-1\}\). These permutation functions are not fully shared with any of the users. The main payload is given by 
$$X' = (X'_{t+1},X'_{t+2},\ldots,X'_{NK-1},Y_{[0:NK-1]}) = (\pi_{t+1}(V_{t+1}),\pi_{t+2}(V_{t+2}),\ldots,\pi_{NK-1}(V_{NK-1}),Y_{[0:NK-1]}).$$ 
Rate of transmission is calculated as follows.
For $t+1 \leq |\cR| \leq NK-1$, the server transmits ${NK \choose |\cR| } - {NK -K \choose |\cR| } $ number of symbols $V_{\cR}$, and the server also transmits $Y_{[0:NK-1]}$. Then, the total number of bits transmitted in the main payload are given by
\begin{align*}
len(X') &= \sum_{s=t+1}^{NK}\frac{[{NK \choose s} -{NK-K\choose s}]r^{NK-s}F}{\sum_{s=t}^{NK-1}{NK \choose s} r^{NK-s-1}} \\
& = RF.
\end{align*}

Along with $X'$, the server also broadcasts some auxiliary transmission  $J$ of negligible rate, given by
\begin{align}
J &=(\{S'_{\cR} \oplus \alpha_{|\cR|,\cR}|\cR \in \tau\},  \bar{S} \ominus \bar{D})  \nonumber\\
&= (J',  \bar{S} \ominus \bar{D}). \label{Eq_aux_part}
\end{align}
Here, $\alpha_{|\cR|,\cR}$ denotes the position of $V_{\cR}$ in $\pi_{|\cR|}(V_{|\cR|})$ for $\cR \in \tau$. The private keys ensure that the location of any symbol $V_{\cR}$ is shared with user $k$ if and only if $S_k + kN \in \cR$.  For large file sizes, the size of the auxiliary transmission is negligible. The broadcasted message, $X$ can thus be given as $X=(X',J)$.

\underline{Decoding:} 
Now we explain how user $k \in [0:K-1]$ decodes the  segments that are missing from each group in her cache.
We can observe that the group $\cG_{k,NK-1}$ in the cache of user $k$ has all the elements of $\cT_{NK-1}^{D_k}$ except one. This missing element $W_{D_{k},[0:NK-1] \setminus \{S_k + kN\}} $ can be decoded as
\begin{align*}
\widehat{W}_{D_{k},[0:NK-1] \setminus \{S_k + kN\}} = Y_{[0:NK-1]} \oplus \left(\bigoplus_{u\in {[0:NK-1]\setminus \{S_k + kN\}}} W_{d_{u},{[0:NK-1]} \setminus \{u\}}\right).
\end{align*}
Observe that \(Y_{[0:NK-1]}\) is broadcasted by the server while each symbol $W_{d_{u},[0:NK-1] \setminus \{u\}}$ is a part of $\cG_{k,NK-1}$ and hence a part of the cache of user \(k\). Thus, user \(k\) can compute $\widehat{W}_{D_k, [0:NK-1]\setminus \{S_k + 3k\}}$. It follows that
\begin{align*}
\widehat{W}_{D_{k},[0:NK-1] \setminus \{S_k + kN\}} & = Y_{[0:NK-1]} \oplus \left(\bigoplus_{u\in {[0:NK-1]\setminus \{S_k + kN\}}} W_{d_{u},{[0:NK-1]} \setminus \{u\}}\right) \\
& = \bigoplus_{u \in [0:NK-1]} W_{d_{u}, [0:NK-1] \setminus \{u\}} \oplus \left(\bigoplus_{u\in {[0:NK-1]\setminus \{S_k + kN\}}} W_{d_{u},{[0:NK-1]} \setminus \{u\}}\right) \\
& = W_{d_{S_k + kN},[0:NK-1] \setminus \{S_k + kN\}} \\
& \overset{(a)}{=} W_{D_k, [0:NK-1] \setminus \{S_k + kN\}}.
\end{align*}
Here \((a)\) follows because $d_{S_k + kN} = (S_k + kN - (S_{k} - D_{k}))$ mod \(N\) \(= D_k\). Since user $k$ has all the segments in $\cT^{D_k}_{NK-1}$, we explain how user $k$ can obtain all symbols in any  set $\cT^{D_k}_j$, where $t \leq j \leq NK-2$. All symbols $W_{D_k, \cR} \in \cT^{D_k}_j$ such that $S_k + kN \in \cR$  form the group $\cG_{k,j}$ and hence are a part of her cache. All the remaining symbols $W_{D_k, \cR} \in \cT^{D_k}_j$ satisfying $S_k + kN \notin \cR$ can be decoded by user $k$ as follows:
\begin{align*}
    \widehat{W}_{D_{k}, {\cR}} & = X'_{{ |\cR^{+} |},t} \oplus W^{r}_{\cR^{+}}  \oplus \left(\bigoplus_{u\in {\cR}} W_{d_{u},\cR^{+} \setminus \{u\}}\right)
\end{align*}
where $\cR^{+} = \{S_k +kN \} \cup {\cR}$, $t = \alpha_{{|\cR^{+} |},\cR^{+}}$ and $X'_{{|\cR^{+} |},t}$ denotes the symbol in the $t$-th position of $X'_{{|\cR^{+} |}}$. Here, $X'_{{|\cR^{+}|}}$ is a part of the broadcast. User $k$ can recover $t$ using the auxiliary transmission as $t = S'_{\cR^{+}} \oplus (S'_{\cR^{+}} \oplus \alpha_{|\cR^{+}|,\cR^{+}})$ because $S'_{\cR^{+}}$ is part of her cache. All symbols in the second and third terms can also be recovered from the cache of user $k$. Thus, user \(k\) can compute $\widehat{W}_{D_{k}, {\cR}}$. Thus, we obtain
\begin{align*}
    \widehat{W}_{D_{k}, {\cR}} & = X'_{{ |\cR^{+} |},t} \oplus W^{r}_{\cR^{+}}  \oplus \left(\bigoplus_{u\in {\cR}} W_{d_{u},\cR^{+} \setminus \{u\}}\right) \\
    & = V_{\{S_k +kN \}\cup \cR }\oplus W^{r}_{\{S_k +kN \} \cup \cR}  \oplus \left(\bigoplus_{u\in {\cR}} W_{d_{u},\{S_k +kN \} \cup {\cR} \setminus \{u\}}\right)\\
    & = Y_{\{S_k +kN \}\cup \cR }\oplus W^{r}_{\{S_k +kN \} \cup \cR}\oplus W^{r}_{\{S_k +kN \} \cup \cR}  \oplus \left(\bigoplus_{u\in {\cR}} W_{d_{u},\{S_k +kN \} \cup {\cR} \setminus \{u\}}\right)\\
     & = \bigoplus_{u \in \{S_k +kN \} \cup {\cR}} W_{d_{u}, \{S_k +kN \} \cup {\cR} \setminus \{u\}} \oplus \left(\bigoplus_{u\in {\cR}} W_{d_{u},\{S_k +kN \} \cup {\cR} \setminus \{u\}}\right)\\
     & = W_{D_k, \cR}
\end{align*}
which shows that user $k$ can recover all symbols in  $\cT^{D_k}_j$ for $t \leq j \leq NK-1$. This completes the proof for decodability.

\underline{Proof of privacy:} We show that
\begin{align}
I(X; \ND{k}|Z_{k}, D_k) =0, \quad  \forall k \in [0:K-1] \label{Eq_proof_priv0}
\end{align}
which implies the privacy condition $I(\ND{k} ; X,Z_{k}, D_k) =0$, since $I(\ND{k} ; Z_{k}, D_k) =0$.
To show~\eqref{Eq_proof_priv0}, we first define 
\begin{align}
B_{k} := \{\alpha_{|\cR|,\cR}| \cR \subset [0:NK-1],  kN+S_{k} \cap \cR \neq \phi,    t+1 \leq |\cR| \leq NK-1\}.
\end{align}
We also divide $J'$ given in~\eqref{Eq_aux_part} into two parts, $J' = (J'_k, \Tilde{J'}_k)$, where $J'_k$  is the part $J$ which can be accessed by user $k$ while $\Tilde{J'}_k$ is the remaining part. These are defined as follows:
\begin{align*}
J'_k & := \{S'_{\cR} \oplus \alpha_{|\cR|, \cR} | \cR \in \tau, kN+S_k \in \cR  \},\\
\Tilde{J'}_k & := J'\setminus J'_k.
\end{align*}
Then, we have 
\begin{align}
I(X; \ND{k}|Z_{k}, D_k) & = I(X' , J; \ND{k}|Z_{k}, D_k) \nonumber \\
& = I(X' , J'_k,  \Tilde{J'}_k,  \bar{S} \ominus \bar{D}; \ND{k}|Z_{k}, D_k) \nonumber \\
&  \overset{(a)}{=}  I(X', \bar{S} \ominus \bar{D}, J'_k, B_k; \ND{k}|Z_{k}, D_k) \nonumber \\
&  \overset{(b)}{=}  I(X', \bar{S} \ominus \bar{D}, B_k; \ND{k}|Z_{k}, D_k) \nonumber \\
& = I( \bar{S} \ominus \bar{D}, B_k; \ND{k}|Z_{k}, D_k) +    I(X' ; \ND{k}|Z_{k}, D_k,  \bar{S} \ominus \bar{D}, B_k)  \nonumber \\
& \overset{(c)}{=}  I(X' ; \ND{k}|Z_{k}, D_k,  \bar{S} \ominus \bar{D},B_k) \nonumber \\
& =   I(Y_{[0:NK-1]}, \pi_{t+1}(V_{t+1}),\pi_{t+2}(V_{t+2}),\ldots,\pi_{NK-1}(V_{NK-1}); \ND{k}|Z_{k}, D_k, B_{k}, \bar{S} \ominus \bar{D}) \nonumber\\
& \overset{(d)}{=}  I(Y_{[0:NK-1]}, V; \ND{k}|Z_{k}, D_k, B_{k}, \bar{S} \ominus \bar{D}).
\label{Eq_proof_priv1}
\end{align}
Here, $(a)$ follows since $B_k$ is a function of $(Z_k,J'_k)$ and $ \Tilde{J'}_k$ is independent of all other random variables on the RHS of $(a)$, and $(b)$ follows since $J'_k$ is a function of $(Z_k, B_k)$. Further, $(c)$ follows  since $( \bar{S} \ominus \bar{D}, B_k)$ is independent of other random variables, and $(d)$ follows due to the fact that the permutations $\left(\pi_{t+1}(\cdot),\pi_{t+2}(\cdot),\ldots,\pi_{NK-1}(\cdot)\right )$ are independent of all other random variables.
Next we show that the RHS of~\eqref{Eq_proof_priv1} is zero.
To this end, we first divide the set $V$, defined in~\eqref{Eq_def_setV}, into two parts: the first part $X_{k}$ contains the symbols in $V$ whose positions are known to user $k$, and the second part $\Tilde{X}_{k}$ contains the remaining symbols in $V$, i.e.,
\begin{align*}
X_{k} & := \{V_{\cR}|(kN+S_{k}) \cap \cR \neq \phi, \cR \in \tau\}, \\
\Tilde{X}_{k} & := V \setminus X_{k}.
\end{align*}
Set \(\Tilde{X}_{k}\) can be further divided into more groups labelled by \(\Tilde{X}_{k,|\cR|}\), where \(t+1 \leq |\cR| \leq NK-1\), as follows:
$$
\Tilde{X}_{k,|\cR|} = \{V_{\cR}|(kN+S_{k}) \cap \cR = \phi, \cR \in \tau \}.
$$
Then, we get
\begin{align}
&I(Y_{[0:NK-1]}, V; \ND{k}| Z_{k},D_k, B_{k}, \bar{S} \ominus \bar{D} )\\ 
& = I(Y_{[0:NK-1]}, X_{k}, \Tilde{X}_{k}; \ND{k} | Z_{k},D_k, B_{k}, \bar{S} \ominus \bar{D}) \nonumber \\
& = I(Y_{[0:NK-1]},X_{k} ; \ND{k} | Z_{k},D_k, B_{k}, \bar{S} \ominus \bar{D} ) + I(\Tilde{X}_{k}; \ND{k}|Y_{[0:NK-1]},X_{k}, Z_{k},D_k, B_{k}, \bar{S} \ominus \bar{D}). \nonumber \\
& \overset{(a)}{=} I(Y_{[0:NK-1]},X_{k}, W_{D_k} ; \ND{k} | Z_{k},D_k, B_{k}, \bar{S} \ominus \bar{D} ) + I(\Tilde{X}_{k}; \ND{k}|Y_{[0:NK-1]},X_{k}, Z_{k}, D_k,B_{k}, \bar{S} \ominus \bar{D}, W_{D_k}). \nonumber \\
&  \overset{(b)}{=}  I( W_{D_k} ; \ND{k} | Z_{k},D_k, B_{k}, \bar{S} \ominus \bar{D}) + I(\Tilde{X}_{k}; \ND{k}| Z_{k}, D_k, B_{k}, \bar{S} \ominus \bar{D}, W_{D_k}). \nonumber \\
& = I(W_{D_k} ; \ND{k}|Z_{k},D_k, B_{k}, \bar{S} \ominus \bar{D}) +\sum_{i=t+1}^{NK-1} I(\Tilde{X}_{k,i};  \ND{k}|Z_{k},D_k, B_{k}, \bar{S} \ominus \bar{D}, W_{D_k},\Tilde{X}_{k,t+1}, \ldots, \Tilde{X}_{k,i-1}) \label{eq_pause14} 
\end{align}
where in~\eqref{eq_pause14}, we used  $\Tilde{X}_{k,i} =\phi$ for $i < t+1$. Here, $(a)$ follows because we have seen in the  decodability section that $W_{D_{k}}$ can be recoverd from $(\bar{S} \ominus \bar{D}, Z_{k}, X_{k}, Y_{[0:NK-1]})$, and $(b)$ follows since each \( V_{\cR} \in X_{k}\) can be written as
\begin{align*}
V_{\cR} & = Y_{\cR} \oplus W^{r}_{\cR} \\
& = \bigoplus_{u \in \cR} W_{d_{u}, \cR \backslash \{u\}} \oplus W^{r}_{\cR} \\
& = \bigoplus_{u \in \cR \backslash (S_{k}+kN)} W_{d_{u}, \cR \backslash \{u\}} \oplus W_{d_{S_{k}+kN}, \cR \backslash \{S_{k}+kN\}} \oplus W^{r}_{\cR}.
\end{align*}
Here, the first and third terms can be recovered from $Z_k$ and the second term is a part of $W_{D_k}$ since $d_{S_k+kN} = D_k$. Similarly, we have
\begin{align*}
Y_{[0:NK-1]} & = \bigoplus_{u\in {[0:NK-1]}} W_{d_{u},{[0:NK-1]} \backslash \{u\}}\\
& = \bigoplus_{u\in {[0:NK-1] \backslash (S_{k}+kN)}} W_{d_{u},{[0:NK-1]} \backslash \{u\}} \oplus W_{d_{S_{k}+kN}, [0:NK-1] \backslash \{S_{k}+kN\}}.
\end{align*}
Here, all symbols in the first term are a part of $Z_k$ while the second term is a part of $W_{D_k}$ because $d_{S_k+kN} = D_k$. Thus, $(X_{k}, Y_{[0:NK-1]})$ is a function of $(\bar{S} \ominus \bar{D}, Z_{k}, W_{D_{k}})$ which completes the argument for $(b)$.

Next, we show that each term on the RHS of~\eqref{eq_pause14} is zero. First, we consider the terms $I(\Tilde{X}_{k,i};Z_{k}, B_{k}, \bar{S} \ominus \bar{D}, \ND{k}, W_{D_k}|Y_{[0:NK-1]},X_{k},\Tilde{X}_{k,i-1}, \ldots, \Tilde{X}_{k,t+1})$ for $t+1 \leq i \leq NK-1$. For simplicity of notation, we define set $\tau_{k,i}$ as follows:
\[
\tau_{k,i} = \{\cR \in \tau, \cR \cap (kN+S_{k}) = \phi, |\cR| = i\}.
\]
For $k \in [0:K-1]$ and $ t+1 \leq i \leq NK-1$, we get
\begin{align}
& I(\Tilde{X}_{k,i};  \ND{k}|Z_{k},D_k, B_{k}, \bar{S} \ominus \bar{D}, W_{D_k}, \Tilde{X}_{k,t+1}, \ldots\Tilde{X}_{k,i-1} ) \nonumber \\
& = I((V_{\cR})_{\cR \in \tau_{k,i}};  \ND{k}|Z_{k},D_k, B_{k}, \bar{S} \ominus \bar{D}, W_{D_k}, \Tilde{X}_{k,t+1}, \ldots\Tilde{X}_{k,i-1} ) \nonumber \\
& = I((Y_{\cR} \oplus W_{\cR})_{\cR \in \tau_{k,i}};  \ND{k}|Z_{k},D_k, B_{k}, \bar{S} \ominus \bar{D}, W_{D_k}, \Tilde{X}_{k,t+1}, \ldots\Tilde{X}_{k,i-1} ) \nonumber \\
& = I((Y_{\cR} \oplus (W_{0,\cR} \oplus W_{1,\cR},...,W_{N-2,\cR} \oplus W_{N-1,\cR}))_{\cR \in \tau_{k,i}};  \ND{k}|Z_{k},D_k, B_{k}, \bar{S} \ominus \bar{D}, W_{D_k}, \Tilde{X}_{k,t+1}, \ldots\Tilde{X}_{k,i-1} ) \nonumber \\
& = 0. \label{eq:pause11} 
\end{align}
Here, \eqref{eq:pause11} follows because each symbol $(W_{i,\cR})$, $i \in [0:N-1]$ is non-overlapping with $(\Tilde{X}_{k,t+1}, \ldots,\Tilde{X}_{k,i-1}, Y_{k,i})$ and because $W_{D_k}$ contains only one symbol in $W_{\cR}$, namely $W_{D_k, \cR}$. We can also see that the first term on the RHS of~\eqref{eq_pause14} is zero, i.e.,
\begin{align}
I(W_{D_k} ; \ND{k} | Z_{k}, D_k,B_{k}, \bar{S} \ominus \bar{D} ) = 0. \label{eq:obs13}
\end{align}
because $W_{D_k}$ is independent of $\ND{k}$.
Thus, from~\eqref{eq:obs13} and \eqref{eq:pause11}, we obtain
\begin{align*}
I(Y_{[0:NK-1]}, V; \bar{D}| Z_{k},D_k, B_{k}, \bar{S} \ominus \bar{D} ) = 0.
\end{align*}
This together with~\eqref{Eq_proof_priv1} implies~\eqref{Eq_proof_priv0}.

\subsection{Proof of Theorem~\ref{Thm_order}}
\label{Sec_proof_order}
To prove the theorem, we first give some notations and inequalities.
For parameter $r_2 = \frac{KM}{N}$, let
\begin{align}
R^{\text{YMA}}_{N,K}\left(\frac{Nr_2}{K}\right) & = \frac{{K \choose r_2+1} - {K-\min(N,K) \choose r_2+1}}{{K \choose r_2}}, \quad \mbox{ for } r_2 \in \{0, 1,\ldots, K\}, \label{Eq_rate1}\\
R^{\text{MAN}}_{N,K}\left(M\right)  &= K\left(1-\frac{M}{N}\right)\min\left(\frac{1}{1+\frac{KM}{N}},\frac{N}{K}\right), \quad \mbox{ for } M \in \{0, N/K,2N/K,\ldots, N\}.  \label{Eq_rate2}
\end{align}
Furthermore, let $R^{\text{YMA, c}}_{N,K}(M)$ and $R^{\text{MAN, c}}_{N,K}(M)$ denote the lower convex envelop of the points in~\eqref{Eq_rate1} and~\eqref{Eq_rate2}, respectively. Recall that  $\Rto $  and $\Rm$ denote the optimal rate with privacy and without privacy as defined in \eqref{Eq_opt_rate_priv} and  \eqref{Eq_opt_rate_nopriv}, respectively. 
Then we have the following inequalities which hold for all $M\geq 0$:
\begin{align}
\Rm  \stackrel{(a)}{\leq} \Rto \stackrel{(b)}{\leq} \Rtc = R^{\text{YMA, c}}_{N,NK-K+1}(M) \leq R^{\text{YMA, c}}_{N,NK}(M)
\stackrel{(c)}{\leq} R^{\text{MAN, c}}_{N,NK}(M) \label{Eq_rate_ineq1}
\end{align}
where $(a)$ follows from the fact that the optimal rate required with demand privacy is  larger than that of without privacy, $(b)$ follows since an achievable rate is lower-bounded by the optimal rate, and
$(c)$ was shown in~\cite{Yu18}. 

\subsubsection{Proof of Part~\ref{Thm_ordr_part1}), ($N \leq  K$)}

We first prove that
\begin{align}
\frac{R^{A}_{N,K}(M)}{R^{*}_{N,K}(M)}  \leq 
\begin{cases}
4 & \text{ if } M \leq \left(1 - \frac{N}{K}\right)\\
8 & \text{ if}  \left(1 - \frac{N}{K}\right) \leq M \leq \frac{N}{2}.
\end{cases} \label{Eq_Rp_Rm_bound}
\end{align}
To this end, we show that
\begin{align}
\frac{ R^{\text{MAN, c}}_{N,NK}(M)}{\Rm }   \leq 
\begin{cases}
4 & \text{ if } M \leq \left(1 - \frac{N}{K}\right)\\
8 & \text{ if}  \left(1 - \frac{N}{K}\right) \leq M \leq \frac{N}{2}.
\end{cases} \label{Eq_ratio_bnd1}
\end{align}
Then the result follows from~\eqref{Eq_rate_ineq1}. 
We first consider the ratio $\frac{\RMaNx}{\RMaN}$ for $M\in \{ 0, N/K, 2N/K, \ldots, N \}$.
We have
\begin{align}
\frac{\RMaNx}{\RMaN} = \frac{N\min\left(\frac{1}{1+KM},\frac{1}{K}\right)}{\min\left(\frac{1}{1+\frac{KM}{N}},\frac{N}{K}\right)}, \quad M\in \{ 0, N/K, 2N/K, \ldots, N \}. \label{ratio}
\end{align}
We consider the following three cases.

\noindent \underline {Case 1: $M \in [0 , 1-\frac{N}{K}]$}

\noindent We first find $\min\left(\frac{1}{1+KM},\frac{1}{K}\right)$ and $\min\left(\frac{1}{1+\frac{KM}{N}},\frac{1}{K}\right) $. 
\begin{align*}
\frac{1}{1+KM} & \geq \frac{1}{1+K(1 - N/K)}\\
& = \frac{1}{K-N+1}\\
& > \frac{1}{K}, \quad \mbox{ for } N >1.
\end{align*}
So, $\min\left(\frac{1}{1+KM},\frac{1}{K}\right) = \frac{1}{K}$. Further,
\begin{align*}
\frac{1}{1+\frac{KM}{N}} &  \geq \frac{1}{1+\frac{K}{N}(1-N/K)}\\
& = \frac{N}{K}.
\end{align*}
Thus, $\min\left(\frac{1}{1+\frac{KM}{N}},\frac{N}{K}\right) = \frac{N}{K}$. Hence~\eqref{ratio} gives 1.

\noindent \underline{Case 2: $M \in [1-\frac{N}{K}, 1-\frac{1}{K}]$}

\noindent In this case, we get
\begin{align*}
\min\left(\frac{1}{1+KM},\frac{1}{K}\right) = \frac{1}{K}, 
\end{align*}
and
\begin{align*}
\min\left(\frac{1}{1+\frac{KM}{N}},\frac{N}{K}\right) =\frac{1}{1+\frac{KM}{N}}.
\end{align*}
Then from~\eqref{ratio},  it follows that
\begin{align*}
\frac{ \RMaN}{ R^{\text{MAN}}_{N,K}\left(M\right) }  & =\frac{N}{K}\left(1+\frac{KM}{N}\right)\\
&=\frac{N}{K}+M \\ 
& \leq 2
\end{align*}
where the last inequality follows since $\frac{N}{K} \leq 1$ and $M \leq 1$.

\noindent  \underline{Case 3: $M \in [ 1-\frac{1}{K}, N]$}

\noindent In this case, we obtain
\begin{align*}
\min\left(\frac{1}{1+KM},\frac{1}{K}\right) = \frac{1}{1+KM}, \quad \mbox{ if } 1-\frac{1}{K} \leq M \leq N
\end{align*}
and
\begin{align*}
\min\left(\frac{1}{1+\frac{KM}{N}},\frac{N}{K}\right) =\frac{1}{1+\frac{KM}{N}}, \quad \mbox{ if } 1-\frac{1}{K} \leq M \leq N.
\end{align*}

Then from~\eqref{ratio}, we get the following   
\begin{align}
\frac{ \RMaNx}{ \RMaN} & = \frac{N}{1+KM}\left(1+\frac{KM}{N}\right) \notag \\
&=\frac{N+KM}{1+KM}\notag \\
&=\frac{N-1}{1+KM}+1. \label{gap_temp2}  
\end{align}
Further,
\begin{align*}
M \geq 1-\frac{1}{K} & \implies KM \geq K-1,\\
& \implies KM \geq N -1 \; (\text{Since $K\geq N$}),\\
& \implies KM +1 \geq N-1,\\
& \implies \frac{N-1}{1+KM} \leq 1.
\end{align*} 
Then, we obtain $\frac{ \RMaNx}{ \RMaN} \leq 2$ from~\eqref{gap_temp2}.

Let $R^{\text{MAN, lin}}_{N,K}(M)$ denote the region obtained  by linearly interpolating the adjacent memory points given in~\eqref{Eq_rate2}. Similarly, $R^{\text{MAN, lin}}_{N,NK}(M)$ denotes the linear interpolation of the points $R^{\text{MAN}}_{N,NK}(M), M \in \{0, N/K,2N/K, \ldots, N \}$. Then, it follows from the above three cases that
\begin{align}
\frac{R^{\text{MAN, lin}}_{N,NK}(M)}{ R^{\text{MAN,lin}}_{N,K}(M)}  \leq 
\begin{cases}
1& \text{ if } M \leq \left(1 - \frac{N}{K}\right)\\
2 & \text{ if}  \left(1 - \frac{N}{K}\right) \leq M \leq \frac{N}{2}.
\end{cases} \label{Eq_1ratio1}
\end{align}
Next we need  the following lemma.
\begin{lemma}
	\label{Lem_bound_lin}
	For $N\leq K$, the following holds:
	\begin{align}
	\frac{R^{\text{MAN, lin}}_{N,K}(M) }{\Rm } \leq 4, \quad \mbox{for } 0 \leq M \leq \frac{N}{2}. \label{Eq_1ratio2}
	\end{align}
\end{lemma}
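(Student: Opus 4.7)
The plan is to lower bound $R^{*}_{N,K}(M)$ by a standard information-theoretic converse and upper bound $R^{\text{MAN, lin}}_{N,K}(M)$ by its closed form, then verify the ratio is at most $4$ over $[0, N/2]$ by first checking it at the corner points of the MAN curve and then extending to the linear pieces. Concretely, I would invoke the classical cut-set bound of \cite{Maddah14}: for every integer $s \in [1:\min(N,K)]$,
\begin{align*}
R^{*}_{N,K}(M) \;\geq\; s - \frac{sM}{\lfloor N/s \rfloor}.
\end{align*}
Combined with the fact that $R^{*}_{N,K}(\cdot)$ is convex and non-increasing in $M$ (by memory-sharing between achievable schemes), this yields a piecewise-linear lower bound that will be compared against the explicit MAN expression in~\eqref{Eq_rate2}.

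Next, I would partition $[0, N/2]$ according to the MAN corner points $M_t = tN/K$ for $t = 0, 1, \ldots, \lceil K/2 \rceil$, at which the MAN rate equals $R_t = (K-t)/(t+1)$ (for the regime $N \leq K$ with $M \leq N/2$, the relevant $t$ satisfy $t+1 \leq K/2 + 1 \leq K$ and one checks that the $\min(\cdot,\cdot)$ in~\eqref{Eq_rate2} is attained by the first argument). For each such $t$, I would choose the cut-set parameter $s(t)$ roughly as $s(t) = \lfloor N/(t+1) \rfloor$ when $t+1 \leq N$, so that $\lfloor N/s(t) \rfloor \geq t+1$ and the lower bound becomes $s(t)\left(1 - \tfrac{M_t}{t+1}\right) \geq \tfrac{N}{2(t+1)}\cdot\tfrac{K-t}{K}$. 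A direct comparison with $R_t = (K-t)/(t+1)$ then shows $R_t \leq 2\,R^{*}_{N,K}(M_t)$.

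Finally, I would lift the corner-point estimate to the full interval. Since $R^{\text{MAN, lin}}_{N,K}$ is linear on each $[M_t, M_{t+1}]$ while the lower bound I constructed is concave-majorized by its values at the endpoints, an elementary calculation (bounding a linear function above by twice the minimum of its endpoint values on a suitable sub-interval) loses at most an additional factor of $2$. Combining this with the corner ratio bound of $2$ gives the claimed constant $4$.

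The main obstacle will be making the choice of $s(t)$ rigorous in the presence of the floor function $\lfloor N/s \rfloor$ and verifying the cut-set ratio uniformly, particularly near the boundary $M = N/2$ where $t$ approaches $K/2$ and the interaction between $N$ and $K$ with $N \leq K$ has to be tracked carefully. This is a routine but case-heavy calculation in the spirit of the gap analyses in \cite{Maddah14, Ghasemi17, Yu18}, and does not require any new converse beyond the cut-set bound.
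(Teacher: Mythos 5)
Your proposal diverges from the paper's proof in a material way: the paper does not use the raw cut-set bound at all. It instead cites \cite[Theorem~2]{Ghasemi17}, which gives a converse strictly tighter than the cut-set bound, in the clean form $\Rm \geq N/4$ for $0\le M\le1$ and $\Rm \geq \frac14\bigl(\frac{N}{M}-\frac12\bigr)$ for $M\ge1$. Combined with the (easy) monotonicity of $R^{\text{MAN,lin}}_{N,K}(\cdot)$, this yields the factor~$4$ in two short steps without any optimization over the cut-set parameter $s$ or any interpolation loss.

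Beyond the difference in route, your sketch contains a concrete error. With $M_t = tN/K$ you choose $s(t)=\lfloor N/(t+1)\rfloor$ and claim
\begin{align*}
s(t)\Bigl(1-\frac{M_t}{t+1}\Bigr) \;\geq\; \frac{N}{2(t+1)}\cdot\frac{K-t}{K}.
\end{align*}
Unwinding the two factors: $\lfloor N/(t+1)\rfloor \geq \frac{N}{2(t+1)}$ forces $t\le N/2-1$, while $1-\frac{tN}{K(t+1)}\geq 1-\frac{t}{K}$ forces $N\leq t+1$; these cannot hold simultaneously, so the displayed inequality is vacuous. The deeper issue is that $s(t)=\lfloor N/(t+1)\rfloor$ is simply the wrong scale when $K>N$: since $t+1\approx KM_t/N$, your $s(t)\approx N^2/(KM_t)$ is far too small. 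The relevant choice in a cut-set argument is $s\approx N/(2M_t)$ (so that $\lfloor N/s\rfloor\approx 2M_t$), which is larger by a factor of $K/N$. Even with the correct $s$, the corner-point ratio under the cut-set bound tends to $4$ (it is tight in the limit $K\to\infty$ at $M$ near $1$), not $2$; adding the extra factor of $2$ you budget for the piecewise-linear interpolation would then give $8$, not $4$. In short, the cut-set bound does not obviously deliver the constant $4$ here, and this is precisely why the paper reaches for the Ghasemi--Ramamoorthy converse, which delivers the required uniform $\Rm\geq N/4$ on $[0,1]$ and $\Rm\geq f_1(M)/4$ on $[1,N/2]$ without any $s$-optimization or interpolation slack.
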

\begin{proof}
	See Appendix~\ref{Sec_append}.
\end{proof}
Since  $ R^{\text{MAN, c}}_{N,NK}(M) \leq R^{\text{MAN, lin}}_{N,NK}(M)$, \eqref{Eq_ratio_bnd1} follows from~\eqref{Eq_1ratio1} and~\eqref{Eq_1ratio2}. This further implies~\eqref{Eq_Rp_Rm_bound}. 

Now it remains to prove 
\begin{align*}
\frac{R^{A}_{N,K}(M)}{R^{*}_{N,K}(M)} &  \leq 2, \quad \text{ if }M \geq \frac{N}{2}.
\end{align*}
By substituting \(r_{2} = \lfloor{NK/2}\rfloor\) in~\eqref{Eq_rate1} for \(N\) files and \(NK\) users, we get
\begin{align*} \label{}
R^{\text{YMA}}_{N,NK}\left (\frac{\lfloor{NK/2}\rfloor}{K} \right ) & \leq \frac{{NK \choose \lfloor{NK/2}\rfloor +1}}{{NK \choose \lfloor{NK/2}\rfloor}} \\
&= \frac{NK-\lfloor{NK/2}\rfloor}{\lfloor{NK/2}\rfloor+1}\\
& = \frac{NK+1}{\lfloor{NK/2}\rfloor+1} - 1\\
& \leq \frac{NK+1}{NK/2-1/2+1} - 1 \\
&= 1.
\end{align*}
Since $\frac{\lfloor{NK/2}\rfloor}{K} \leq \frac{N}{2}$, we have $R^{\text{YMA, c}}_{N,NK}(N/2) \leq 1$. Also, $R^{\text{YMA, c}}_{N,NK}(N) = 0$. Thus, for $N/2 \leq M \leq N $, it follows that
\begin{equation} \label{eq:achline}
R^{\text{YMA, c}}_{N,NK}(M) \leq 2 \left (1 - \frac{M}{N} \right).
\end{equation}
The cutset bounds on the rates without privacy gives that
\begin{equation}\label{eq:cutset}
R^{*}_{N,K}(M) \geq \left (1 - \frac{M}{N} \right).  
\end{equation}
From~\eqref{eq:achline} by~\eqref{eq:cutset}, we obtain
\begin{align*}
\frac{R^{\text{YMA, c}}_{N,NK}(M)}{R^{*}_{N,K}(M)}  \leq 
2, \quad \text{ for }M \geq \frac{N}{2}.
\end{align*}
From~\eqref{Eq_rate_ineq1}, we have $R^{A}_{N,K}(M) \leq R^{\text{YMA, c}}_{N,NK}(M)$ which then implies that
\begin{align*}
\frac{R^{A}_{N,K}(M)}{R^{*}_{N,K}(M)}  \leq 
2, \quad \text{ for }M \geq \frac{N}{2}.
\end{align*}
This completes the proof of  Part~\ref{Thm_ordr_part1}).

\subsubsection{Proof of Part~\ref{Thm_ordr_part2}), $K < N$}

 We  denote the memory corresponding to parameters \(r = r_{0}\) and \(t=t_{0}\) in (\ref{Eq_Thm_PR_SR}) by \(M_{r_{0},t_{0}}\). First we consider the  memory regime \(M \leq N/2\). Substituting \(t=1\) in \eqref{Eq_Thm_PR_SR}, we get the achievability of the following memory-rate pairs
\begin{align} 
(M_{r,1},R) &= \left(\frac{N\sum_{s=1}^{NK-1}{NK-1 \choose s-1}r^{NK-s-1}}{\sum_{s=1}^{NK-1}{NK \choose s} r^{NK-s-1}}, \frac{\sum_{s=2}^{NK}[{NK \choose s} -{NK-K\choose s}]r^{NK-s}}{\sum_{s=1}^{NK-1}{NK \choose s} r^{NK-s-1}} \right) \nonumber\\
&= \left(\frac{N\sum_{s=1}^{NK-1}{NK-1 \choose s-1}r^{NK-s}}{\sum_{s=1}^{NK-1}{NK \choose s} r^{NK-s}}, \frac{\sum_{s=2}^{NK}{NK \choose s}r^{NK-s+1} - \sum_{s=2}^{NK}{NK-K\choose s}r^{NK-s+1}}{\sum_{s=1}^{NK-1}{NK \choose s} r^{NK-s}} \right)\nonumber \\
&= \left(\frac{N\sum_{s=0}^{NK-2}{NK-1 \choose s}r^{NK-s-1}}{\sum_{s=1}^{NK-1}{NK \choose s} r^{NK-s}}, \frac{\sum_{s=2}^{NK}{NK \choose s}r^{NK-s+1} - \sum_{s=2}^{NK}{NK-K\choose s}r^{NK-s+1}}{\sum_{s=1}^{NK-1}{NK \choose s} r^{NK-s}} \right) \nonumber\\
&= \left(\frac{N(\sum_{s=0}^{NK-1}{NK-1 \choose s}r^{NK-s-1}-1)}{\sum_{s=0}^{NK}{NK \choose s} r^{NK-s}-r^{NK}-1}, \frac{\sum_{s=0}^{NK}{NK \choose s}r^{NK-s+1} - \sum_{s=0}^{NK}{NK-K\choose s}r^{NK-s+1} -Kr^{NK}}{\sum_{s=1}^{NK-1}{NK \choose s} r^{NK-s} } \right) \nonumber\\
&= \left(\frac{N((r+1)^{NK-1} - 1)}{(r+1)^{NK} - r^{NK} - 1}, \frac{r((r+1)^{NK}-Kr^{NK-1}-(r+1)^{NK-K}r^{K})}{((r+1)^{NK} - r^{NK} - 1)} \right). \label{eq:simpl}
\end{align} 
We first show that $M_{r,1}$ in \eqref{eq:simpl} satisfies the following
\begin{align}
M_{r,1} &= \frac{N((r+1)^{NK-1} - 1)}{(r+1)^{NK} - r^{NK} - 1} \nonumber \\
& = \frac{N}{r+1}\left (1 -  \frac{r-r^{NK}}{(r+1)^{NK} - r^{NK} - 1} \right) \nonumber \\
& \geq \frac{N}{r+1} \label{eq:interm1}
\end{align}
where the last inequality follows since $\left (1 -  \frac{r-r^{NK}}{(r+1)^{NK} - r^{NK} - 1} \right) \geq 1$.
Using the fact that all points on the line joining \((0,K)\) and \((M_{r,1},R)\) are also achievable, for $M \leq M_{r,1}$ we get
\begin{align}
R^{BC}_{N,K}(M)&\leq \left(\frac{R-K}{M_{r,1}}\right)M + K\\
&= \left(\frac{(r+1)^{NK}r - (r+1)^{NK-K}r^{K+1} - K(r+1)^{NK}+K}{N((r+1)^{NK-1} - 1)}\right)M + K. \label{eq:interm2}
\end{align} 
Now we substitute \(r=K/s - 1\) for some integer \(s\) in the interval \([1 ,\lfloor{K/2}\rfloor]\). Note that,  \(Ns/K = N/(r+1) \leq M_{r,1}\), where the inequality follows from \eqref{eq:interm1}. Thus, \eqref{eq:interm2} holds for \(M=Ns/K\) and we obtain
\allowdisplaybreaks
\begin{align}
R^{BC}_{N,K}(Ns/K) & \leq \left(\frac{(\frac{K}{s})^{NK}\left(\frac{K}{s}-1\right) - (\frac{K}{s})^{NK-K}\left(\frac{K}{s}-1\right)^{K+1} - K(\frac{K}{s})^{NK}+K}{N((\frac{K}{s})^{NK-1} - 1)}\right)\frac{Ns}{K} + K \nonumber\\
& = \left(\frac{K^{NK}(K-s) - K^{NK-K}(K-s)^{K+1} - sK^{NK+1}+Ks^{NK+1}}{N(K^{NK-1}s - s^{NK})}\right)\frac{N}{K} + K \nonumber\\
&= \left(\frac{K^{NK-1}(K-s) - K^{NK-K-1}(K-s)^{K+1} - sK^{NK}+s^{NK+1}}{(K^{NK-1}s - s^{NK})}\right) + K \nonumber \\
&= \left(\frac{K^{NK-1}(K-s) - K^{NK-K-1}(K-s)^{K+1} - Ks^{NK}+s^{NK+1}}{(K^{NK-1}s - s^{NK})}\right). \label{eq:mainineq}
\end{align}
Note that \(R^{\text{YMA}}_{N,K}(Ns/K) = (K-s)/(s+1)\). Dividing \eqref{eq:mainineq} by \(R^{\text{YMA}}_{N,K}(Ns/K)\) yields
\begin{align}
\frac{R^{BC}_{N,K}(Ns/K)}{R^{\text{YMA}}_{N,K}(Ns/K)}&= (s+1)\left(\frac{K^{NK-1}(K-s) - K^{NK-K-1}(K-s)^{K+1} - Ks^{NK}+s^{NK+1}}{(K^{NK-1}s - s^{NK})(K-s)}\right)\nonumber \\
&= (s+1)\left(\frac{K^{NK-1} - K^{NK-K-1}(K-s)^{K} - s^{NK}}{(K^{NK-1}s - s^{NK})}\right) \nonumber \\
&\leq \frac{(s+1)}{s}\left(\frac{K^{NK-1} - K^{NK-K-1}(K-s)^{K} - s^{NK-1}}{(K^{NK-1} - s^{NK-1})}\right) \nonumber \\
&= \frac{(s+1)}{s}\left(1 - \frac{K^{NK-K-1}(K-s)^{K}}{(K^{NK-1} - s^{NK-1})}\right) \nonumber \\
&\leq \frac{(s+1)}{s}\left(1 - \frac{K^{NK-K-1}(K-s)^{K}}{K^{NK-1}}\right) \nonumber \\
&= \frac{(s+1)}{s}\left(1 - \left(1 - \frac{s}{K}  \right)^{K}\right). \label{eq:fin1}
\end{align}
Now we need to compute the maximum value of the expression on the RHS of \eqref{eq:fin1} for \(K \geq 2\) and \(1 \leq s \leq \lfloor{K/2}\rfloor\). Note that both \(s\) and \(K\) are integers. For \(s\) fixed, \(K \geq 2s\) satisfies all constarints. Observe that for \(s\) fixed the function \((1 - \frac{s}{K}  )^{K}\) is increasing in \(K\). Thus, the RHS of \eqref{eq:fin1} is decreasing in \(K\). Since we want to compute the maxima, we substitute \(K=2s\). Thus, it follows that 
\begin{align}
\frac{R^{BC}_{N,K}(Ns/K)}{R^{\text{YMA}}_{N,K}(Ns/K)}& \leq 
\frac{(s+1)}{s}\left(1 - \left(\frac{1}{2}  \right)^{2s}\right). \label{eq:fin2} 
\end{align}
The expression on the RHS of \eqref{eq:fin2} takes value $3/2$ when \(s=1\). For \(s \geq 2\), \((s+1)/s \leq 3/2\). So the maxima is $3/2$ and attained at \(s=1,   K=2\). Thus, we have
\begin{equation} \label{eq:ineq1}
\frac{R^{BC}_{N,K}(Ns/K)}{R^{\text{YMA}}_{N,K}(Ns/K)} \leq 3/2 \qquad \forall s \in \{1,2,\ldots, \lfloor{K/2}\rfloor\}.
\end{equation}
Substituting \(t=1\) and \(r=1\) in \eqref{Eq_Thm_PR_SR}, we get the following memory-rate pair
\begin{equation}\label{eq:(1,1)}
(M_{1,1},R) = \left (\frac{N}{2},\frac{2^{NK}-2^{NK-K}-K}{2^{NK}-2} \right).
\nonumber
\end{equation}
We know that \(R^{\text{YMA}}_{N,K}(N/2) \geq K/(K+2)\). Thus,
\begin{align}
\frac{R^{BC}_{N,K}(N/2)}{R^{\text{YMA}}_{N,K}(N/2)}& \leq \frac{(K+2)(2^{NK}-2^{NK-K}-K)}{K(2^{NK}-2)} \nonumber \\
& \leq \frac{(K+2)}{K}\left (\frac{(2^{NK}-2^{NK-K}-2)}{(2^{NK}-2)}\right ) \nonumber \\
& = \frac{(K+2)}{K}\left (1 - \frac{2^{NK-K}}{(2^{NK}-2)}\right ) \nonumber \\
& \leq \frac{(K+2)}{K}\left (1 - \frac{1}{2^{K}}\right). \label{eq:fin3}
\end{align}
The maximum value of the expression on the RHS of \eqref{eq:fin3}) is 3/2 and is attained at \(K=2\). The analysis for computing this maxima is exactly the same as the one we used for \eqref{eq:fin2}.
Thus,
\begin{equation}\label{eq:ineq2}
\frac{R^{BC}_{N,K}(N/2)}{R^{\text{YMA}}_{N,K}(N/2)} \leq 3/2.
\end{equation}
It was shown in~\cite{Yu18} that
\begin{equation}\label{eq:ineq3}
\frac{R^{\text{YMA}}_{N,K}(M)}{R^{*}_{N,K}(M)} \leq 2.
\end{equation}
For \(M \leq N/2\), \(R^{\text{YMA}}_{N,K}(M)\) is the linear extrapolation of the points \(R^{\text{YMA}}_{N,K}(M')\) where \(M' \in \{0,N/K,2N/K,\ldots,N/2\}\). Thus, using~\eqref{eq:ineq1}, \eqref{eq:ineq2} and \eqref{eq:ineq3}, we conclude that,
\begin{equation}
\frac{R^{BC}_{N,K}(M)}{R^{*p}_{N,K}(M)} \leq 3 \quad \text{for } M \leq N/2.
\end{equation}
Now let us consider the memory regime $M \geq N/2$.
All memory-rate points on the line joining \((N/2,R^{BC}_{N,K}(M)(N/2))\) and \((N,0)\) are achievable. Moreover from~\eqref{eq:(1,1)}, it is clear that \(R^{BC}_{N,K}(N/2) \leq 1\). So,
\begin{equation}\label{eq:ineq4}
R^{BC}_{N,K}(M)(M) \leq 2 - \frac{2}{N}M, \quad \text{for } M \geq N/2.
\end{equation}
Using the cut-set bounds, we have the lower bound on the non-private rate
\begin{equation}\label{eq:ineq5}
R^{*}_{N,K}(M) \geq \left (1 - \frac{M}{N}\right).
\end{equation}
Since $R^{*p}_{N,K}(M) \geq R^{*}_{N,K}(M) $, from~\eqref{eq:ineq4} and~\eqref{eq:ineq5}, it follows that
\begin{equation}
\frac{R^{BC}_{N,K}(M)}{R^{*p}_{N,K}(M)} \leq 2, \quad \text{for } M \geq N/2 \nonumber.
\end{equation}
This completes the proof of Part~\ref{Thm_ordr_part2}).

\subsubsection{Proof of Part~\ref{Thm_ordr_part_exct})}
On substituting \(r=NK-K\) and \(r=NK-K+1\) in \eqref{Eq_YMA_achv_pair} we get memory-rate trade-off points \((\frac{N(NK-K)}{NK-K+1}, \frac{1}{NK-K+1})\) and \((N,0)\), respectively. Observe that both these points lie on the line given by \eqref{eq:ineq5}. This shows that \(R^{*p}_{N,K}(M) = R^{*}_{N,K}(M)\) for \(M \geq \frac{N(NK-K)}{NK-K+1}\).
\subsection{Proof of Converse for Theorem~\ref{Thm_exact_region}}
\label{Sec_proof_exact}
As discussed in Section~\ref{sec_results}, any $(M,R)$ pair that is achievable under no privacy requirement needs to satisfy the first and third inequalities in~\eqref{Eq_N2K2_region} for $N=K=2$~\cite{Maddah14}. Similarly, for $N>2$ and $K=2$, any feasible $(M, R)$ pair under no privacy constraint is required to satisfy the first and third inequalities in~\eqref{Eq_AnyNK2_region} ~\cite{Tian2018}. Substituting $N=2$ in the second inequality of \eqref{Eq_AnyNK2_region} gives us the second inequality of \eqref{Eq_N2K2_region}. Thus,  we only need to prove that the inequality \(3M+(N+1)R \geq 2N+1\) holds for \(N \geq 2\) and \(K=2\) and we give a proof for the same in this subsection.
To show that any feasible $(M,R)$ pair
satisfies this inequality, we use the following lemma on some conditional distributions. 
\begin{lemma}
\label{Lem_eqiuv_distrbn2}
Let $\tilde{k} = (k+1) \mod 2$ for $k=0,1$. Then for all \(i \in [0:N-1], i' \in [0:N-1]\) and \(j \in [0:N-1]\) any demand-private scheme for $K=2$ satisfies the following for user $k$, with $k \in \{0,1\}$ :
\begin{align}
( X, Z_k, W_{j}|D_k=j) & \sim ( X, Z_k, W_{j}|D_{\tilde{k}}=i, D_k=j) \notag\\
&  \sim  (X, Z_k, W_{j}|D_{\tilde{k}} =i', D_k=j). \label{Eq_priv_cond3}
\end{align}
\end{lemma}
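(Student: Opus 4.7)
The plan is to combine two structural facts of any demand-private scheme for $K=2$: (a) the decoding condition makes $W_{D_k}$ a deterministic function of $(X, Z_k, D_k)$; and (b) the privacy condition~\eqref{Eq_instant_priv} makes $D_{\tilde{k}}$ independent of $(X, Z_k, D_k)$. Given these, the equality of conditional distributions in~\eqref{Eq_priv_cond3} follows by applying the same deterministic map on both sides of a distributional identity.

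First I would observe that by the decoding condition~\eqref{Eq_decod_cond} there is a deterministic function $f_k$ with
\begin{align*}
W_{D_k} \;=\; f_k\bigl(X, Z_k, D_k\bigr),
\end{align*}
since $Z_k$ contains $S_k$ and $X$ contains $(X', J)$ in our definitions. Next, from the privacy condition~\eqref{Eq_instant_priv}, $I(D_{\tilde{k}}; X, Z_k, D_k)=0$, i.e.\ $D_{\tilde{k}}$ is independent of $(X, Z_k, D_k)$. In particular, for any $i \in [0:N-1]$ and any $j$ with $\Pr[D_k=j]>0$,
\begin{align*}
(X, Z_k)\,\bigl|\,D_k=j \;\stackrel{d}{=}\; (X, Z_k)\,\bigl|\,D_k=j,\,D_{\tilde{k}}=i.
\end{align*}

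Applying the map $(x,z)\mapsto \bigl(x,z,f_k(x,z,j)\bigr)$, which is deterministic once we condition on $D_k=j$, preserves equality in distribution, giving
\begin{align*}
\bigl(X, Z_k, W_{D_k}\bigr)\,\bigl|\,D_k=j \;\stackrel{d}{=}\; \bigl(X, Z_k, W_{D_k}\bigr)\,\bigl|\,D_k=j,\,D_{\tilde{k}}=i.
\end{align*}
On both sides the event $\{D_k=j\}$ identifies $W_{D_k}$ with $W_j$, which yields the first equality in~\eqref{Eq_priv_cond3}. Repeating the argument with $i'$ in place of $i$ gives the second equality. The only subtlety is recognizing that the privacy identity transfers from $(X,Z_k)$ to the augmented tuple $(X,Z_k,W_j)$ precisely because, after conditioning on $D_k=j$, the third coordinate is a function of the first two; this is where the decoding condition is essential.
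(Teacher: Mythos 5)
Your proof is correct and rests on the same two ingredients the paper uses: the privacy condition (so that $D_{\tilde{k}}$ is independent of $(X, Z_k, D_k)$) and the decoding condition (so that $W_{D_k}$ is a deterministic function of $(X, Z_k, D_k)$). The paper first augments the mutual information $I(D_{\tilde{k}}; X, Z_k, D_k)$ with $W_{D_k}$ and then does a Bayes'-rule manipulation, whereas you first establish the distributional equality of $(X, Z_k)$ under the two conditionings and then push it forward through the deterministic decoding map; these are two presentations of the same argument, with yours arguably the cleaner.
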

\begin{proof}
	See Appendix~\ref{Sec_lemma_proof}.	
\end{proof}

Throughout this proof, for simplicity, we denote \((X|D_{0}=d_{0}, D_{1}=d_{1})\) by \( X_{d_{0},d_{1}}\). We also define \(W_{[0:N-1]}=(W_{0},W_{1},\ldots,W_{N-1})\), \( j_{i}=(j \oplus i) \) mod \(N\) and \(X'_{j}=(X_{j,j_{1}},X_{j,j_{2}},\ldots,X_{j,j_{N-1}})\). Then, we have
\allowdisplaybreaks
\begin{align}
& \sum_{j=0}^{N-1}(NH(Z_{0}) + H(Z_{1}) + 2H(X_{j,j}) + \sum_{i\neq j}H(X_{j,i})) \nonumber\\
& \geq \sum_{j=0}^{N-1}( H(Z_{0},X_{j,j})+H(Z_{1},X_{j,j})+\sum_{i\neq j}H(X_{j,i},Z_{0}))\nonumber\\
& \overset{\mathrm{(a)}}{=} \sum_{j=0}^{N-1}(H(Z_{0},X_{j,j},W_{j})+H(Z_{1},X_{j,j},W_{j})+\sum_{i\neq j}H(X_{j,i},Z_{0},W_{j}))\nonumber\\
& = \sum_{j=0}^{N-1}(H(Z_{1},X_{j,j},W_{j}) + H(X_{j,j}|Z_{0},W_{j})+\sum_{i\neq j}H(X_{j,i}|Z_{0},W_{j})+NH(Z_{0},W_{j}))\nonumber\\
& \geq \sum_{j=0}^{N-1}(H(Z_{1},X_{j,j},W_{j}) + H(X'_{j},X_{j,j},Z_{0},W_{j})+(N-1)H(Z_{0},W_{j}))\nonumber\\
& = \sum_{j=0}^{N-1}(H(Z_{1}|X_{j,j},W_{j}) + H(X'_{j},Z_{0}|X_{j,j},W_{j})+(N-1)H(Z_{0},W_{j})+2H(X_{j,j},W_{j}))\nonumber\\
& \geq \sum_{j=0}^{N-1}(H(X'_{j},Z_{0},Z_{1},X_{j,j},W_{j})+(N-1)H(Z_{0},W_{j})+H(X_{j,j},W_{j}))\nonumber\\
& \overset{\mathrm{(b)}}{=} \sum_{j=0}^{N-1}(H(X'_{j},Z_{0},Z_{1},X_{j,j},W_{[0:N-1]})+(N-1)H(Z_{0},W_{j})+H(X_{j_{1},j},W_{j}))\nonumber\\
& \geq \sum_{j=0}^{N-1}(H(W_{[0:N-1]})+(N-2)H(Z_{0},W_{j})+H(Z_{0}|W_{j})+H(X_{j_{1},j}|W_{j})+2H(W_{j}))\nonumber\\
& \geq \sum_{j=0}^{N-1}(H(W_{[0:N-1]})+(N-2)H(Z_{0},W_{j})+H(Z_{0},X_{j_{1},j},W_{j})+H(W_{j}))\nonumber\\
& \overset{\mathrm{(c)}}{\geq} \sum_{j=0}^{N-1}(H(W_{[0:N-1]})+(N-2)H(Z_{0},W_{j})+H(Z_{0},W_{j},W_{j_{1}})+H(W_{j}))\nonumber\\
& = N(N+1)F + \sum_{j=0}^{N-1}(N-2)H(Z_{0},W_{j})+\sum_{j=0}^{N-1}H(Z_{0},W_{j},W_{j_{1}})\nonumber\\
& \overset{\mathrm{(d)}}{=} N(N+1)F + \sum_{j=0}^{N-1}\sum_{i=2}^{N-1}H(Z_{0},W_{j_{i}})+\sum_{j=0}^{N-1}H(Z_{0},W_{j},W_{j_{1}})\nonumber\\
& \geq N(N+1)F +
\sum_{j=0}^{N-1}(\sum_{i=2}^{N-1}H(W_{j_{i}}|Z_{0})+H(W_{j},W_{j_{1}}|Z_{0})
+(N-1)H(Z_{0})))\nonumber\\
& \geq N(N+1)F + \sum_{j=0}^{N-1}(H(W_{[0:N-1]})
+(N-2)H(Z_{0}))\nonumber\\
& = N(2N+1)F + N(N-2)H(Z_{0})\label{eq:e}
\end{align}
where (a) and (c) follow from the decodability criteria; (b) follows from decodability criteria and Lemma~\ref{Lem_eqiuv_distrbn2}; (d) follows by rearranging the terms of first summation and the definition \(j_i=(j \oplus i)\) mod \(N\). Cancelling out the common terms of \(H(Z_{0})\) from both sides in \eqref{eq:e}, we obtain
\begin{align}
&2NH(Z_{0}) + NH(Z_{1}) + \sum_{j=0}^{N-1}( 2H(X_{j,j}) + \sum_{i\neq j}H(X_{j,i}))  \geq N(2N+1)F \nonumber 
\end{align}
which implies that
\begin{align}
&3M+(N+1)R \geq 2N+1. \nonumber
\end{align}
This follows since $H(Z_i) \leq MF$ and $H(X_{j,i}) \leq RF$ for $i \in [0:N-1]$, $j \in [0:N-1]$ by definition. The proof is thus complete.

\subsection{Proof of Achievability for Theorem~\ref{Thm_exact_region}}
\label{sec_exact_achv}

\subsubsection{Achievability of the region in~\eqref{Eq_N2K2_region} for $N=K=2$}

We show that any memory-rate pair $(M,R)$ satisfying the below inequalities is achievable under demand privacy for $N=K=2$:
\begin{align*}
2M + R \geq 2, \quad 3M+3R \geq 5, \quad M+2R \geq 2.
\end{align*}
To this end, we use the concept of \emph{demand type} introduced in~\cite{Tian2018}.
\begin{definition}[Demand Types]
	In $(N,K)$-non-private coded caching problem, for a given demand vector
	$\bar{d}$, let $t_i$ denote the number of users requesting file $i$, where
	$i=0, \ldots, N-1$. Demand type of $\bar{d}$, denoted by $T(\bar{d})$, is defined
	as the $N$-length vector 
	$T(\bar{d}):=\bar{t} = (t_1, \ldots, t_N)$. The type class of
	$\bar{t}$ is defined as $\cD_{\bar{t}}=\{\bd|T(\bd)=\bar{t}\}$. 
\end{definition}

Clearly, the restricted demand subset $\DRS$ defined in Definition~\ref{Def_dmnd_subst} is a subset of the type class
$(K,K,\ldots,K)$, i.e., 
\begin{eqnarray}
&&\DRS \subseteq \cD_{(K,K,\ldots,K)}.\label{eq:drstype}
\end{eqnarray}
Indeed, for $\cD_1\subset \cD_2$, a
$\cD_2$-non-private scheme is also a $\cD_1$-non-private scheme. Thus, we have the following proposition.

\begin{proposition}
	\label{Cor_Type}
	If there exists an $(N,NK,M,R)$ $\cD_{(K,K,\ldots,K)}$-non-private scheme, then
	there exists an $(N,K,M,R)$-private scheme.
\end{proposition}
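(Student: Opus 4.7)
The plan is to derive this proposition essentially as a direct corollary of Theorem~\ref{Thm_genach} together with the containment already noted in \eqref{eq:drstype}. The structure I would follow is: first argue that every $\cD_{(K,K,\ldots,K)}$-non-private scheme is automatically a \DRS-non-private scheme, then invoke Theorem~\ref{Thm_genach} to convert such a scheme into an $(N,K,M,R)$-private scheme. No new construction is needed beyond what appears in the proof of Theorem~\ref{Thm_genach}.

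For the first step, I would observe that each $\bd \in \DRS$, being a concatenation of $K$ cyclic shifts of $(0,1,\ldots,N-1)$, contains every file label exactly $K$ times among its $NK$ entries, so the type vector $T(\bd)$ equals $(K,K,\ldots,K)$. This gives $\DRS \subseteq \cD_{(K,K,\ldots,K)}$, which is already recorded in \eqref{eq:drstype}. By definition, a $\cD$-non-private scheme is one whose caching, encoding, and decoding functions jointly satisfy the decodability condition~\eqref{Eq_dec_cond} for every $\bd \in \cD$; hence restricting correctness from a larger demand set to a smaller one is free. In particular, any $(N,NK,M,R)$ $\cD_{(K,K,\ldots,K)}$-non-private scheme, viewed without modification, is also an $(N,NK,M,R)$ \DRS-non-private scheme with identical memory and rate.

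For the second step, I would simply apply Theorem~\ref{Thm_genach} as a black box: feeding the \DRS-non-private scheme obtained above into the construction~\eqref{eq:cachesch}--\eqref{eq:decsch} yields an $(N,K,M,R)$-private scheme with the same parameters, completing the proof. The privacy argument, the negligible shared-key overhead, and the preservation of $R$ all follow from the analysis already carried out for Theorem~\ref{Thm_genach} and do not need to be redone.

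I do not anticipate any real obstacle here; the statement is essentially a monotonicity remark about the demand-set parameter of non-private schemes combined with the previously established blackbox reduction. The only point worth stating carefully is the set inclusion $\DRS \subseteq \cD_{(K,K,\ldots,K)}$, which is immediate from the cyclic-shift structure in Definition~\ref{Def_dmnd_subst}.
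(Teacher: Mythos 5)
Your proposal is correct and follows exactly the same route as the paper: it observes $\DRS \subseteq \cD_{(K,K,\ldots,K)}$, concludes that any $\cD_{(K,K,\ldots,K)}$-non-private scheme is also a \DRS-non-private scheme, and then invokes Theorem~\ref{Thm_genach}. No substantive difference from the paper's argument.
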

\begin{proof}
	As mentioned before, we have $\DRS \subseteq \cD_{(K,K,\ldots,K)}$. So, an $(N,NK,M,R)$ $\cD_{(K,K,\ldots,K)}$-non-private scheme is also an $(N,NK,M,R)$ $\DRS$-non-private scheme. Then, the proposition follows from Theorem~\ref{Thm_genach}.
\end{proof}

It was shown in~\cite[Proposition~7]{Tian2018} that
the region given by~\eqref{Eq_N2K2_region} is an achievable  for Type $(2,2)$ for the
$N=2,K=4$  coded caching problem without demand privacy. So it follows from Proposition~\ref{Cor_Type} that the same region is achievable under demand privacy for $N=K=2$. 

\begin{remark}
	The corner points of the region in~\eqref{Eq_N2K2_region} are $(0,2)$ $(\frac{1}{3},\frac{4}{3})$, $(\frac{4}{3}, \frac{1}{3})$ and $(2,0)$. The achievability of the pairs $(0,2)$ and $(2,0)$ for Type $(2,2)$ in $N=2,K=4$ non-private  coded caching
	problem is trivial. The achievability of the pairs 
$(\frac{1}{3},\frac{4}{3})$ and $(\frac{4}{3}, \frac{1}{3})$ were shown in~\cite{Tian2018}.  In Example~\ref{Ex_cach_random}, we showed that using the non-private scheme that achieves the memory-rate pair
$(\frac{1}{3},\frac{4}{3})$, we can achieve the same pair with demand privacy for $N=K=2$. Similarly, we can also achieve the pair $(\frac{4}{3}, \frac{1}{3})$. We further note that the  pair $(\frac{4}{3}, \frac{1}{3})$ is also achievable by the MDS scheme  in~\cite{Wan19}.
\end{remark}

\subsubsection{Achievability of the region in~\eqref{Eq_AnyNK2_region} for $N > K=2$}
We show that any rate-memory pair satisfying the below inequalities is achievable under demand privacy for $N > 2$ and $K=2$:
\begin{align*}
3M+NR \geq 2N, \quad 3M+(N+1)R \geq 2N+1, \quad M+NR \geq N. \label{Eq_AnyNK2_region}
\end{align*}
The corner points of this rate-memory curve are \((0,2), (\frac{N}{3},1), (\frac{N^{2}}{2N-1}, \frac{N-1}{2N-1})\) and \((N,0)\). The achievability of \((0,2) \) and \((N,0)\) was shown in Theorem~\ref{th:basic} while that of \((\frac{N}{3},1) \) and \((\frac{N^{2}}{2N-1}, \frac{N-1}{2N-1})\) is proved next. The achievability of the entire region then follows by memory sharing. Throughout this subsection, for simplicity we define $\tilde{k} = (k + 1)$ mod 2, where $k \in \{0,1\}$.

\underline{\textbf{Achievability of $(\frac{N}{3},1)$:}}
Now we describe Scheme D for $N>2$ files and 2 users which generalizes the ideas discussed in Example~\ref{Ex_exact_first_point}. Scheme D achieves rate 1 for memory $\frac{N}{3}$. We first give an outline of Scheme D before describing it in detail.

 In Scheme D, the server partitions each file into three symbols of equal size.  The first symbols of all files are cached at user 0 and the second symbols are cached at user 1. So, each user has $N$ symbols of $F/3$ bits in her cache. The server randomly permutes all these $N$ symbols before caching at each user. Thus,  the users do not know the position of each symbol in their own cache.  In the delivery phase, the server reveals the position of the symbol of the demanded file that is available in her cache, through auxiliary transmission. Thus, she needs two more symbols to recover the entire file, which are obtained from the broadcast.
 The main payload of the broadcast consists of three symbols each of size $F/3$ bits. Each user uses two out of these three symbols to recover its demanded file in 
 the two cases of $D_0 = D_1$ and $D_0 \neq D_1$. Out of  the two symbols that each user uses to recover the file,
 one symbol is coded (XOR-ed with a symbol available in the cache) and the other one is uncoded in both the cases. The remaining symbol in the broadcast, which the user does not use for decoding,  appears as a sequence of random bits to the user. This symmetry helps in achieving the privacy. Next we formally describe Scheme D.

\underline{Caching:} The server breaks each file $W_{i}, i\in[0:N-1]$ into 3 disjoint parts of equal size, i.e.,
\(W_i=(W_{i,0},W_{i,1},W_{i,2})\). We define \(Z'_{0}\) and \(Z'_{1}\) as follows:
\begin{align*}
Z'_{0}& := (W_{i,0})_{i \in [0:N-1]} \\
Z'_{1} & := (W_{i,1})_{i \in [0:N-1]}.
\end{align*}
Let \(\pi_0 \) and \(\pi_1\) be two  permutation functions which are independent and uniformly distributed  in the symmetric group of permutations of $[0:N-1]$. Further, for $k \in \{0,1\}$, let
\begin{align*}
Z''_{k} & := (Z''_{k,0},Z''_{k,1}, \ldots,Z''_{k,N-1}) \\
& = \pi_{k}(Z'_{k}).
\end{align*}
The server places  $Z''_{k}$ in the cache of user $k\in \{0,1\}$ along with  4 symbols \((S_{k,1},S_{k,2},P_{k,1},P_{k,2})\)  of negligible size, where  
\begin{align*}
&S_{k,j}\sim unif\{[0:N-1]\}, \\
&P_{k,j} \sim unif\{[0:2]\}, \quad \text{for } k\in \{0,1\},\; j=1,2.
\end{align*}
 These 4 symbols are used  in the delivery phase. Thus,  the cache of user \(k\), \(Z_{k}\) is given by
\[
Z_{k} = (Z''_{k},S_{k,1},S_{k,2},P_{k,1},P_{k,2} ).
\]
Observe that $Z'_{k}$ consists of $N$ symbols each containing $\frac{F}{3}$ bits, which gives\footnote{ where $o(F)$ is some function of $F$ such that $\lim_{F \to \infty} \frac{o(F)}{F} = 0$. \label{note1}}
\begin{align*}
len(Z_k) &= \frac{NF}{3} + o(F)\\
& = MF + o(F).
\end{align*}
Note that in this caching scheme, the server does not fully reveal the permutation functions \(\pi_0 \) and \(\pi_1\) with any user.

\underline{Delivery:} To describe the delivery, we first define
$$
    X' := \left\{
       \begin{array}{lcl}
        (W_{D_{0},1} \oplus W_{D_{1},0}, W_{D_{0},2}, W_{D_{1},2}) & \text{if } D_{0}\neq D_{1}\\
        \\
         (W_{D_{0},1} \oplus W_{m,0}, W_{D_{0},2}, W_{D_{0},0} \oplus W_{m,1}) & \text{if } D_{0} = D_{1}
        \end{array}
        \right.
$$
where \(m = (D_{0}+1)\) mod \(N\). The server picks a permutation function \(\pi_{2}\) uniformly at random from the symmetric group of permutations of $[0:2]$ and includes \(\pi_{2}(X')\) in the transmission. The permutation \(\pi_{2}\) is not fully revealed to any of the users. In addition to \(\pi_{2}(X')\), to recover the demanded files, users need some more information, which can be delivered with negligible rate. The entire broadcast is given by 
\begin{align*}
X &= (\pi_{2}(X'),J_1,J_2,J_3)\\
& = (X'',J_1,J_2,J_3)\\
&  = (X''_{0},X''_{1},X''_{2},J_1,J_2,J_3). 
\end{align*}
Here, \((J_1,J_2,J_3)\) are auxiliary transmissions which contain the extra information. The auxiliary transmission $J_{1}$ is given by \(J_{1}= (J_{1,0},J_{1,1})\), where
 $J_{1,k}$ is defined as
\begin{align}
J_{1,k}  := S_{k,1} \oplus \pi_{k}(D_{k}). \label{eq:s1J1}
\end{align}
Recall that $\pi_{k}(D_k)$ gives the position of $W_{D_k,k}$ in $Z''_{k}$ while $S_{k,1}$, $k \in \{0,1\}$ is a part of $Z_{k}$.

To define auxiliary transmission \(J_2\), we first define random variables \(T_{k,j}\) for \(k \in \{0,1\}\), \(j \in \{1,2\}\) as follows:
\begin{align} \label{eq:impobsT}
(T_{0,1},T_{0,2},T_{1,1},T_{1,2}) := \left\{
\begin{array}{lcl}
 (\pi_{2}(0), \pi_{2}(1), \pi_{2}(0), \pi_{2}(2)) & \text{if } D_{0}\neq D_{1}\\
\\
(\pi_{2}(0), \pi_{2}(1), \pi_{2}(2), \pi_{2}(1)) & \text{if } D_{0}=D_{1}.
 \end{array}
 \right.
\end{align}
Note that $\pi_{2}(i)$ gives the position of the $i$-th symbol of $X'$ in $\pi_2(X')$. 
Auxiliary transmission $J_{2}$ is given by \(J_{2} = (J_{2,0},J_{2,1}) = (J_{2,0,0},J_{2,0,1},J_{2,1,0},J_{2,1,1})\), where $ J_{2,k,j}$, $j \in \{0,1\}$ 
is defined as
\begin{align}
J_{2,k,j} & := P_{k,j+1} \oplus T_{k,j+1}. \label{eq:s1J2}
\end{align}
Recall that symbols $P_{k,1}$ and $P_{k,2}$ are part of $Z_k$.

Auxiliary transmission $J_3$ is given by \(J_{3} = (J_{3,0},J_{3,1})\),  
where \(J_{3,k}\) is defined as
\begin{align}
J_{3,k} := S_{k,2} \oplus \pi_{k}(p_{k}) \label{eq:s1J3}
\end{align}
and $p_k$ is given as 
\begin{align} \label{eq:impobsT}
p_k := \left\{
\begin{array}{lcl}
 D_{\tilde{k}} & \text{if } D_{0}\neq D_{1}\\
\\
m & \text{if } D_{0}=D_{1}.
 \end{array}
 \right.
\end{align}
Recall that \(m = (D_{0}+1)\) mod \(N\), and $S_{0,2}$ and $S_{1,2}$ are part of $Z_0$ and $Z_1$, respectively.

Observe that $X'$ contains 3 symbols, each of size $\frac{F}{3}$ bits, which gives
\begin{align*}
len(X) &= \frac{3F}{3} + o(F) \\
&= RF + o(F).
\end{align*}

\underline{Decoding:} 
 We discuss the decoding of file $W_{D_k}$ for user $ k=0,1$. 
User $k$ first recovers \(\pi_{k}(D_{k}),T_{k,1},T_{k,2}\) and \(\pi_{k}(p_{k})\) from \(J_{1},J_{2}\) and \(J_{3}\) as follows:
\begin{align*}
& J_{1,k} \oplus S_{k,1} = \pi_{k}(D_{k}) \oplus S_{k,1} \oplus S_{k,1} = \pi_{k}(D_{k}) \qquad \text{(using \eqref{eq:s1J1})} \\
& J_{2,k,0}\oplus P_{k,1} = T_{k,1} \oplus P_{k,1} \oplus P_{k,1} = T_{k,1} \qquad \qquad \text{(using \eqref{eq:s1J2})} \\
& J_{2,k,1}\oplus P_{k,2} = T_{k,2}\oplus P_{k,2} \oplus P_{k,2} = T_{k,2} \qquad \qquad \text{(using \eqref{eq:s1J2})} \\
& J_{3,k} \oplus S_{k,2} = \pi_{k}(p_{k}) \oplus S_{k,2} \oplus S_{k,2} = \pi_{k}(p_{k}) \qquad \enspace \text{(using \eqref{eq:s1J3})}.
\end{align*}
User $k$ decodes the 3 parts of \(W_{D_{k}}\) as follows:
\begin{align*}
& \widehat{W}_{D_{k},k} = Z''_{k, \pi_{k}(D_{k})} \\
& \widehat{W}_{D_{k},2} = X''_{T_{k,2}} \\
& \widehat{W}_{D_{k},\tilde{k}} = X''_{T_{k,1}} \oplus Z''_{k,\pi_{k}(p_{k})}.
\end{align*}
User $k$ can recover each of $\widehat{W}_{D_{k},k}, \widehat{W}_{D_{k},2}$ and $\widehat{W}_{D_{k},\tilde{k}}$, where $\tilde{k} = (k+1)$ mod 2, because she has access to each symbol $X''_{i}$, $i \in [0:2]$ from the broadcast while all symbols $Z''_{k,j}$, $j \in [0:N-1]$ are available in her cache.

Observe that $\widehat{W}_{D_{k},k} = {W}_{D_{k},k}$ and $\widehat{W}_{D_{k},2} = {W}_{D_{k},2}$ by definition of $\pi_{k}(D_{k})$ and $T_{k,2}$, respectively. We show that  $\widehat{W}_{D_{k},\tilde{k}} = W_{D_k,\tilde{k}}$ by considering the following two cases: 

\noindent Case 1: $D_0 \neq D_1$
\begin{align*}
 \widehat{W}_{D_{k},\tilde{k}} & = X''_{T_{k,1}} \oplus Z''_{k,\pi_{k}(p_{k})} \\
& =  X''_{T_{k,1}} \oplus Z''_{k,\pi_{k}(D_{\tilde{k}})} \qquad \text{(using \eqref{eq:impobsT})} \\ 
& = W_{D_k,\tilde{k}} \oplus W_{D_{\tilde{k}},k} \oplus W_{D_{\tilde{k}},k}\\
& = W_{D_k,\tilde{k}}.
\end{align*}

\noindent Case 2: $D_0 = D_1$
\begin{align*}
\widehat{W}_{D_{k},\tilde{k}} & = X''_{T_{k,1}} \oplus Z''_{k,\pi_{k}(p_{k})}\\ 
& = X''_{T_{k,1}} \oplus Z''_{k,\pi_{k}(m)} \qquad \text{(using \eqref{eq:impobsT})} \\ 
& = W_{D_k,\tilde{k}} \oplus W_{m,k} \oplus W_{m,k}\\ 
& = W_{D_k,\tilde{k}}.
\end{align*}
Having retrieved these 3 segments of $W_{D_k}$, user $k$ recovers  \(W_{D_{k}}\) by concatenating \(W_{D_{k},0},W_{D_{k},1}\) and \(W_{D_{k},2}\) in that order.

\underline{Proof of privacy:} We now prove that Scheme D is demand-private for user $k \in \{0,1\}$, i.e., \eqref{Eq_instant_priv} holds true for this scheme. Recall that $\tilde{k}$ is defined as $\tilde{k} = (k+1)$ mod 2. Then the following sequence of equalities holds true:
\begin{align}
    I(D_{\tilde{k}};X,Z_k,D_k)
    & = I(D_{\tilde{k}};X|Z_{k},D_{k}) + I(D_{\tilde{k}};Z_k|D_{k}) +  I(D_{k};D_{\tilde{k}}) \nonumber \\
    & \overset{\mathrm{(a)}}{=} I(X;D_{\tilde{k}}|Z_{k},D_{k}) \nonumber \\
    &=I(\pi_{2}(X'),J_1,J_2,J_3;D_{\tilde{k}}|\pi_{k}(Z'_{k}), S_{k,1},S_{k,2},P_{k,1},P_{k,2},D_{k}) \nonumber  \\
    &\overset{\mathrm{(b)}}{=}I(J_1,J_2,J_3;D_{\tilde{k}}| S_{k,1},S_{k,2},P_{k,1},P_{k,2},D_{k})\nonumber \\
    &=I(J_{1,0},J_{1,1},J_{2,0},J_{2,1},J_{3,0},J_{3,1};D_{\tilde{k}}| S_{k,1},S_{k,2},P_{k,1},P_{k,2},D_{k}) \nonumber  \\
    & \overset{\mathrm{(c)}}{=} I(J_{1,k},J_{2,k},J_{3,k};D_{\tilde{k}}| S_{k,1},S_{k,2},P_{k,1},P_{k,2},D_{k})\nonumber \\
    & = I((S_{k,1} \oplus \pi_{k}(D_k)),(P_{k,1} \oplus T_{k,1}, P_{k,2} \oplus T_{k,2} ), (S_{k,2} \oplus \pi_{k}(p_k));D_{\tilde{k}}| S_{k,1},S_{k,2},P_{k,1},P_{k,2},D_{k}) \nonumber \\
    &=I(\pi_{k}(D_k) , T_{k,1}, T_{k,2}, \pi_{k}(p_k);D_{\tilde{k}}|S_{k,1},S_{k,2},P_{k,1},P_{k,2},D_{k}) \nonumber \\ 
    &\overset{\mathrm{(d)}}{=}I(\pi_{k}(D_k) , T_{k,1}, T_{k,2}, \pi_{k}(p_k);D_{\tilde{k}}|D_{k}) 
     \label{eq:pause1}
\end{align}
where (a) follows because $Z_k$ is independent of $(D_0,D_1)$ and also $D_0$ and $D_1$ are independent; for (b) note that for any fixed value of \((J_1,J_2,J_3,D_k,D_{\tilde{k}},S_{k,1},S_{k,2},P_{k,1},P_{k,2})\), we have
\begin{align*}
 (X', Z'_k|J_1,J_2,J_3,D_k,D_{\tilde{k}},S_{k,1},S_{k,2},P_{k,1},P_{k,2}) \sim  unif\{0,1\}^{\left(1+\frac{N}{3}\right)F}
\end{align*}
 which holds because for both cases $D_0 \neq D_1$ and $D_0 = D_1$, the symbols in $X'$ and $Z_k$ are independent. Hence, $X', Z_k $ and \((J_1,J_2,J_3,D_k,D_{\tilde{k}},S_{k,1},S_{k,2},P_{k,1},P_{k,2})\) are independent which gives (b); (c) follows because
$(S_{\tilde{k},1},S_{\tilde{k},2},$ $ P_{\tilde{k},1},P_{\tilde{k},2})$ which are one-time pads for symbols in $(J_{1,\tilde{k}},J_{2,\tilde{k}},J_{3,\tilde{k}})$ are independent of all other random variables; (d) follows because $(S_{k,1},S_{k,2},P_{k,1},P_{k,2})$ are independent of all other random variables.

Now we show that the RHS of~\eqref{eq:pause1} is 0.
From the definition in~\eqref{eq:impobsT}, we have that, for all fixed values of $D_0$, $D_1$, $\pi_0$, $\pi_1$, and $t_1, t_2 \in [0:2]$,
\begin{equation} \label{eq:impnote}
\Pr(T_{k,1}=t_{1},T_{k,2}=t_{2}|D_{0}, D_{1}, \pi_{0},\pi_{1}) = \Pr(T_{k,1}=t_{1},T_{k,2}=t_{2}) = \left\{
       \begin{array}{lr}
       \frac{1}{6}, \quad  & \text{if } t_{1}\neq t_{2}\\
       \\
       0, \quad  & \text{if } t_{1} = t_{2}.\\
        \end{array}
        \right.
\end{equation} 
Hence, $(T_{k,1},T_{k,2})$ is independent of $(D_{0}, D_{1}, \pi_{0},\pi_{1})$. Also, from definition we know that $(\pi_{k}(D_k),\pi_{k}(p_k))$ is a function of \((D_{0},D_{1},\pi_{k})\), which implies the independence of $(T_{k,1},T_{k,2})$ and $(\pi_{k}(D_k),\pi_{k}(p_k),D_0, D_1)$. Then, it follows from~\eqref{eq:pause1} that
\begin{align}
I(D_{\tilde{k}};X,Z_k,D_k) & =  I(\pi_{k}(D_k) , \pi_{k}(p_k);D_{\tilde{k}}|D_{k})\nonumber \\
    & = I(\pi_{k}(p_k);D_{\tilde{k}}|D_{k},\pi_{k}(D_k) +  I(\pi_{k}(D_k);D_{\tilde{k}}|D_{k}) \nonumber \\
    &\overset{\mathrm{(a)}}{=} 0 \nonumber
\end{align}
where (a) follows because for any fixed value of $(D_k,D_{\tilde{k}},\pi_{k}(D_k))$, we have
\begin{align*}
(\pi_{k}(p_k)|D_{k},D_{\tilde{k}},\pi_{k}(D_k)) \sim  (\pi_{k}(p_k)|D_{k},\pi_{k}(D_k)) \sim unif\{[0:N-1] \setminus \{\pi_{k}(D_k)\}\}
\end{align*}
and 
\begin{align*}
(\pi_{k}(D_k)|D_{k},D_{\tilde{k}}) \sim (\pi_{k}(D_k)|D_{k}) \sim unif\{[0:N-1]\}.
\end{align*}
This completes the proof of privacy.
\newline

\noindent \underline{\textbf{Achievability of \((\frac{N^{2}}{2N-1}, \frac{N-1}{2N-1})\):}}
 Now we describe Scheme E for $N>2$ files and 2 users which achieves rate $\frac{N-1}{2N-1}$ for memory $\frac{N^{2}}{2N-1}$.  File \(W_{i}\), $i \in [0:N-1]$ is partitioned  into \(2N-1\) disjoint parts of equal size, i.e., $W_i$ is given by
\(W_i =(W_{i,0},W_{i,1},\ldots,W_{i,2N-2})\). File \(W_{i}\) is then encoded using a \((3N-2,2N-1)\) MDS code such that each of \((3N-2)\) coded symbols has \(\frac{F}{2N-1}\) bits. Each file can be reconstructed using any \((2N-1)\) coded symbols. One of the \((3N-2)\) coded symbols of file \(W_{i}\) is denoted by  \(F_{i,0}\) while the remaining $(3N-3)$ symbols are denoted by \(F_{i,j,k}\), where \(j \in [0:2]\) and \(k \in [0:N-2]\). Next we give an outline of Scheme E.

In Scheme E, $N$ out of $3N-2$ symbols of each file are cached at each user. Out of these $N$ symbols, one symbol is common with the other user and remaining $N-1$ symbols are distinct from the other user. Similar to Scheme D, the server randomly  permutes these $N^2$ symbols before caching at each user. The main payload of the broadcast consists of $N-1$ symbols. To decode the demanded file, each user needs $2N-1$  symbols. The server reveals the positions of the $N$ symbols of the requested file that are available in the cache of each user, through auxiliary transmission. Both users obtain additional $N-1$ symbols from the broadcast in the two cases of $D_0=D_1$ and $D_0 \neq D_1$. This symmetry is a crucial point in preserving privacy. We formally describe Scheme E next.

\underline{Caching:} To give the cache contents of the users, we first  define tuples \(\cL_{0},\cL_{1}\) and \(\cL_{2}\) as follows:
$$   \cL_{j} = 
        (F_{i,0}, F_{i,j,1}, F_{i,j,2},\ldots,F_{i,j,N-1})_{i \in [0:N-1]}, \quad \forall j \in [0:2].
$$
The server randomly picks any 2 of  \(\cL_{0},\cL_{1}\) and \(\cL_{2}\)  and places one of them in the cache of user 0 after applying a random permutation, and places the other in the cache of user 1 after applying another independent permutation. To describe this process formally, we first define 
\begin{align*}
&U_0 \sim unif\{[0:2]\} \\
& U_1 \sim unif\{[0:2]\backslash \{ U_0\} \}.
\end{align*}
Let \(\pi_{0}\) and \(\pi_{1}\) be two independent and uniformly distributed permutation functions in the symmetric group of permutations of $[0:N^{2}-1]$. Further, for $k\in \{0,1\}$, let
\begin{align} 
Z'_{k} & = (Z'_{k,0},Z'_{k,1}, \ldots,Z'_{k,N^2-1}) = \pi_{k}(\cL_{U_k}). \label{eq:cache}
\end{align}
The server places $Z'_{k}$ and $U_k$ in the cache of user $k$ and also the symbols \((S_{k,1},S_{k,2},\ldots,S_{k,2N-1},P_{k})\), where 
\begin{align*}
 S_{k,j} & \sim unif\{[0:N^{2}-1]\}, \\
  P_{k} & \sim unif\{[0:2]\}, \qquad  \forall k \in \{0,1\}, \enspace j \in [0:2N-1]\backslash \{0\}. 
\end{align*}
These symbols are used in the delivery phase. Thus, the cache of user \(k\), \(Z_{k}\) is given by
\[
Z_{k} = (Z'_{k},S_{k,1},S_{k,2},\ldots,S_{k,2N-1},P_{k}, U_k ).
\]
Note that the main payload consists of $N$ coded symbols of each file and each symbol has $\frac{F}{2N-1}$ bits. Thus, we have
\begin{align*}
len(Z_k) &= \frac{N^2 F}{2N-1} + o(F) \\
&= MF + o(F).
\end{align*}

\underline{Delivery:} To describe the delivery phase, we first define
\begin{align} \label{eq:X'}
    X' = (X'_{0},X'_{1},\ldots,X'_{N-2}) = \left\{
       \begin{array}{lcl}
        (F_{D_{0},U_1,t} \oplus F_{D_{1},U_0,t})_{t \in [0:N-1]\backslash \{0\}} & \text{if } D_{0}\neq D_{1}\\
        \\
        (F_{D_{0},V,t} \oplus F_{m_{t},0})_{t \in [0:N-1]\backslash \{0\}} & \text{if } D_{0} = D_{1}
        \end{array}
        \right.
\end{align}
where \(m_{t} = (D_{0}+t)\) mod \(N\), and \(V = [0:2]\backslash\{U_0,U_1\}\).
The transmitted message $X$ is given by
\[
X=(X',J_1,J_2,J_3)
\]
where \((J_1,J_2,J_3)\) are the auxiliary transmissions.
Next we describe these auxiliary transmissions.

To describe $J_{1}$, we define
\begin{align}
C^{k}_{i,j} := \pi_{k}(Ni+j),  \quad \forall  i\in [0:N-1], \; j\in [0:N-1], \; k \in \{0,1\}. \label{eq:pos}
\end{align}
Thus, \(C^{k}_{i,0}\) and \(C^{k}_{i,j}\) respectively give the positions of \(F_{i,0}\) and \(F_{i,U_k,j}\) in \(Z'_{k}\).
The auxiliary transmission $J_1$ is given by \(J_1 = (J_{1,0},J_{1,1})\),  
where 
\begin{align}
J_{1,k} & = (J_{1,k,j})_{j \in [0:N-1]} := (S_{k,j+1} \oplus C^{k}_{D_{k},j})_{j \in [0:N-1]}, \quad k\in \{0,1\}. \label{eq:J1}
\end{align}
Here, $\oplus$ denotes addition modulo $N^2$ and also 
note that $S_{k,j+1}$ are part of $Z_k$.

Auxiliary transmission $J_{2}$ is defined as \(J_{2}=(J_{2,0},J_{2,1})\), where  
\begin{align}
J_{2,k} = (J_{2,k,j})_{j \in [0:N-2]} := (S_{k,N+1+j} \oplus H_{k,j+1})_{j \in [0:N-2]}, \quad k \in \{0,1\} \label{eq:J2}
\end{align}
with  \(H_{k,j} \in [0:N^2-1]\) defined by
\begin{align} \label{eq:defh}
H_{k,j} := \left\{
\begin{array}{lcl}
C^{k}_{D_{\tilde{k}},j} & \text{if } D_{0}\neq D_{1}\\
\\
C^{k}_{m_{j},0} & \text{if } D_{0} = D_{1}.
\end{array}
\right.
\end{align}
 Note that, for \(k \in \{0,1\}\) and \(j \in [0:N-1]\backslash \{0\}\), $S_{k,N+1+j} \in [0:N^{2}-1]$ are part of $Z_k$.

Finally, the auxiliary transmission $J_{3}$ is defined as \(J_{3}= (J_{3,0},J_{3,1})\), where
\begin{align}
J_{3,k} &  := (P_{k} \oplus T_{k}). \label{eq:J3}
\end{align}
Here, $P_k$ is a part of $Z_k$, and \((T_{0}, T_{1})\) is defined as
$$
    (T_{0}, T_{1}) := \left\{
       \begin{array}{lcl}
        (U_1, U_0) & \text{if } D_{0}\neq D_{1}\\
        \\
        (V, V) & \text{if } D_{0} = D_{1}.
        \end{array}
        \right.
$$
Observe that the main payload $X'$ consists of $(N-1)$ symbols of $\frac{F}{2N-1}$ bits each. Thus, we have
\begin{align*}
len(X) &= \frac{(N-1)F}{2N-1} + o(F) \\
&= RF + o(F).
\end{align*}

\underline{Decoding:} Now we describe the decoding of file $W_{D_k}$ at user $k \in \{0,1\}$. 
For $i \in [0:N-1], j \in [0:N-1] \backslash \{0\}$, user $k$  recovers \((C^{k}_{D_{k},0}, C^{k}_{D_{k},1},\ldots, C^{k}_{D_{k},N-1})\), \((H_{k,1}, H_{k,2},\ldots,H_{k,N-1})\) and \(T_{k}\) from \(J_{1},J_{2}\) and \(J_{3}\), respectively as follows:
\begin{align*}
& J_{1,k,i} \oplus S_{k,i+1} = C^{k}_{D_{k},i} \oplus S_{k,i+1} \oplus S_{k,i+1} = C^{k}_{D_{k},i} \qquad \quad \enspace \text{(using~\eqref{eq:J1})} \\
& J_{2,k,j-1} \oplus S_{k,N+j} = H_{k,j} \oplus S_{k,N+j} \oplus S_{k,N+j} = H_{k,j} \qquad \text{(using~\eqref{eq:J2})} \\
& J_{3,k} \oplus P_{k} = T_{k} \oplus P_{k} \oplus P_{k} = T_{k} \qquad \qquad \qquad \qquad \qquad \enspace \hspace{0.8mm} \text{(using~\eqref{eq:J3})}.
\end{align*} 
The coded symbols of \(W_{D_{k}}\), namely, \(F_{D_{k},0}\) and $F_{D_{k},U_k,j}, j \in [0:N-1]\setminus\{0\} $ are stored in the cache of user $k$, but their positions are unknown to the user. Using \((C^{k}_{D_{k},0}, C^{k}_{D_{k},1},\ldots,C^{k}_{D_{k},N-1})\), user $k$ can  recover these symbols as
\begin{align*}
\widehat{F}_{D_{k},0} &= Z'_{k,C^{k}_{D_{k},0}}, \\ \widehat{F}_{D_{k},U_k,j} & = Z'_{k,C^{k}_{D_{k},j}}, \quad \text{for } j \in [0:N-1] \backslash \{0\}.
\end{align*}

Observe that, by the definition of $C^{k}_{D_{k},0}$ and $ C^{k}_{D_{k},j}$, we get $\widehat{F}_{D_{k},0} = {F}_{D_{k},0}$ and $ \widehat{F}_{D_{k},U_k,j} = {F}_{D_{k},U_k,j}$. Now that user $k$ has recovered $N$ coded symbols of $W_{D_k}$, we show how it recovers $(N-1)$ more symbols namely, \(F_{D_{k},T_{k},j}\), $j \in [0:N-1]\setminus\{0\} $. 
Symbol \(F_{D_{k},T_{k},j}\) can be recovered from the main payload using \((H_{k,1}, H_{k,2},\ldots,H_{k,N-1})\) as follows:
$$
\widehat{F}_{D_{k},T_{k},j} = X'_{j-1} \oplus Z'_{k,H_{k,j}}, \quad \text{for } j \in [0:N-1] \backslash \{0\}.
$$
To show that $\widehat{F}_{D_{k},T_{k},j} =  {F}_{D_{k},T_{k},j}$, we consider the following two cases:
\newline
Case 1: $D_0 \neq D_1$
\begin{align*}
\widehat{F}_{D_{k},T_{k},j} &= X'_{j-1} \oplus Z'_{k,H_{k,j}}\\ 
& = F_{D_{0},U_1,j} \oplus F_{D_{1},U_0,j} \oplus Z'_{k,C^{k}_{D_{\tilde{k}},j}} \qquad \text{(using \eqref{eq:X'} and \eqref{eq:defh})}\\ 
&= F_{D_{0},U_1,j} \oplus F_{D_{1},U_0,j} \oplus F_{D_{\tilde{k}},U_k,j} \qquad   \text{(using \eqref{eq:pos} and \eqref{eq:cache})}\\ 
&= F_{D_{k},U_{\tilde{k}},j}\\ 
&= F_{D_{k},T_{k},j}.
\end{align*}
Case 2: $D_0 = D_1$
\begin{align*}
\widehat{F}_{D_{k},T_{k},j} &= X'_{j-1} \oplus Z'_{k,H_{k,j}} \\ &= F_{D_{0},V,j} \oplus F_{m_{j},0} \oplus Z'_{k,C^{k}_{m_{j},0}} \qquad \text{(using \eqref{eq:X'} and \eqref{eq:defh})} \\ 
&= F_{D_{0},V,j} \oplus F_{m_{j},0} \oplus F_{m_{j},0} \qquad \quad \text{(using \eqref{eq:pos} and \eqref{eq:cache})} \\ 
&= F_{D_{0},V,j} \\ &= F_{D_{k},T_{k},j}.
\end{align*}
Since \(T_k \neq U_k\), user $k$ has retrieved \(2N-1\) distinct symbols of the MDS code. Using these, user $k$ can decode file \(W_{D_{k}}\).

\underline{Proof of privacy: } Now we prove that  Scheme E is demand-private for user $k=0,1$, i.e., \eqref{Eq_instant_priv} holds true for this scheme.  Recall that $\tilde{k}$ is defined as $\tilde{k} = (k+1)$ mod 2.  Then the following sequence of equalities holds true:
\begin{align}
    &I(D_{\tilde{k}};X,Z_k,D_k) \nonumber \\
    & = I(D_{\tilde{k}};X|Z_{k},D_{k}) + I(D_{\tilde{k}};Z_k|D_{k}) +  I(D_{k};D_{\tilde{k}}) \nonumber \\
    & \overset{\mathrm{(a)}}{=} I(X;D_{\tilde{k}}|Z_{k},D_{k}) \nonumber \\
    &=I(X', J_{1}, J_{2}, J_{3};D_{\tilde{k}}|\pi_{k}(\cL_{U_k}), S_{k,1},S_{k,2},\ldots,S_{k,2N-1},P_{k},U_k,D_{k}) \nonumber \\
    &\overset{\mathrm{(b)}}{=}I(J_{1}, J_{2}, J_{3};D_{\tilde{k}}| S_{k,1},S_{k,2},\ldots,S_{k,2N-1},P_{k},U_k,D_{k})\nonumber \\
    & = I(J_{1,0},J_{1,1}, J_{2,0},J_{2,1}, J_{3,0},J_{3,1};D_{\tilde{k}}| S_{k,1},S_{k,2},\ldots,S_{k,2N-1},P_{k},U_k,D_{k})\nonumber \\
    & \overset{\mathrm{(c)}}{=} I(J_{1,k}, J_{2,k}, J_{3,k};D_{\tilde{k}}| S_{k,1},S_{k,2},\ldots,S_{k,2N-1},P_{k},U_k,D_{k})\nonumber \\
    &=I(C^{k}_{D_{k},0}, C^{k}_{D_{k},1},\ldots, C^{k}_{D_{k},N-1}, H_{k,1}, H_{k,2},\ldots,H_{k,N-1}, T_{k};D_{\tilde{k}}| S_{k,1},S_{k,2},\ldots,S_{k,2N-1},P_{k},U_k,D_{k})\nonumber \\
    &\overset{\mathrm{(d)}}{=}I(C^{k}_{D_{k},0}, C^{k}_{D_{k},1},\ldots, C^{k}_{D_{k},N-1}, H_{k,1}, H_{k,2},\ldots,H_{k,N-1}, T_{k};D_{\tilde{k}}|U_k,D_{k}) \label{eq:pause2}
\end{align}
where (a) follows because $Z_k$ is independent of $(D_0,D_1)$ and also $D_0$ and $D_1$ are independent; (b) follows because for any fixed value of \((J_1,J_2,J_3,S_{k,1},S_{k,2},\ldots,S_{k,2N-1},P_{k},U_k,D_{k},D_{\tilde{k}})\), we have
\begin{align*}
 (X',\cL_{U_k}|J_1,J_2,J_3,S_{k,1},S_{k,2},\ldots\,S_{k,2N-1},P_{k},U_k,D_{k},D_{\tilde{k}}) \sim unif\{0,1\}^{F\frac{(N^{2}+N-1)}{(2N-1)}};
\end{align*}
 (c) follows because $(J_{1,\tilde{k}}, J_{2,\tilde{k}}, J_{3,\tilde{k}})$ are encoded using one-time pads which are only available with user $\tilde{k}$; (d) follows because $(S_{k,1},S_{k,2},\ldots,S_{k,2N-1},P_{k})$ are one-time pads which are independent of all other random variables.

Next we show that the RHS of~\eqref{eq:pause2} is 0. To this end, we need the following.
For all \(C^{k}_{D_{k},i}\) and \(H_{k,j}\) distinct, observe that:
\begin{enumerate}[(i)]
	\item  For $d_0 \in [0:N-1], d_1 \in [0:N-1]$, \(d_0 \neq d_1\) and any fixed values of $(U_0,U_1)$,
	\begin{align}
	&\Pr(C^{k}_{D_{k},0}, C^{k}_{D_{k},1},\ldots, C^{k}_{D_{k},N-1}, H_{k,1}, H_{k,2},\ldots,H_{k,N-1}|D_{0}=d_0, D_{1}=d_1, U_0, U_1)\nonumber\\
	&\overset{\mathrm{(e)}}{=}\Pr(C^{k}_{d_k,0}, C^{k}_{d_k,1},\ldots, C^{k}_{d_k,N-1}, C^{k}_{d_{\tilde{k}},1}, C^{k}_{d_{\tilde{k}},2},\ldots,C^{k}_{d_{\tilde{k}},N-1}|D_{0}=d_{0}, D_{1}=d_{1}, U_0, U_1)\nonumber\\
	&\overset{\mathrm{(f)}}{=}\Pr(C^{k}_{d_k,0}, C^{k}_{d_k,1},\ldots, C^{k}_{d_k,N-1}, C^{k}_{d_{\tilde{k}},1}, C^{k}_{d_{\tilde{k}},2},\ldots,C^{k}_{d_{\tilde{k}},N-1})\nonumber\\
	&=\frac{(N^{2}-2N+1)!}{(N^{2})!}. \label{eq:obs3}
	\end{align}
	\item  For $d_0 \in [0:N-1], d_1 \in [0:N-1]$, \(d_0 = d_1\) and any fixed values of $(U_0,U_1)$,
	\begin{align}
	&\Pr(C^{k}_{D_{k},0}, C^{k}_{D_{k},1},\ldots, C^{k}_{D_{k},N-1}, H_{k,1}, H_{k,2},\ldots,H_{k,N-1}|D_{0}=d_0, D_{1}=d_1, U_0, U_1)\nonumber\\
	&\overset{\mathrm{(g)}}{=}\Pr(C^{k}_{d_{k},0}, C^{k}_{d_{k},1},\ldots, C^{k}_{d_{k},N-1}, C^{k}_{m_{1},0}, C^{k}_{m_{2},0},\ldots,C^{k}_{m_{N-1},0}|D_{0}=d_{0}, D_{1}=d_{1}, U_0, U_1)\nonumber\\
	&\overset{\mathrm{(h)}}{=}\Pr(C^{k}_{d_{k},0}, C^{k}_{d_{k},1},\ldots, C^{k}_{d_{k},N-1}, C^{k}_{m_{1},0}, C^{k}_{m_{2},0},\ldots,C^{k}_{m_{N-1},0})\nonumber\\
	&=\frac{(N^{2}-2N+1)!}{(N^{2})!}. \label{eq:obs4}
	\end{align}
\end{enumerate}
Here, (e) and (g) follow from \eqref{eq:defh}; (f) follows because $(C^{k}_{d_k,0}, C^{k}_{d_k,1},\ldots, C^{k}_{d_k,N-1}, C^{k}_{d_{\tilde{k}},1}, C^{k}_{d_{\tilde{k}},2},\ldots,C^{k}_{d_{\tilde{k}},N-1})$ only depends on $\pi_k$ which is independent of $(D_0,D_1,U_0,U_1)$; (h) follows for similar reasons as (f).
Note that by definition \(T_{k}\) is a function of \((D_{0},D_{1},U_0,U_1)\).

Now using~\eqref{eq:obs3} and~\eqref{eq:obs4}, we  conclude that \((C^{k}_{D_{k},0}, C^{k}_{D_{k},1},\ldots, C^{k}_{D_{k},N-1}, H_{k,1}, H_{k,2},\ldots,H_{k,N-1})\) is independent of
 \((D_{0}, D_{1},T_{k}, U_k )\). Thus, it follows from~\eqref{eq:pause2} that
\begin{align}
&I(C^{k}_{D_{k},0}, C^{k}_{D_{k},1},\ldots, C^{k}_{D_{k},N-1}, H_{k,1}, H_{k,2},\ldots,H_{k,N-1}, T_{k};D_{\tilde{k}}|U_k,D_{k}) = I(T_{k};D_{\tilde{k}}|U_k,D_{k}). \label{eq:pause3}
\end{align}
For $d_0 \in [0:N-1], d_1 \in [0:N-1], u_k \in [0:2], t \in [0:2]\setminus\{u_k\} $, 
\begin{align}
\Pr(T_{k}=t|D_{0}=d_0,D_{1} = d_1,U_k=u_k) =
\begin{cases}
\Pr(U_{\tilde{k}}=t|D_{0}=d_0,D_{1} = d_1,U_k=u_k) = \frac{1}{2}, \quad  &\text{if } d_0 \neq d_1 \\\\
 \Pr(V=t|D_{0}=d_0,D_{1} = d_1,U_k=u_k) = \frac{1}{2}, \quad &\text{if } d_0 = d_1.
\end{cases}
\label{eq:obs5}
\end{align}
From~\eqref{eq:obs5}, we obtain that $T_{k}$ is independent of $(U_k,D_{k},D_{\tilde{k}})$.  It thus follows from~\eqref{eq:pause3} and~\eqref{eq:pause2}  that
\begin{align}
&I(D_{\tilde{k}};X,Z_k,D_k) = I(T_{k};D_{\tilde{k}}|U_k,D_{k}) = 0.\nonumber
\end{align}
This completes the proof for privacy.

 \appendices
 \section{Proof of Lemma~\ref{Lem_subset_users}}
\label{Sec_proof_subset_lem}

To prove this lemma, we need to show that user $u'_i$ can recover all $Z^{i}_{j,\cS}$ such that $\cR^{-} \subset \cT, |\cR^{-}|=r-1$, and $0 \in \cS$. For $\cA := \cS \setminus \{0\}$, it follows from the definition that
\begin{align}
Z^{j}_{i,\cS} &= \bigoplus_{t \in \cV_{i} \backslash \cV_{i} \cap \cS} W_{j,\cS \cup \{t\}} \nonumber\\
&= \bigoplus_{t \in \cV_{i} \backslash \cV_{i} \cap \cR^{-}} W_{j,\cA \cup \{0, t\} }.\label{eq:pause10} 
\end{align}
For $t \in \cV_{i} \backslash \cV_{i} \cap \cS$, we can write
\begin{align}
Z^{j}_{i,\cA \cup \{t\}} &= \bigoplus_{u \in \cV_{i} \backslash \cV_{i} \cap (\cA \cup \{t\})} W_{j, \cA \cup \{t,u\}}\nonumber\\
&\overset{(a)}{=} W_{j,\cA \cup\{ 0, t\}} \oplus \bigoplus_{u \in \cV_{i} \backslash \cV_{i} \cap (\cS \cup \{t\})} W_{j, \cA \cup \{t,u\}} \nonumber
\end{align}
where, $(a)$ follows because $\cV_{i} \backslash \cV_{i} \cap (\cA \cup \{t\})   = \left(\cV_{i} \backslash \cV_{i} \cap (\cS \cup \{t\})\right) \cup \{0\}$. Thus, we have
\begin{align}
W_{j,\cA \cup \{ 0, t\}} &= Z^{j}_{i,\cA \cup \{t\}} \oplus \bigoplus_{u \in \cV_{i} \backslash \cV_{i} \cap (\cS \cup \{t\})} W_{j,  \cA \cup \{t,u\}}. \nonumber
\end{align}
Substituting the above expression of $W_{j,\cA \cup \{ 0, t\}}$ in~\eqref{eq:pause10}, we get
\begin{align}
Z^{j}_{i,\cS} &= \bigoplus_{t \in \cV_{i} \backslash \cV_{i} \cap \cS} Z^{j}_{i,\cA \cup \{t\}} \oplus \bigoplus_{t \in \cV_{i} \backslash \cV_{i} \cap \cS} \; \bigoplus_{u \in \cV_{i} \backslash \cV_{i} \cap (\cS \cup \{t\})} W_{j, \cA \cup \{t,u\}}. \nonumber
\end{align}
Observe that $Z^{j}_{i,\cA \cup \{t\}} $ is cached at the user while the second term is zero because every term $W_{j, \cA \cup \{t,u\}}$ appears twice in the double summation. This shows that  $Z^{j}_{i,\cS}$ can be recovered using only the cache contents. This completes the proof of Lemma~\ref{Lem_subset_users}.

\section{Proof of Lemma~\ref{Lem_bound_lin}}
\label{Sec_append}
To prove the lemma, we follow the proof of~\cite[Theorem~2]{Ghasemi17}, where a lower bound on the optimal rate that is
tighter than the cut-set bound was obtained. We also use that  $R^{\text{MAN,lin}}_{N,K}(M)$  is monotonically non-increasing for all $M \geq 0$ which can be shown as follows. Let $g_1(M) = N-M$, $g_2(M) = K(1-M/N)\frac{1}{1+\frac{KM}{N}}$ and  $t_0 = \frac{KM}{N} \in\{0,1, \ldots, K\}$. It is easy to see that, for $t'_0 ,t''_0 \in \{0,1, \ldots, K \}$,
\begin{align}
g_1\left(\frac{t'_0N}{K}\right) \leq g_1\left(\frac{t''_0N}{K}\right), \quad \mbox{if } t'_0 >t''_0,\; . \label{Eq_mont_g1}
\end{align}
We also have 
\begin{align}
g_2\left(\frac{t'_0N}{K}\right) \leq g_2\left(\frac{t''_0N}{K}\right) , \quad \mbox{if } t'_0 >t''_0 \label{Eq_mont_g2}
\end{align}
since 
\begin{align*}
g_2\left(\frac{t_0N}{K}\right)  - g_2\left(\frac{(t_0+1)N}{K}\right)  & = \frac{K-t_0 }{1+t_0} - \frac{K-(t_0+1)}{2+t_0}\\
& = \frac{K+1}{(1+t_0)(2+t_0)}\\
& \geq 0.
\end{align*}
From \eqref{Eq_mont_g1} and \eqref{Eq_mont_g2}, it follows that
\begin{align}
\min\left(g_1\left(\frac{t'_0N}{K}\right), g_2\left(\frac{t'_0N}{K}\right)\right) & \leq  \min\left(g_1\left(\frac{t''_0N}{K}\right), g_2\left(\frac{t''_0N}{K}\right)\right), \quad \mbox{for } t'_0 > t''_0. \label{Eq_mont_g}
\end{align}
Since $R^{\text{MAN,lin}}_{N,K}(M)$ is the linear interpolation of  $\min\left(g_1\left(\frac{t_0N}{K}\right), g_2\left(\frac{t_0N}{K}\right)\right)$ for $ t_0 \in\{0,1,\ldots,K\}$, \eqref{Eq_mont_g} implies that $R^{\text{MAN,lin}}_{N,K}(M)$ is monotonically non-increasing in $M$.

Now we consider the two memory regions studied in the proof of~\cite[Theorem~2]{Ghasemi17}. For $N \leq K$, the two regions are as follows: 

\underline{Region I: $0 \leq M \leq 1 $: }
Since $R^{\text{MAN,lin}}_{N,K}(0) = N$, and also that $R^{\text{MAN, lin}}_{N,K}(M) $ is monotonically non-increasing in $M$, we get
\begin{align*}
R^{\text{MAN, lin}}_{N,K}(M) \leq N.
\end{align*}
For this regime, it was shown~\cite[Theorem~2]{Ghasemi17} that 
$\Rm \geq N/4$. Then, we have 
\begin{align*}
\frac{R^{\text{MAN, lin}}_{N,K}(M) }{\Rm } \leq 4, \quad \mbox{ for } 0 \leq M  \leq 1.
\end{align*}

\underline{Region II: $1 \leq M \leq N/2 $: }
Let  us define $f_1(M): = \frac{N}{M} - \frac{1}{2}$. 
For $t_0 \geq 1$ and $\frac{Nt_0}{K} \leq M \leq \frac{N(t_0+1)}{K}$, 
it was shown~\cite[Theorem~2]{Ghasemi17}  that 
\begin{align}
R^{\text{MAN, lin}}_{N,K}\left(\frac{Nt_0}{K}\right) = \frac{K-t_0}{t_0+1} \leq f_1(M) \label{Eq_2regio1}
\end{align}
and also that
\begin{align}
\frac{f_1(M)}{\Rm} \leq 4. \label{Eq_2regio3}
\end{align}
Since
$R^{\text{MAN, lin}}_{N,K}(M)$ is non-increasing, we get
\begin{align}
R^{\text{MAN, lin}}_{N,K}(M) \leq  R^{\text{MAN, lin}}_{N,K}\left(\frac{Nt_0}{K}\right). \label{Eq_2regio2}
\end{align}
It thus follows from~\eqref{Eq_2regio1}, \eqref{Eq_2regio3} and \eqref{Eq_2regio2} that 
\begin{align}
\frac{R^{\text{MAN,lin}}_{N,K}(M)}{\Rm } \leq 4.
\end{align}
This completes the proof of Lemma~\ref{Lem_bound_lin}.

\section{Proof of Lemma~\ref{Lem_eqiuv_distrbn2}}
\label{Sec_lemma_proof}

We prove~\eqref{Eq_priv_cond3} for $k =1$. Other cases follow similarly.
Any $(N,K,M,R)$-private scheme satisfies that $I(D_0;Z_1,D_1,X) =0$. 
Since $H(W_{D_1}|X,Z_1,D_1) =0$,  we have that $I(D_0;Z_1,D_1,X, W_{D_1}) =0$. Then it follows that
\begin{align*}
& \Pr(D_0=i | X=x, Z_1 = z', W_j=w_j,D_1=j) =\Pr(D_0=i'| X=x, Z_1 = z', W_j=w_j,D_1=j).
\end{align*}
Multiplying both sides by $\Pr(X=x, Z_1 = z', W_j=w_j | D_1=j)$ gives
\begin{align*}
&\Pr(D_0=i, X=x, Z_1 = z',W_j=w_j | D_1=j) =\Pr(D_0=i', X=x, Z_1 = z', W_j=w_j| D_1=j).
\end{align*}
Then it follows that 
\begin{align*}
& \Pr(D_0=i| D_1=j) \times \Pr(X=x, Z_1 = z', W_j=w_j| D_0=i,D_1=j) \\
&= \Pr(D_0=i'| D_1=j)\times \Pr(X=x, Z_1 = z',W_j=w_j|  D_0=i', D_1=j).
\end{align*}
Since the demands are equally likely and they are independent of each other, we get
\begin{align}
& \Pr(X=x, Z_1 = z', W_j=w_j| D_0=i,D_1=j) \notag \\
&= \Pr(X=x, Z_1 = z',W_j=w_j,|  D_0=i', D_1=j). \label{eq:lemp1}
\end{align}
Further, we also have
\begin{align}
& \Pr(X=x, Z_1 = z', W_j=w_j| D_1=j) \notag \\ 
&= \sum_{t=0}^{N-1}\Pr(D_0=t) \times \Pr(X=x, Z_1 = z',W_j=w_j,|  D_0=t, D_1=j). \label{eq:lemp2}
\end{align}
Eq.~\eqref{eq:lemp1} and \eqref{eq:lemp2} together prove~\eqref{Eq_priv_cond3} for $k =1$.
 \bibliographystyle{IEEEtran}
 \bibliography{Bibliography.bib}

\begin{thebibliography}{10}
\providecommand{\url}[1]{#1}
\csname url@samestyle\endcsname
\providecommand{\newblock}{\relax}
\providecommand{\bibinfo}[2]{#2}
\providecommand{\BIBentrySTDinterwordspacing}{\spaceskip=0pt\relax}
\providecommand{\BIBentryALTinterwordstretchfactor}{4}
\providecommand{\BIBentryALTinterwordspacing}{\spaceskip=\fontdimen2\font plus
\BIBentryALTinterwordstretchfactor\fontdimen3\font minus
  \fontdimen4\font\relax}
\providecommand{\BIBforeignlanguage}[2]{{%
\expandafter\ifx\csname l@#1\endcsname\relax
\typeout{** WARNING: IEEEtran.bst: No hyphenation pattern has been}%
\typeout{** loaded for the language `#1'. Using the pattern for}%
\typeout{** the default language instead.}%
\else
\language=\csname l@#1\endcsname
\fi
#2}}
\providecommand{\BIBdecl}{\relax}
\BIBdecl

\bibitem{Maddah14}
M.~A. Maddah-Ali and U.~Niesen, ``Fundamental limits of caching,'' \emph{IEEE
  Transactions on Information Theory}, vol.~60, no.~5, pp. 2856--2867, May
  2014.

\bibitem{maddah2014decentralized}
------, ``Decentralized coded caching attains order-optimal memory-rate
  tradeoff,'' \emph{IEEE/ACM Transactions On Networking}, vol.~23, no.~4, pp.
  1029--1040, 2014.

\bibitem{Amiri17}
M.~{Mohammadi Amiri} and D.~{Gunduz}, ``Fundamental limits of coded caching:
  Improved delivery rate-cache capacity tradeoff,'' \emph{IEEE Transactions on
  Communications}, vol.~65, no.~2, pp. 806--815, Feb 2017.

\bibitem{Vilardebo18}
J.~{G\'omez-Vilardeb\'o}, ``Fundamental limits of caching: Improved rate-memory
  tradeoff with coded prefetching,'' \emph{IEEE Transactions on
  Communications}, vol.~66, no.~10, pp. 4488--4497, Oct 2018.

\bibitem{Yu18}
Q.~Yu, M.~A. Maddah-Ali, and A.~S. Avestimehr, ``The exact rate-memory tradeoff
  for caching with uncoded prefetching,'' \emph{IEEE Transactions on
  Information Theory}, vol.~64, no.~2, pp. 1281--1296, Feb. 2018.

\bibitem{Ghasemi17}
H.~Ghasemi and A.~Ramamoorthy, ``Improved lower bounds for coded caching,''
  \emph{IEEE Transactions on Information Theory}, vol.~63, no.~7, pp.
  4388--4413, July 2017.

\bibitem{Wang18}
C.~{Wang}, S.~{Saeedi Bidokhti}, and M.~{Wigger}, ``Improved converses and gap
  results for coded caching,'' \emph{IEEE Transactions on Information Theory},
  vol.~64, no.~11, pp. 7051--7062, Nov 2018.

\bibitem{Yan17}
Q.~{Yan}, M.~{Cheng}, X.~{Tang}, and Q.~{Chen}, ``On the placement delivery
  array design for centralized coded caching scheme,'' \emph{IEEE Transactions
  on Information Theory}, vol.~63, no.~9, pp. 5821--5833, 2017.

\bibitem{Tang18}
L.~{Tang} and A.~{Ramamoorthy}, ``Coded caching schemes with reduced
  subpacketization from linear block codes,'' \emph{IEEE Transactions on
  Information Theory}, vol.~64, no.~4, pp. 3099--3120, 2018.

\bibitem{Suthan19}
H.~H. {Suthan Chittoor}, M.~{Bhavana}, and P.~{Krishnan}, ``Coded caching via
  projective geometry: A new low subpacketization scheme,'' in \emph{2019 IEEE
  International Symposium on Information Theory (ISIT)}, 2019, pp. 682--686.

\bibitem{Niesen17}
U.~{Niesen} and M.~A. {Maddah-Ali}, ``Coded caching with nonuniform demands,''
  \emph{IEEE Transactions on Information Theory}, vol.~63, no.~2, pp.
  1146--1158, 2017.

\bibitem{JiTLC17}
M.~{Ji}, A.~M. {Tulino}, J.~{Llorca}, and G.~{Caire}, ``Order-optimal rate of
  caching and coded multicasting with random demands,'' \emph{IEEE Transactions
  on Information Theory}, vol.~63, no.~6, pp. 3923--3949, 2017.

\bibitem{Zhang18}
J.~{Zhang}, X.~{Lin}, and X.~{Wang}, ``Coded caching under arbitrary popularity
  distributions,'' \emph{IEEE Transactions on Information Theory}, vol.~64,
  no.~1, pp. 349--366, 2018.

\bibitem{Ghasemi20}
H.~{Ghasemi} and A.~{Ramamoorthy}, ``Asynchronous coded caching with uncoded
  prefetching,'' \emph{IEEE/ACM Transactions on Networking}, vol.~28, no.~5,
  pp. 2146--2159, 2020.

\bibitem{Yang19}
Q.~{Yang}, M.~{Mohammadi Amiri}, and D.~{Gunduz},
  ``Audience-retention-rate-aware caching and coded video delivery with
  asynchronous demands,'' \emph{IEEE Transactions on Communications}, vol.~67,
  no.~10, pp. 7088--7102, 2019.

\bibitem{Shanmugam13}
K.~{Shanmugam}, N.~{Golrezaei}, A.~G. {Dimakis}, A.~F. {Molisch}, and
  G.~{Caire}, ``Femtocaching: Wireless content delivery through distributed
  caching helpers,'' \emph{IEEE Transactions on Information Theory}, vol.~59,
  no.~12, pp. 8402--8413, 2013.

\bibitem{Karamchandani16}
N.~{Karamchandani}, U.~{Niesen}, M.~A. {Maddah-Ali}, and S.~N. {Diggavi},
  ``Hierarchical coded caching,'' \emph{IEEE Transactions on Information
  Theory}, vol.~62, no.~6, pp. 3212--3229, 2016.

\bibitem{JiCM16}
M.~{Ji}, G.~{Caire}, and A.~F. {Molisch}, ``Fundamental limits of caching in
  wireless d2d networks,'' \emph{IEEE Transactions on Information Theory},
  vol.~62, no.~2, pp. 849--869, 2016.

\bibitem{maddah2016coding}
M.~A. Maddah-Ali and U.~Niesen, ``Coding for caching: Fundamental limits and
  practical challenges,'' \emph{IEEE Communications Magazine}, vol.~54, no.~8,
  pp. 23--29, 2016.

\bibitem{YossefBJK11}
Z.~Bar-Yossef, Y.~Birk, T.~S. Jayram, and T.~Kol, ``Index coding with side
  information,'' \emph{IEEE Transactions on Information Theory}, vol.~57,
  no.~3, pp. 1479--1494, March 2011.

\bibitem{NarayananRMDKP18}
V.~{Narayanan}, J.~{Ravi}, V.~K. {Mishra}, B.~K. {Dey}, N.~{Karamchandani}, and
  V.~M. {Prabhakaran}, ``Private index coding,'' in \emph{2018 IEEE
  International Symposium on Information Theory (ISIT)}, June 2018, pp.
  596--600.

\bibitem{DauSC12}
S.~H. Dau, V.~Skachek, and Y.~M. Chee, ``On the security of index coding with
  side information,'' \emph{IEEE Transactions on Information Theory}, vol.~58,
  no.~6, pp. 3975--3988, June 2012.

\bibitem{Karmoose20}
M.~{Karmoose}, L.~{Song}, M.~{Cardone}, and C.~{Fragouli}, ``Privacy in index
  coding: $k$ -limited-access schemes,'' \emph{IEEE Transactions on Information
  Theory}, vol.~66, no.~5, pp. 2625--2641, 2020.

\bibitem{SunJ17}
H.~{Sun} and S.~A. {Jafar}, ``The capacity of private information retrieval,''
  \emph{IEEE Transactions on Information Theory}, vol.~63, no.~7, pp.
  4075--4088, 2017.

\bibitem{Sengupta15}
A.~Sengupta, R.~Tandon, and T.~C. Clancy, ``Fundamental limits of caching with
  secure delivery,'' \emph{IEEE Transactions on Information Forensics and
  Security}, vol.~10, no.~2, pp. 355--370, Feb. 2015.

\bibitem{Ravindrakumar18}
V.~Ravindrakumar, P.~Panda, N.~Karamchandani, and V.~M. Prabhakaran, ``Private
  coded caching,'' \emph{IEEE Transactions on Information Forensics and
  Security}, vol.~13, no.~3, pp. 685--694, Mar. 2018.

\bibitem{Wan19}
K.~{Wan} and G.~{Caire}, ``On coded caching with private demands,'' \emph{IEEE
  Transactions on Information Theory}, vol.~67, no.~1, pp. 358--372, Jan. 2021.

\bibitem{Kamath19}
S.~Kamath, ``Demand private coded caching,'' {\tt arXiv:1909.03324 [cs.IT]},
  Sep. 2019.

\bibitem{AravindNCC20}
V.~R. {Aravind}, P.~K. {Sarvepalli}, and A.~{Thangaraj}, ``Subpacketization in
  coded caching with demand privacy,'' in \emph{2020 National Conference on
  Communications (NCC)}, Kharagpur, India, Feb. 2020.

\bibitem{Yan20}
Q.~Yan and D.~Tuninetti, ``Fundamental limits of caching for demand privacy
  against colluding users,'' {\tt arXiv:2008.03642 [cs.IT]}, Aug. 2020.

\bibitem{WanD2D19}
K.~Wan, H.~Sun, M.~Ji, D.~Tuninetti, and G.~Caire, ``Fundamental limits of
  device-to-device private caching with trusted server,'' {\tt arXiv:
  1912.09985 [cs:IT]}, Jan. 2020.

\bibitem{Aravind20}
V.~R. Aravind, P.~K. Sarvepalli, and A.~Thangaraj, ``Coded caching with demand
  privacy: Constructions for lower subpacketization and generalizations,'' {\tt
  arXiv: 2007.07475 [cs.IT]}, Jul. 2020.

\bibitem{KamathRD20}
S.~{Kamath}, J.~{Ravi}, and B.~K. {Dey}, ``Demand-private coded caching and the
  exact trade-off for {N=K}=2,'' in \emph{2020 National Conference on
  Communications (NCC)}, Kharagpur, India, Feb. 2020.

\bibitem{Tian2018}
C.~Tian, ``Symmetry, outer bounds, and code constructions: A computer-aided
  investigation on the fundamental limits of caching,'' \emph{Entropy},
  vol.~20, no.~8, pp. 603.1--603.43, Aug. 2018.

\bibitem{ShaoVZT19}
S.~{Shao}, J.~{G\'omez-Vilardeb\'o}, K.~{Zhang}, and C.~{Tian}, ``On the
  fundamental limit of coded caching systems with a single demand type,'' in
  \emph{2019 IEEE Information Theory Workshop (ITW)}, 2019, pp. 1--5.

\end{thebibliography}

\end{document}